\newcolumntype{C}[1]{>{\centering\arraybackslash}m{#1}}
\newcolumntype{M}[1]{>{\arraybackslash}m{#1}}
\newcommand{\BIT}{\begin{itemize}}
\newcommand{\EIT}{\end{itemize}}
\newcommand{\BNUM}{\begin{enumerate}}
\newcommand{\ENUM}{\end{enumerate}}
\newcommand\mbb[1]{\mathbb{#1}}
\def\reals{\mathbb{R}} % Real number symbol
\def\bigo#1{\mathcal{O}\left(#1\right)} % Big O of whatever (technically a set)
\def\indic#1{\mbb{I}\left\{{#1}\right\}} % Indicator function
\providecommand{\argmax}{\mathop\mathrm{arg max}} % Defining math symbols
\providecommand{\argmin}{\mathop\mathrm{arg min}}
\renewcommand{\emptyset}{\varnothing}
\newcommand{\revision}[1]{\textcolor{black}{#1}}
\newcommand{\jobAi}{X_{a_i}}
\newcommand{\jobAj}{X_{a_j}}
\newcommand{\jobBi}{X_{b_i}}
\newenvironment{denselist}{
    \begin{list}{\tiny{$\bullet$}}%
    {\setlength{\itemsep}{0ex} \setlength{\topsep}{0ex}
    \setlength{\parsep}{0pt} \setlength{\itemindent}{0pt}
    \setlength{\leftmargin}{1.5em}
    \setlength{\partopsep}{0pt}}}%
    {\end{list}}
\newcommand{\topic}[1]{\vspace{-5pt}\smallskip \smallskip \noindent{\bf #1.}}
\newcommand{\stitle}[1]{\vspace{0.25em}\noindent\textbf{#1}}
\newcommand{\emtitle}[1]{\vspace{0.25em}\noindent{\em #1}}
\newcommand{\frameme}[1]{
\vspace{1pt}
\noindent\fbox{
  \parbox{0.95\linewidth}{
    \noindent #1
    }
  }
\vspace{2pt}
}
\newcommand{\hidden}[1]{}
\newcommand{\eat}[1]{}
\newcommand{\papertext}[1]{\ignorespaces}
\newcommand{\techreport}[1]{#1}
\newcommand{\thickhline}{%
    \noalign {\ifnum 0=`}\fi \hrule height 1pt
    \futurelet \reserved@a \@xhline
}
\newcolumntype{"}{@{\hskip\tabcolsep\vrule width 1pt\hskip\tabcolsep}}
\newcommand{\code}[1]{\texttt{#1}}
\newcommand{\gestalt}{{\sc Helix}\xspace}
\newcommand{\helix}{{\sc Helix}\xspace}
\newcommand{\gs}{\gestalt}
\newcommand{\name}{\helix}
\newcommand{\lang}{HML\xspace}
\newcommand{\wf}{\texttt{Workflow}\xspace}
\newcommand{\opt}{\name \textsc{Opt}\xspace}
\newcommand{\am}{\name \textsc{AM}\xspace}
\newcommand{\nm}{\name \textsc{NM}\xspace}
\newcommand{\dcsu}{DC$_{SU}$\xspace}
\newcommand{\dce}{DC$_{E}$\xspace}
\newtheorem{example}{Example}
\crefname{example}{Example}{Examples}
\def\thm@space@setup{\thm@preskip=2pt
\thm@postskip=2pt}
\newtheoremstyle{newstyle}      
{} %Aboveskip 
{} %Below skip
{\itshape} %Body font e.g.\mdseries,\bfseries,\scshape,\itshape
{} %Indent
{\bfseries} %Head font e.g.\bfseries,\scshape,\itshape
{.} %Punctuation afer theorem header
{ } %Space after theorem header
{} %Heading
\theoremstyle{newstyle}
\newtheorem{theorem}{Theorem}
\crefname{theorem}{Theorem}{Theorems}
\newtheorem{lemma}{Lemma}
\crefname{lemma}{Lemma}{Lemmas}
\crefname{corollary}{Corollary}{Corollaries}
\newtheorem{constraint}{Constraint}
\crefname{constraint}{Constraint}{Constraints}
\crefname{invariant}{Invariant}{Invariants}
\newtheorem{definition}{Definition}
\crefname{definition}{Definition}{Definitions}
\newtheorem{problem}{Problem}
\crefname{problem}{Problem}{Problems}
\newcommand{\dataprep}{data preprocessing\xspace}
\newcommand{\latency}{run time\xspace}
\newcommand{\mf}{{\sc Max-Flow}\xspace}
\begin{document}

\title{{\huge \name}: Holistic Optimization for \\ Accelerating Iterative Machine Learning}

\author{Doris Xin, Stephen Macke, Litian Ma, Jialin Liu, Shuchen Song, Aditya Parameswaran \\
\affaddr{University of Illinois (UIUC)} \\
\affaddr{\{dorx0,smacke,litianm2,jialin2,ssong18,adityagp\}}@illinois.edu\;}

\maketitle

\begin{abstract}
%!TEX root=p390-xin.tex

Machine learning workflow development is a process of
trial-and-error: developers   {\em iterate} on 
workflows by testing out small modifications until the
desired accuracy is achieved. 
Unfortunately, existing machine learning 
systems 
focus narrowly on model training---a small fraction of the overall development time---and
neglect to address iterative development.
We propose \name, a machine learning system
that  {\em optimizes 
the execution across iterations}---intell\-igently
caching and reusing, or recomputing intermediates as appropriate.
\name captures a wide variety of application needs within its Scala DSL,
with succinct syntax
defining unified processes for \dataprep, model specification, and learning.
We demonstrate that the reuse problem 
can be cast as a {\sc Max-Flow} problem, 
while the caching problem is {\sc NP-Hard}. 
We develop effective light\-weight heuristics for the latter. 
Empirical evaluation shows
that \name is not 
only able to handle a wide variety of use
cases in one unified workflow 
but also much faster, 
providing \latency reductions 
of up to $\mathbf{19\times}$
over state-of-the-art systems, 
such as DeepDive or KeystoneML, 
on four real-world applications 
in natural language processing, computer vision, social and natural sciences.
\end{abstract}

\section{Introduction}
\label{sec:intro}
%!TEX root=p390-xin.tex

From emergent applications like 
precision medicine, voice-cont\-rolled devices,
and driverless cars,
to well-established ones like product recommendations
and credit card fraud detection,
machine learning continues to be the key
driver of innovations that are transforming our everyday lives.
At the same time, 
developing machine learning applications
is time-consuming and cumbersome.
To this end, a number of efforts
attempt to make machine learning more
declarative and to speed up the model training process~\cite{boehm2016declarative}.

However, the majority of the development time is in fact
spent {\em iterating on the machine learning workflow} 
by incrementally modifying steps within, 
including (i) {\em preprocessing}: altering data cleaning or extraction, or engineering features;
(ii) {\em model training}: tweaking hyperparameters,
or changing the objective or learning algorithm;
and (iii) {\em postprocessing}: evaluating with new data,  
or generating additional statistics or visualizations.
These iterations are necessitated by the difficulties in predicting
the performance of a workflow {\em a priori}, due
to both the variability of data and the complexity and unpredictability
of machine learning.
{\bf \em Thus, developers must resort to iterative modifications
of the workflow via ``trial-and-error'' to improve performance.}
A recent survey reports that less than 15\% of development time
is actually spent on model training~\cite{munson2012study},
with the bulk of the time spent iterating on  
the machine learning workflow.

\begin{example}[Gene Function Prediction]
Consider the following example from our bioinformatics
collaborators who form part of a genomics center
at the University of Illinois~\cite{ren2017life}.
Their goal is to discover novel relationships between genes and diseases
 by mining scientific literature.
To do so, they process
published papers to extract entity---gene and disease---mentions,
compute embeddings using an approach like word2vec~\mbox{\cite{mikolov2013distributed}},
and finally cluster the embeddings to find related entities.
They repeatedly iterate on this workflow 
to improve the quality of the relationships discovered as assessed by collaborating clinicians.
For example, they may (i) expand or shrink the literature corpus,
(ii) add in external sources such as gene databases to refine how entities are identified, and
(iii) try different NLP libraries for tokenization and entity recognition. 
They may
also (iv) change the algorithm used for computing word embedding vectors, 
e.g., from word2vec to LINE~\cite{tang2015line}, or
(v) tweak the number of clusters to control the granularity of the clustering.
Every single change that they make necessitates waiting for
the entire workflow to rerun from scratch---often
multiple hours on a large server for each single change,
even though the change may be quite small.
\label{ex:gene}
\end{example}

\noindent As this example illustrates, 
the key bottleneck in applying
machine learning is {\bf \em iteration}---{\em every
change to the workflow
requires hours of recomputation from scratch,
even though the change may only impact a small 
portion of the workflow}. 
For instance, 
normalizing a feature, or changing
the regularization would not impact
the portions of the workflow that do not depend on it---and 
yet the current approach is to simply rerun from scratch.

One approach to address the expensive recomputation issue
is for developers to
explicitly materialize all intermediates
that do not change across iterations, but this requires
writing code to handle materialization 
and to reuse materialized results by identifying changes between iterations.
Even if this were a viable option, 
materialization of all intermediates
is extremely wasteful, and figuring out the optimal reuse of materialized
results is not straightforward.
Due to the cumbersome and inefficient nature of this approach,
developers often opt to rerun the entire workflow from scratch.

Unfortunately, existing machine learning systems
do not optimize for rapid iteration. 
For example, KeystoneML~\cite{sparks2016end}, 
which allows developers to specify workflows
at a high-level abstraction, 
only optimizes the one-shot execution of workflows
by applying techniques such as
common subexpression elimination
and intermediate result caching.
On the other extreme, DeepDive~\cite{zhang2015deepdive},
targeted at knowledge-base construction, 
materializes the results of all of the feature extraction
and engineering steps, while also applying 
approximate inference to speed up model training.
Although this na{\"i}ve
materialization approach 
does lead to reuse in iterative executions,
it is wasteful and time-consuming.

We present \name, a {\em declarative, general-purpose
machine learning system that optimizes across iterations}.
\name is able to match or exceed the performance of
KeystoneML and DeepDive on one-shot execution,
while providing {\bf \em gains of up to 19$\times$}
on iterative execution across four real-world applications. 
By optimizing across iterations, \name allows data scientists
to avoid wasting time running the workflow from scratch
every time they make a change and instead
run their workflows   
in time proportional to the complexity
of the change made. 
\name is able to thereby substantially 
increase developer productivity 
\revision{while simultaneously lowering resource consumption.}

Developing \name involves two types of challenges---challenges in 
{\em iterative execution optimization} and challenges in {\em specification and generalization}.

\stitle{Challenges in Iterative Execution Optimization.} 
A machine learning workflow can be represented as a directed acyclic graph,
where each node corresponds to a collection
of data---the original data items, such as documents or images,
the transformed data items, such as sentences or words,
the extracted features, or the final outcomes. 
This graph, for practical workflows, can be quite large and complex.
One simple approach 
to enable iterative execution optimization (adopted by DeepDive) is to
materialize every single node, such that the next
time the workflow is run, we can simply check if the result can be reused
from the previous iteration, and if so, reuse it.
Unfortunately, this approach is not only wasteful in storage
but also potentially very time-consuming due to materialization overhead.
Moreover, in a subsequent iteration, it may be cheaper to 
recompute an intermediate result, as opposed to reading it from disk.

A better approach is to determine whether
a node is worth materializing
by considering both the time taken for computing a node 
and the time taken for computing its ancestors.
Then, during subsequent iterations, we can determine whether
to read the result for a node from persistent storage 
(if materialized), which could lead to large portions of the graph being pruned,
or to compute it from scratch.
In this paper, we prove that 
the reuse plan problem is in {\sc PTIME} via
a non-trivial {\bf \em reduction to {\sc Max-Flow} using
the {\sc Project Selection Problem}~\cite{kleinberg2006algorithm},
while the materialization problem is, in fact, {\sc NP-Hard}}.

\stitle{Challenges in Specification and Generalization.}
To enable iterative execution optimization, we need to support 
the specification of the end-to-end machine learning workflow in a high-level
language.
\revision{This is challenging because \dataprep can vary greatly across applications, 
often requiring ad hoc code involving
complex composition of declarative statements 
and UDFs~\cite{armbrust2015sparksql},}
making it hard to automatically
analyze the workflow 
to apply holistic iterative execution optimization.

We adopt a hybrid approach within \name:
developers specify their workflow in an {\bf \em intuitive, high-level domain-specific language (DSL)
in Scala} (similar to existing systems like KeystoneML),
using {\bf \em imperative code as needed for UDFs},
say for feature engineering. 
This interoperability allows developers
to seamlessly integrate existing JVM machine learning libraries~\cite{team2016deeplearning4j, rasband2012imagej}.
Moreover, \name is built on top of Spark,
allowing data scientists to 
leverage Spark's parallel processing capabilities. 
We have developed a GUI on top of the \name DSL to further facilitate development~\mbox{\cite{demo}}.

\name's DSL 
not only enables automatic identification of data dependencies and data flow, 
but also encapsulates
all typical machine learning workflow designs.
\techreport{Unlike DeepDive~\cite{zhang2015deepdive}, 
\name is not restricted to regression or factor graphs, 
allowing data scientists to use the most suitable model for their tasks.}
\revision{All of the functions in Scikit-learn's (a popular ML toolkit)
can be mapped to functions in the DSL~\cite{dorx2017},}
allowing \name to easily capture applications ranging from natural language processing,
to knowledge extraction, to computer vision. 
Moreover, by studying the variation in the dataflow graph across iterations, \name
is able to identify reuse opportunities across iterations.
Our work is a first step in a broader agenda to improve human-in-the-loop
ML~\cite{deem}.

\stitle{Contributions and Outline.} 
The rest of the paper is organized as follows: 
\Cref{sec:overview} presents 
\techreport{a quick recap of ML workflows, 
statistics on how users iteration on ML workflows collected from applied ML literature,}
an architectural overview of the system,
and a concrete workflow to illustrate concepts discussed in the subsequent sections;
\Cref{sec:interface} describes the programming interface\techreport{ for effortless end-to-end workflow specification};
\Cref{sec:representation} discusses \name system internals, 
including the workflow DAG generation
and change tracking between iterations;
\Cref{sec:optimization} formally presents the two major optimization problems 
in accelerating iterative ML and \name's solution to both problems.
We evaluate our framework on four workflows from different applications domains 
and against two state-of-the-art systems in
\Cref{sec:experiments}. 
We discuss related work in \Cref{sec:related}.

\section{Background and Overview}
\label{sec:overview}
%!TEX root=p390-xin.tex

In this section, we provide a brief overview of machine learning workflows,
describe the \name system architecture
and present a sample workflow in \name
that will serve as a running example.

\revision{A machine learning (ML)
workflow accomplishes a specific ML task,
 ranging from simple ones like classification or clustering,
 to complex ones like entity resolution or image captioning.
Within \name, we decompose ML workflows into three components:
\dataprep~(DPR), where raw data is transformed into ML-compatible representations, 
learning/inference (L/I), where ML models are trained and used to perform inference on new data, 
and postprocessing (PPR), where learned models and inference results are processed to obtain summary metrics, create dashboards, and power applications.
We discuss specific operations in each of these components in Section~\ref{sec:interface}.
As we will demonstrate, these three components are generic and 
sufficient for describing a wide variety of supervised, semi-supervised, and unsupervised settings.}

\subsection{System Architecture}
%!TEX root=p390-xin.tex

\label{sec:overview-arch-example}
\begin{figure}[t]
\centering
\includegraphics[width=0.3\textwidth]{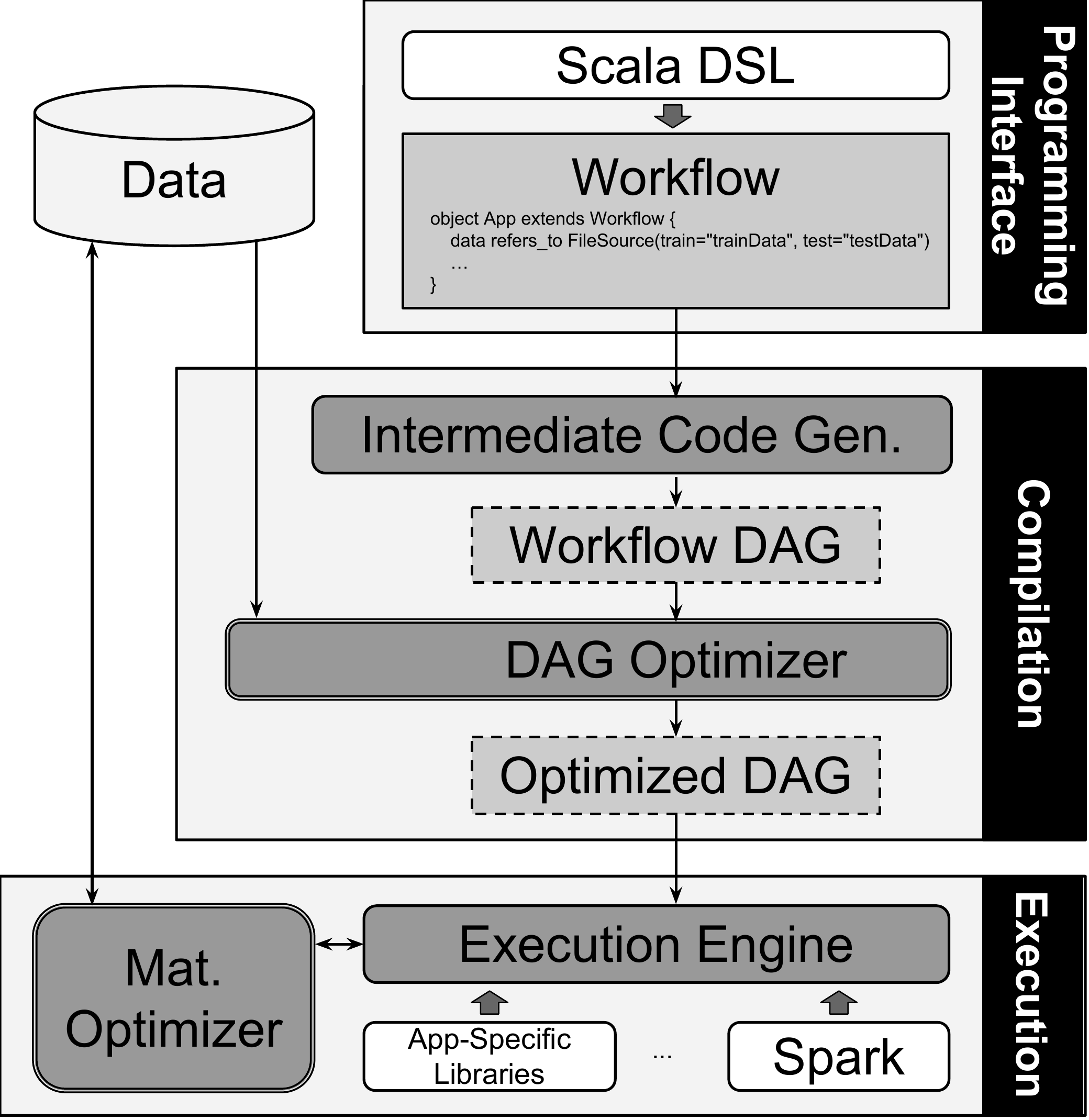}
\caption{\name System architecture. 
\techreport{A program written by the user in the \name DSL, known as a \wf, 
is first compiled into an intermediate DAG representation,
which is optimized to produce a physical plan 
to be run by the execution engine. At runtime, the execution engine 
selectively materializes
intermediate results to disk.}}
\label{fig:overview}
\end{figure}

The \name system consists of 
a domain specific language (DSL) 
in Scala as the programming interface, 
a compiler for the DSL, 
and an execution engine, as shown in Figure~\ref{fig:overview}. 
The three components work collectively to 
\textit{minimize the execution time for both the 
current iteration and subsequent iterations}:

% Programming interface overview
\stitle{1. Programming Interface (Section~\ref{sec:interface}).} 
\name provides a single Scala interface named \wf
for programming the entire workflow; the \name DSL also enables
embedding of imperative code in declarative statements. 
Through just a handful of extensible operator types, 
the DSL supports a wide range of use cases for 
both \dataprep and machine learning. 

\stitle{2. Compilation (Sections~\ref{sec:representation}, ~\ref{sec:optPrelim}--\ref{sec:oep}).}
A \wf is internally represented as a 
directed acyclic graph (DAG) 
of operator outputs.
The DAG is compared to the one in previous
iterations to determine reusability (Section~\ref{sec:representation}).
The {\em DAG Optimizer} uses this information
to produce an optimal {\em physical execution plan}
that {\em minimizes the one-shot runtime of the workflow},
by selectively loading previous results
via a \mf-based algorithm (Section~\ref{sec:optPrelim}--\ref{sec:oep}).

% Execution engine overview
\stitle{3. Execution Engine (Section~\ref{sec:omp}).} 
The execution engine carries 
out the physical plan produced during the compilation phase,
while communicating with the {\em materialization operator}
to materialize intermediate results, 
to {\em minimize runtime of future executions}.
The execution engine uses Spark~\cite{zaharia2012resilient} 
for data processing 
and domain-specific libraries 
\techreport{ such as CoreNLP~\cite{manning2014stanford} and Deeplearning4j~\cite{dd2deeplearning4j} }
for custom needs.
\revision{
\name defers operator pipelining and scheduling for asynchronous execution to Spark.
\techreport{Operators that can run concurrently are invoked in an arbitrary order, 
executed by Spark via Fair Scheduling.
While by default we use Spark in the batch processing mode,
it can be configured to perform stream processing using the same APIs as batch.
We discuss optimizations for streaming in Section~\ref{sec:optimization}.}
}

\subsection{The Workflow Lifecycle}
\label{sec:lifecycle}
\begin{figure}[h]
\centering
\includegraphics[width=.4\textwidth]{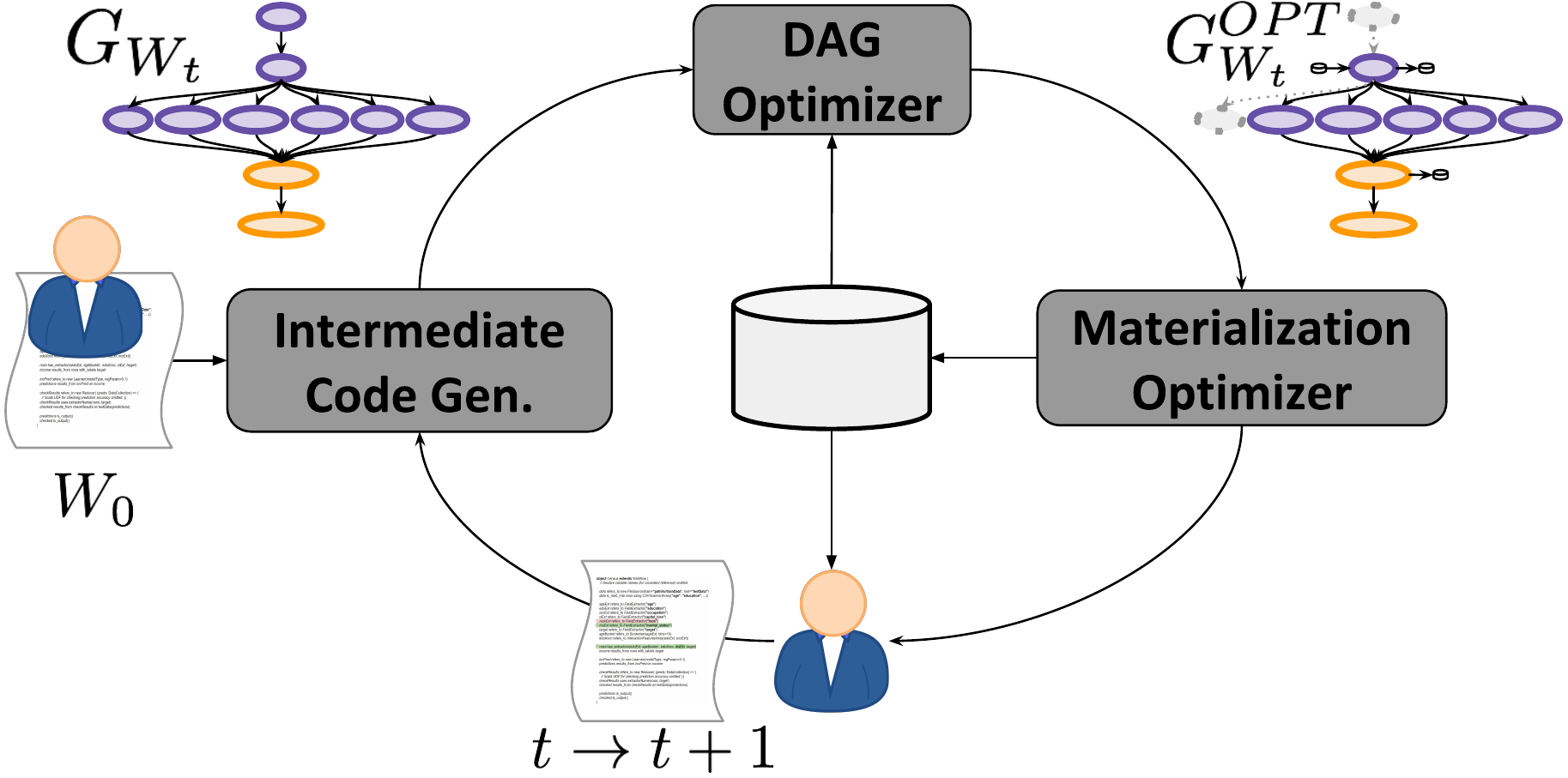}
\vspace{4pt}
\caption{Roles of system components in the \name workflow lifecycle.
}
\label{fig:lifecycle}
\end{figure}

\revision{Given the system components described in the previous section, 
\Cref{fig:lifecycle} illustrates how they fit into the lifecycle of ML workflows.}
Starting with $W_0$, an initial version of the workflow,
the lifecycle includes the following stages:
\begin{denselist}
\item \textbf{DAG Compilation.} 
The \wf $W_t$ is compiled into a DAG $G_{W_t}$ of operator outputs.
\item \textbf{DAG Optimization.}
The DAG optimizer creates a physical plan $G_{W_t}^{OPT}$ 
to be executed
by pruning and ordering the nodes in $G_{W_t}$
and deciding whether any computation can be replaced 
with loading previous results from disk.
\item \textbf{Materialization Optimization.} 
During execution, the materialization optimizer determines
which nodes in $G_{W_t}^{OPT}$ should be persisted to disk
for future use. 
\item \textbf{User Interaction.} 
Upon execution completion, the user may modify the workflow
from $W_t$ to $W_{t+1}$ based on the results. 
The updated workflow $W_{t+1}$ fed back to \name marks 
the beginning of a new iteration,
and the cycle repeats.
\end{denselist}

\techreport{Without loss of generality, 
we assume that a workflow $W_t$ is only executed once in each iteration.
We model a repeated execution of $W_t$ as a new iteration 
where $W_{t+1} = W_t$.
Distinguishing two executions of the same workflow is important
because they may have different run times---the second execution 
can reuse results materialized in the first execution for a potential run time reduction.}

\subsection{Example Workflow}
%!TEX root=p390-xin.tex

\label{sec:exWf}
\begin{figure*}
\centering
\includegraphics[width=0.9\textwidth]{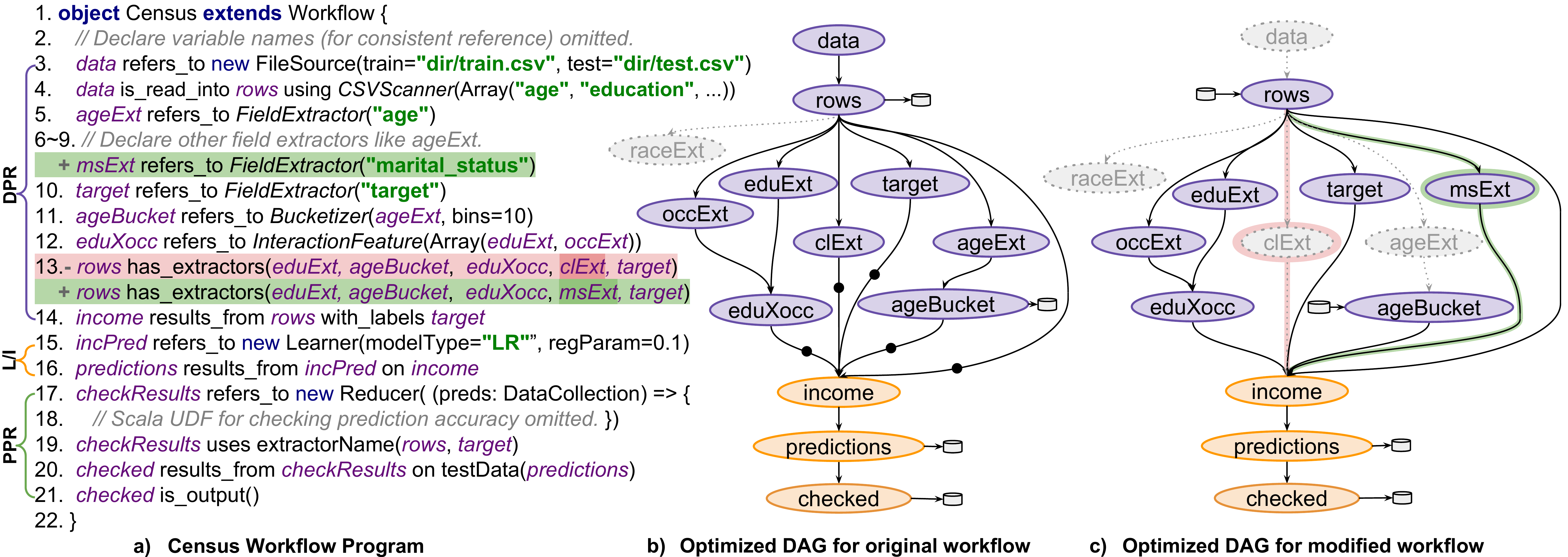}
\vspace{4pt}
\caption{Example workflow for predicting income from census data.}
\label{fig:wfEx}
\end{figure*}

We demonstrate the usage of \name with 
a simple example ML workflow for predicting 
income using census data from Kohavi~\cite{kohavi1996scaling}, 
shown in Figure~\ref{fig:wfEx}a)\techreport{;
this workflow will serve as a running example throughout the paper.
Details about the individual operators will be 
provided in subsequent sections}.
We overlay the original workflow with an iterative update,
with additions annotated with + and deletions annotated with $-$,
while the rest of the lines are retained as is. 
We begin by describing the original workflow
consisting of all the unannotated lines plus
the line annotated with $-$ (deletions).
\papertext{Additional details can be found in our 
technical report~\cite{dorx2017}.}

\papertext{
\stitle{Original Workflow.}
\revision{
A data collection \code{rows} is created
from a data source \code{data} containing both training
and test CSV files. 
The user declares simple features such as \code{age} and \code{education}
corresponding to each column (lines 5--10), 
as well as discretized features like \code{ageBucket} (line 11),
and interaction features for higher-order patterns (line 12).
These features are then
associated with \code{rows} (line 13),
and used to create collection of examples \code{income} 
using \code{target} as labels (line 14).
\techreport{Overall, the user does not need to worry about
the physical representation of features as concrete Scala objects.}
For L/I, the user declares an ML model named \code{incPred} (line 15),
to be evaluated on all of the data in \code{income} (line 16).
Subsequently, the user creates a
new UDF named \code{checkResults} to check the prediction accuracy (line 17-18). 
Finally, the user declares that the output scalar named \code{checked}
is to be computed from the test data in \code{income} (line 20) and is part of the output (line 21).
}
}
\techreport{
\stitle{Original Workflow: DPR Steps.}
First, after some variable name declarations,  
the user defines in line 3-4 a data collection \code{rows}
read from a data source \code{data} consisting of two CSV files,
one for training and one for test data,
and names the columns of the CSV files \code{age, education}, etc.
In lines 5-10, the user declares simple features
that are values from specific named columns. 
Note that the user is not required to specify the feature type,
which is automatically inferred by \name from data.
In line 11 \code{ageBucket} is declared as a derived feature
formed by discretizing age into ten buckets (whose boundaries are computed by \name),
while line 12 declares an interaction feature, 
commonly used to capture higher-order patterns, 
formed out of the concatenation of \code{eduExt} and \code{occExt}.

Once the features are declared, 
the next step, line 13, declares the features to be extracted from 
and associated with each element of \code{rows}.
Users do not need to worry about 
how these features are attached and propagated; 
users are also free to perform manual feature selection here, 
studying the impact of various feature combinations, 
by excluding some of the feature extractors.
Finally, as last step of data preprocessing, 
line 14 declares that an example collection named \code{income} 
is to be made from \code{rows}
using \code{target} as labels.
Importantly, this step converts the features 
from human-readable formats (e.g., color=red)
into an indexed vector representation required for learning.
} 

\techreport{
\stitle{Original Workflow: L/I \& PPR Steps.} 
Line 15 declares an ML model named \code{incPred} 
with type ``Logistic Regression'' and regularization parameter 0.1,
while line 16 specifies that \code{incPred} is to be 
learned on the training data in \code{income}
and applied on all data in \code{income} 
to produce a new example collection called \code{predictions}.
Line 17-18 declare a \textit{Reducer} named \code{checkResults}, 
which outputs a scalar using a UDF for computing prediction accuracy.
Line 19 explicitly specifies \code{checkResults}'s dependency on \code{target}
since the content of the UDF is opaque to the optimizer.
Line 20 declares that the output scalar named \code{checked} 
is only to be computed from the test data in \code{income}.
Lines 21 declares that \code{checked} 
must be part of the final output.
}

\stitle{Original Workflow: Optimized DAG.}
\papertext{
\revision{
The \name compiler translates the program in
Figure~\ref{fig:wfEx}a) into a DAG in Figure~\ref{fig:wfEx}b), which is then transformed 
by the optimizer by \techreport{adding in inferred
edges labeled with dots to link relevant features to the model,
and }eliminating \code{raceExt} since it does not contribute
to the model. 
DPR steps are in purple, while others are in orange.
Some of the intermediate results are materialized to disk (displayed with a drum).}}
\techreport{
The \name compiler first translates verbatim the program in Figure~\ref{fig:wfEx}a) into a DAG,
which contains all nodes including \code{raceExt}
and all edges (including the dashed edge) except the ones marked with dots in Figure~\ref{fig:wfEx}b).
This DAG is then transformed by the optimizer, 
which prunes away \code{raceExt} (grayed out) because it does not contribute to the output,
and adds the edges marked by dots to link relevant features to the model.
DPR involves nodes in purple,
and L/I and PPR involve nodes in orange.
Nodes with a drum to the right are materialized to disk,
either as mandatory output 
or for aiding in future iterations.
}

\stitle{Updated Workflow: Optimized DAG.}
In the updated version of the workflow, 
a new feature named \code{msExt} is added (below line 9), and 
\code{clExt} is removed (line 13);
correspondingly, in the updated DAG, a new node is added for 
\code{msExt} (green edges), while \code{clExt} gets pruned (pink edges). 
In addition, \name chooses to load
materialized results for \code{rows} from the previous iteration
allowing \code{data} to be pruned, avoiding a costly parsing step.
\name also loads \code{ageBucket} instead
of recomputing the bucket boundaries requiring a full scan. 
\name materializes \code{predictions} in both iterations
since it has changed.
\papertext{
In this instance, \name is able to prune 
the computation of \code{ageBucket} 
and that of \code{rows} from \code{data}.
}
\techreport{Although \code{predictions} is not reused in the updated workflow,
its materialization has high expected payoff over iterations
because PPR iterations (changes to \code{checked} in this case) are the most common
as per our survey results shown in Figure~\ref{fig:survey}(c). 
This example illustrates that 
\begin{denselist}
\item Nodes selected for materialization lead to significant speedup in subsequent iterations.
\item \name reuses results safely, deprecating old results when changes are detected (e.g., \code{predictions} is not reused because of the model change).
\item \name correctly prunes away extraneous operations via dataflow analysis.
\end{denselist}}

\section{Programming Interface} 
\label{sec:interface}
%!TEX root =p390-xin.tex

To program ML workflows 
with high-level abstractions, \name users program in a language called \lang, 
an \textit{embedded DSL} in Scala. 
An embedded DSL exists as a library in the host language (Scala in our case), leading to seamless integration. 
LINQ~\cite{linq}, a data query framework 
integrated in .NET languages, is another example of an embedded DSL.
In \name, users can freely incorporate Scala code for user-defined functions (UDFs) directly into \lang.
JVM-based libraries 
can be imported directly into \lang to support application-specific needs.
Development in other languages can be supported 
with wrappers in the same style as PySpark~\cite{rosenpyspark}.

\revision{\subsection{Operations in ML Workflows}
\label{sec:opBasis}
}

\papertext{
Common operations in ML workflows can be decomposed 
into a small set of \textit{basis functions} $\mathcal{F}$. 
Here, we simply introduce the members of $\mathcal{F}$.
In our technical report~\cite{dorx2017}, 
we show through a rigorous and extensive comparison 
that $\mathcal{F}$ covers all of the functionalities offered by Scikit-learn, 
thereby demonstrating coverage of common operations in ML workflows,
as Scikit-learn is one of the most comprehensive ML libraries available.

Functions in $\mathcal{F}$ have natural analogs in both Scikit-learn (\cite{dorx2017}) and \lang\ (Section~\ref{sec:operator}), 
thus serving as a mapping between the two programming interfaces.
They can be grouped by the workflow components DPR, L/I, and PPR as follows:

\topic{\revision{DPR}}
\revision{
The goal of DPR is to transform raw input data into learnable representations.
DPR operations can be decomposed into the following categories:
\begin{denselist}
\item \textit{Parsing}: transforming a record into a set of records, e.g., parsing an article into words via \textit{tokenization}.
\item \textit{Join}: combining multiple records into a single record,
where $r_i$ can come from different data sources. 
\item \textit{Feature Extraction}: extracting features from a record.
\item \textit{Feature Transformation}: deriving a new set of features from the input features.
\item \textit{Feature Concatenation}: concatenating features extracted in separate operations to form an FV.
\end{denselist}
Note that sometimes these functions need to be \textit{learned} from the input data.
For example, discretizing a continuous feature $x_i$ into four even-sized bins 
requires the distribution of $x_i$.
We address this use case along with L/I next.
}

\vspace{2pt}
\topic{\revision{L/I}}
\revision{
L/I encompasses learning both ML models
and feature transformation functions mentioned above.
\dorxadd{Note that while applying feature transformations is part of DPR, 
learning the functions themselves is in L/I.}
Complex ML tasks can be broken down into simple learning steps captured by these two operations,
e.g., image captioning can be broken down into object identification via classification, 
followed by sentence generation using a language model~\cite{karpathy2015deep}.
Thus, L/I can be decomposed into:
\begin{denselist}
\item \textit{Learning}: learning a function $f$ from the input dataset.
\item \textit{Inference}: using the ML model $f$ to infer feature values.
\end{denselist}
}

\topic{\revision{PPR}}
\revision{A wide variety of operations can take place in PPR, 
including model evaluation and data visualization. 
At a high-level, they all share the same function:
\begin{denselist}
\item \textit{Reduce}: applying an operation on the input dataset(s) and non-dataset object(s).
For example, for an input dataset and a specific feature name as the non-dataset object,
we can produce visualizations of the feature values collected over the input dataset.
\end{denselist}}
}

\techreport{
\revision{In this section, we argue that common operations in ML workflows 
can be decomposed into a small set of \textit{basis functions} $\mathcal{F}$.
We first introduce $\mathcal{F}$ and then enumerate its mapping onto operations in Scikit-learn~\cite{pedregosa2011scikit},
one of the most comprehensive ML libraries,
thereby demonstrating coverage.
In Section~\ref{sec:hml}, we introduce \lang, which implements the capabilities offered by $\mathcal{F}$.}

\revision{As mentioned in Section~\ref{sec:overview}, an ML workflow consists of three components:
data preprocessing (DPR), learning/inference (L/I), and postprocessing (PPR). 
They are captured by the \textit{Transformer}, 
\textit{Estimator}, and \textit{Predictor} interfaces in Scikit-learn, respectively.
Similar interfaces can be found in many ML libraries, 
such as MLLib~\cite{meng2016mllib}, TFX~\cite{baylor2017tfx}, and KeystoneML.}

\topic{\revision{Data Representation}}
\revision{Conventionally, the input space to ML, $\mathcal{X}$, 
is a $d$-dimensional vector space, $\mathbb{R}^d, d\geq 1$,
where each dimension corresponds to a feature.
Each datapoint is represented by a feature vector (FV), $\mathbf{x} \in \mathbb{R}^d$.
For notational convenience, we denote a $d$-dimensional FV,  
$\mathbf{x} \in \mathbb{R}^d$,  as $\mathbf{x}^d$.
While inputs in some applications can be easily loaded into FVs, 
e.g., images are 2D matrices that can be flattened into a vector,
many others require more complex transformations, 
e.g., vectorization of text requires tokenization and word indexing.
We denote the input dataset of FVs to an ML algorithm as $\mathcal{D}$.}

\topic{\revision{DPR}}
\revision{
The goal of DPR is to transform raw input data into $\mathcal{D}$.
We use the term \textit{record}, denoted by $r$, 
to refer to a data object in formats incompatible with ML,
such as text and JSON, 
requiring preprocessing.
Let $\mathcal{S} = \{r\}$ be a data source, e.g., a csv file, or a collection of text documents.
DPR includes transforming records
from one or more data sources
from one format to another or into FVs $\reals^{d'}$;
as well as feature transformations (from $\reals^d$ to $\reals^{d'}$).
DPR operations can thus be decomposed into the following categories:
\begin{denselist}
\item \textit{Parsing} $r \mapsto (r_1, r_2, \ldots)$: transforming a record into a set of records, e.g., parsing an article into words via \textit{tokenization}.
\item \textit{Join} $(r_1, r_2, \ldots) \mapsto r$: combining multiple records into a single record,
where $r_i$ can come from different data sources. 
\item \textit{Feature Extraction} $r \mapsto \mathbf{x}^d$: extracting features from a record.
\item \textit{Feature Transformation} $T: \mathbf{x}^d \mapsto \mathbf{x}^{d'}$: deriving a new set of features from the input features.
\item \textit{Feature Concatenation} $(\mathbf{x}^{d_1}, \mathbf{x}^{d_2}, \ldots) \mapsto \mathbf{x}^{\sum_i d_i}$: concatenating features extracted in separate operations to form an FV.
\end{denselist}
Note that sometimes these functions need to be \textit{learned} from the input data.
For example, discretizing a continuous feature $x_i$ into four even-sized bins 
requires the distribution of $x_i$, 
which is usually estimated empirically by collecting all values of $x_i$ in $\mathcal{D}$.
We address this use case along with L/I next.
}

\vspace{2pt}
\topic{\revision{L/I}}
\revision{
At a high-level, L/I is about learning a function $f$ from the input $\mathcal{D}$,
where $f: \mathcal{X} \rightarrow \mathbb{R}^{d'}, d' \geq 1$.
This is more general than learning ML models,
and also includes feature transformation functions mentioned above.
The two main operations in L/I are 
1) \textit{learning}, which produces functions using data from  $\mathcal{D}$,
and 2) \textit{inference}, which uses the function obtained from learning 
to draw conclusions about new data. 
Complex ML tasks can be broken down into simple learning steps captured by these two operations,
e.g., image captioning can be broken down into object identification via classification, 
followed by sentence generation using a language model~\cite{karpathy2015deep}.
Thus, L/I can be decomposed into:
\begin{denselist}
\item \textit{Learning} $\mathcal{D} \mapsto f$: learning a function $f$ from the dataset $\mathcal{D}$.
\item \textit{Inference} $(\mathcal{D}, f) \mapsto \mathcal{Y}$:
using the ML model $f$ to infer feature values, i.e., \textit{labels}, $\mathcal{Y}$ from the input FVs in $\mathcal{D}$.
\end{denselist}
Note that labels 
can be represented as FVs like other features,
hence the usage of a single $\mathcal{D}$ in learning to represent both the training data and labels to unify the abstraction for both supervised and unsupervised learning
and to enable easy model composition. }

\topic{\revision{PPR}}
\revision{Finally, a wide variety of operations can take place in PPR, 
using the learned models and inference results from L/I as input, 
including model evaluation, data visualization, and other application-specific activities. 
The most commonly supported PPR operations in general purpose ML libraries are 
model evaluation and model selection, which can be represented by a computation
whose output does not depend on the size of the data $\mathcal{D}$. 
We refer to a computation with output sizes independent
of input sizes as a {\em reduce:}
\begin{denselist}
\item \textit{Reduce} $(\mathcal{D}, s') \mapsto s$: applying an operation on the input dataset $\mathcal{D}$ and $s'$, 
where $s'$ can be any non-dataset object.
For example, $s'$ can store a set of hyperparameters over which {\em reduce}
optimizes, learning various models and outputting $s$, which can represent
a function corresponding to the model with the best cross-validated hyperparameters.
\end{denselist}}

\revision{\subsubsection{Comparison with Scikit-learn}
\label{sec:scikitComp}
}
\begin{table}[t]
\begin{center}
\revision{\begin{tabular}{|c|m{15.5em}|}\hline
\multicolumn{1}{|c|}{\cellcolor{black!30}\textbf{Scikit-learn\techreport{ DPR, L/I}}} & 
  \multicolumn{1}{|c|}{\cellcolor{black!30}\textbf{Composed Members of $\mathcal{F}$}}\\\hline \hline
\code{fit(X[, y])} & {\em learning} ($\mathcal{D}\mapsto f$) \\ \hline
\code{predict_proba(X)} & {\em inference} ($(\mathcal{D}, f)\mapsto\mathcal{Y}$)\\ \hline
\code{predict(X)} & {\em inference}, optionally followed by {\em transformation} \\ \hline
\code{fit_predict(X[, y])} & {\em learning}, then {\em inference} \\ \hline
\code{transform(X)} & {\em transformation} or {\em inference},
                   depending on whether operation is learned via prior
                   call to \code{fit} \\ \hline
\code{fit_transform(X)} & {\em learning}, then {\em inference} \\ \hline \hline
\techreport{\multicolumn{1}{|c|}{\cellcolor{black!30}\textbf{Scikit-learn PPR}} & 
  \multicolumn{1}{|c|}{\cellcolor{black!30}\textbf{Composed Members of $\mathcal{F}$}}\\\hline \hline}
eval: \code{score(y$_{true}$, y$_{pred}$)} & {\em join} \code{y}$_{true}$ and \code{y}$_{pred}$ into
                                       a single dataset $\mathcal{D}$, then {\em reduce} \\ \hline
eval: \code{score(op, X, y)} & {\em inference}, then {\em join}, then {\em reduce} \\ \hline
selection: \code{fit($p_1, \ldots, p_n$)} & {\em reduce}, implemented in terms of
{\em learning}, {\em inference}, and {\em reduce} (for scoring) \\ \hline
\end{tabular}}
\end{center}
\caption{\revision{Scikit-learn DPR, L/I, and PPR coverage in terms of $\mathcal{F}$.}}
\label{tab:sklearn-dpr-li-ppr}
\end{table}

\revision{A dataset in Scikit-learn is represented as a matrix of FVs, denoted by \code{X}.
This is conceptually equivalent to $\mathcal{D} = \{\mathbf{x}^d\}$ introduced earlier,
as the order of rows in \code{X} is not relevant. 
Operations in Scikit-learn are categorized into
dataset loading and transformations, learning, and model selection and evaluation~\cite{sklearnUG}.
\techreport{Operations like loading and transformations that do not tailor their behavior to
particular characteristics present in the dataset $\mathcal{D}$
map trivially onto the DPR basis functions $\in \mathcal{F}$ introduced at the start of
Section~\ref{sec:opBasis}, so we focus on comparing data-dependent DPR and L/I,
and model selection and evaluation.}}\papertext{\revision{The Scikit-learn
operations for DPR, L/I, and PPR, and
their corresponding equivalents in terms of functions from $\mathcal{F}$,
are summarized in Table~\ref{tab:sklearn-dpr-li-ppr}.
For a full discussion of Scikit-learn primitives and their coverage
in terms of $\mathcal{F}$, please see our technical report~\cite{dorx2017}.}}}

\techreport{\topic{\revision{Scikit-learn Operations for DPR and L/I}}
\revision{Scikit-learn objects for DPR and L/I implement one or more of the following
interfaces~\cite{sklearnApi}:
\begin{denselist}
	\item {\bf Estimator}, used to indicate that an operation has data-dependent behavior
	via a \code{fit(X[, y])} method, where \code{X}
	contains FVs or raw records, and \code{y} contains labels if the operation represents
	a supervised model.
	\item {\bf Predictor}, used to indicate that the operation may be used for inference via a
	\code{predict(X)} method, taking a matrix of FVs and producing predicted labels.
	Additionally, if the operation implementing Predictor is a classifier for which
	inference may produce raw floats (interpreted as probabilities),
	it may optionally implement \code{predict_proba}.
	\item {\bf Transformer}, used to indicate that the operation may be used for feature
	transformations via a \code{transform(X)} method, taking a matrix of FVs and producing
	a new matrix \code{X}$_{new}$.
\end{denselist}
An operation implementing
both Estimator and Predictor has a \code{fit_predict} method, and an operation
implementing both Estimator and Transformer has a \code{fit_transform} method,
for when inference or feature transformation, respectively, is applied immediately
after fitting to the data.
The rationale for providing a separate Estimator interface is likely due to the fact that
it is useful for both feature transformation and inference to have data-dependent behavior
determined via the result of a call to \code{fit}. For example, a useful data-dependent
feature transformation for a Naive Bayes classifier maps word tokens to positions in a
sparse vector and tracks word counts.
The position mapping will depend on the vocabulary represented in the raw training data.
Other examples of data-dependent transformations include feature scaling,
descretization, imputation, dimensionality reduction, and kernel transformations.}

\emtitle{\revision{Coverage in terms of basis functions $\mathcal{F}$.}}
\revision{The first part of Table~\ref{tab:sklearn-dpr-li-ppr} summarizes the mapping from Scikit-learn's
interfaces for DPR and L/I to (compositions of) basis functions from $\mathcal{F}$.
In particular, note that there is nothing special about Scikit-learn's use
of separate interfaces for inference (via Predictor) and data-dependent
transformations (via Transformer); the separation exists mainly to draw
attention to the semantic separation between DPR and L/I.}

\topic{\revision{Scikit-learn Operations for PPR}}
\revision{Scikit-learn interfaces for operations implementing
model selection and evaluation are not as standardized as
those for DPR and L/I. For evaluation, the typical strategy
is to define a simple function that compares model outputs with
labels, computing metrics like accuracy or $F_1$ score.
For model selection, the typical
strategy is to define a class that implements methods \code{fit} and \code{score}.
The \code{fit} method takes a set of hyperparameters over which to search,
with different models scored according to the \code{score} method (with
identical interface as for evaluation in Scikit-learn).
The actual model over which hyperparameter search is
performed is implemented by an Estimator that is passed into the
model selection operation's constructor.}

\emtitle{\revision{Coverage in terms of basis functions $\mathcal{F}$.}}
\revision{As summarized in the second part of Table~\ref{tab:sklearn-dpr-li-ppr}, Scikit-learn's operations
for evaluation may be implemented via compositions
of (optionally) {\em inference}, {\em joining}, and {\em reduce} $\in \mathcal{F}$.
Model selection may be implemented via a reduce that
internally uses learning basis functions to learn models
for the set of hyperparameters specified by $s'$,
followed by composition with inference and another reduce $\in \mathcal{F}$
for scoring, eventually returning
the final selected model.}}

\revision{
\subsection{\lang}
\label{sec:hml}
}

\revision{
\lang is a declarative language for specifying an ML workflow DAG.
The basic building blocks of \lang are {\em \gs objects}, 
which correspond to the nodes in the DAG.
Each \name object is either
a {\em data collection} (DC) or an {\em operator}.
Statements in \lang either declare new instances of objects 
or relationships between declared objects. 
Users program the entire workflow in a single \wf interface, 
as shown in Figure~\ref{fig:wfEx}a).
The complete grammar for \lang in Backus-Naur Form
as well as the semantics of all of the expressions 
can be found 
in the technical report~\cite{dorx2017}.
Here, we describe high-level concepts including DCs and operators 
and discuss the strengths and limitations of \lang in~\Cref{sec:hmlLim}.}

\subsubsection{Data Collections}
\label{sec:dc}
A \textit{data collection} (DC) is analogous to a relation in a RDBMS;
each \textit{element} in a DC is analogous to a tuple. 
The content of a DC either derives from disk, 
e.g., \code{data} in Line 3 in Figure~\ref{fig:wfEx}a), 
or from operations on other DCs, 
e.g., \code{rows} in Line 4 in Figure~\ref{fig:wfEx}a).
An element in a DC can either be a \textit{semantic unit},
\revision{the data structure for DPR},
or an \textit{example},
\revision{the data structure for L/I}.

A DC can only contain a single type of element.
\dcsu and \dce denote a DC of semantic units and a DC of examples, respectively.
The type of elements in a DC is determined by the operator that produced the DC 
and not explicitly specified by the user. 
We elaborate on the relationship between operators and element types 
in Section~\ref{sec:operator}, after introducing the operators.

\topic{Semantic units} 
\revision{
\techreport{Recall that many}
\papertext{Many} DPR operations require 
going through the entire dataset to learn the exact transformation or extraction function.
For a workflow with many such operations, 
processing $\mathcal{D}$ to learn each operator separately can be highly inefficient.
We introduce the notion of semantic units (SU) to compartmentalize the logical and physical representations of features,
so that the learning of DPR functions can be delayed and batched. 
}

\revision{
Formally, each SU contains an input $i$, which can be a set of records or FVs,
a pointer to a DPR function $f$, which can be of type parsing, join, feature extraction, feature transformation, or feature concatenation,
and an output $o$, which can be a set of records or FVs and is the output of $f$ on $i$.
The variables $i$ and $f$ together serve as the \textit{semantic}, or logical, representation of the features,
whereas $o$ is the lazily evaluated physical representation that can
only be obtained after $f$ is fully instantiated.
}

\topic{Examples}
\revision{Examples gather all the FVs contained in the output of various SUs into a single FV for learning.
Formally, an example contains a set of SUs $S$, 
and an optional pointer to one of the SUs 
whose output will be used as the label in supervised settings,
and an output FV, which is formed by concatenating the outputs of $S$.
\techreport{In the implementation,  
the order of SUs in the concatenation is determined globally across $\mathcal{D}$, and SUs whose outputs are not FVs are filtered out.}}

\topic{\revision{Sparse vs. Dense Features}}
\revision{
The combination of SUs and examples affords \name 
a great deal of flexibility in the physical representation of features.
Users can explicitly program their DPR functions to output dense vectors,
in applications such as computer vision. 
For sparse categorical features, they are kept in the raw key-value format 
until the final FV assembly, where they are transformed into sparse or dense vectors
depending on whether the ML algorithm supports sparse representations.
\techreport{Note that users do not have to commit to a single representation for the entire application,
since different SUs can contain different types of features. }
When assembling a mixture of dense and spare FVs, \name currently opts for a dense representation but can be extended to support optimizations considering space and time tradeoffs. 
}

\topic{Unified learning support}\label{para:unified}
\lang provides unified support for training and test data 
by treating them as a single DC, as done in Line 4 in Figure~\ref{fig:wfEx}a).
This design ensures that both training and test data undergo the exact same \dataprep steps,
eliminating bugs caused by inconsistent 
\mbox{\dataprep} procedures handling
training and test data separately.
\name automatically selects the appropriate 
data for training and evaluation.
\revision{However, if desired, users can handle training and test data differently by 
specifying separate DAGs for training and testing.
Common operators can be shared across the two DAGs without code duplication.}

\subsubsection{Operators}
\label{sec:operator}

\revision{Operators in \name are designed to cover the functions enumerated in Section~\ref{sec:opBasis}, using the data structures introduced above.}
A \name \textit{operator} takes one or more DCs and 
outputs DCs, ML models, or scalars.
Each operator encapsulates a function $f$, written in Scala,
to be applied to individual elements in the input DCs.
\revision{ \techreport{As noted above, $f$ can be learned from the input data
or user defined.
Like in Scikit-learn, \lang provides off-the-shelf implementations for common operations for ease of use.}
We describe the relationships between operator interfaces in \lang 
and $\mathcal{F}$ enumerated in Section~\ref{sec:opBasis} below.
}

\topic{Scanner}
\revision{\textit{Scanner} is the interface for parsing $\in \mathcal{F}$
and acts like a flatMap,
 i.e., for each input element, it adds zero or more elements to the output DC.
\techreport{Thus, it can also be used to perform filtering.}
The input and output of Scanner are \dcsu{}s.
\code{CSVScanner} 
in Line 4 of~\Cref{fig:wfEx}a) 
is an example of a Scanner that parses lines in a CSV file into key-value pairs for columns.}

\topic{Synthesizer}
\revision{\textit{Synthesizer} supports join $\in \mathcal{F}$,
for elements both across multiple DCs and within the same DC.
Thus, it can also support aggregation operations such as sliding windows in time series.
Synthesizers also serve the important purpose of specifying the set of SUs that make up an example \techreport{(where output FVs from the SUs are automatically assembled into a single FV)}.
In the simple case where each SU in a \dcsu corresponds to an example, 
a pass-through synthesizer is implicitly declared
by naming the output \dce, 
such as in Line 14 of Figure~\ref{fig:wfEx}a).}

\topic{Learner}
\revision{\textit{Learner} is the interface for learning and inference $\in \mathcal{F}$}, 
in a single operator.
A learner operator $L$ contains \revision{a learned function} $f$, 
\techreport{which can be populated by learning from the input data 
or loading from disk.
\revision{$f$ can be an ML model, 
but it can also be a feature transformation function 
that needs to be learned from the input dataset.}}
\papertext{
\revision{which can be an ML model or 
a feature transformation function learned from the input dataset.}}
When $f$ is empty, $L$ learns a model 
using input data designated for model training;
when $f$ is populated, $L$ performs inference on the input data using $f$ 
and outputs the inference results into a \dce.
For example, the learner \code{incPred} in Line 15 of Figure~\ref{fig:wfEx}a) 
is a learner trained on the ``train'' portion of the \dce \code{income}
and outputs inference results as the \dce \code{predictions}.

\topic{Extractor}
\revision{\textit{Extractor} is the interface 
for feature extraction and feature transformation $\in \mathcal{F}$.
Extractor contains the function $f$ applied on the input of SUs,
thus the input and output to an extractor are \dcsu{}s.
For functions that need to be learned from data, 
Extractor contains a pointer to the learner operator for learning $f$.}

\topic{Reducer}
\revision{Reducer is the interface for reduce $\in \mathcal{F}$
and thus the main operator interface for PPR.
The inputs to a reducer are \dce and an optional scalar and the output is a scalar,
where scalars refer to non-dataset objects.}
For example, \code{checkResults} in Figure~\ref{fig:wfEx}a) Line 17
\revision{is a reducer that} computes the prediction accuracy of the inference results in \code{predictions}.

\revision{
\subsection{Scope and Limitations}
\label{sec:hmlLim}
}

%!TEX root =p390-xin.tex

\topic{\revision{Coverage}}
\revision{ 
\techreport{In Section~\ref{sec:opBasis}, we described how
the set of basis operations $\mathcal{F}$ we propose covers
all major operations in Scikit-learn, one of the most comprehensive ML libraries.
We then showed in Section~\ref{sec:hml} that \lang captures all functions in $\mathcal{F}$.}
\papertext{We showed in Section~\ref{sec:hml} that \lang captures all functions in $\mathcal{F}$,
which is shown to cover all major operations in ML workflows~\cite{dorx2017}.}
}
\revision{
While \lang's interfaces are general enough to support all the common use cases,
users can additionally manually plug into our interfaces external implementations, 
such as from MLLib~\cite{meng2016mllib} and Weka~\cite{hall2009weka}, of missing operations.
\textit{Note that we provide utility functions that allow functions  
to work directly with raw records and FVs instead of \lang data structures
to enable direct application of external libraries.}
\techreport{For example, since all MLLib models implement 
the train (equivalent to learning) and predict (equivalent to inference) methods, 
they can easily be plugged into Learner in \name.}
We demonstrate in Section~\ref{sec:experiments} 
that the current set of implemented operations is sufficient for supporting 
applications across different domains.
}

\vspace{2pt}
\topic{\revision{Limitations}}
\revision{
Since \name currently relies on its Scala DSL for workflow specification, 
popular non-JVM libraries, such as TensorFlow~\cite{abadi2016tensorflow} and Pytorch~\cite{paszke2017pytorch}, 
cannot be imported easily without significantly degrading performance 
compared to their native runtime environment. 
\techreport{Developers with workflows implemented in other languages 
will need to translate them into \lang,
which should be straightforward due to the natural correspondence between \name operators and those in standard ML libraries, as established in Section~\ref{sec:hml}.}
That said, our contributions in materialization and reuse 
apply across all languages.
In the future, we plan on abstracting the DAG representation
in \name into a language-agnostic system 
that can sit below the language layer 
for all DAG based systems, including TensorFlow, Scikit-learn, and Spark.}

\revision {The other downside of \lang is that ML models are treated largely as black boxes.
Thus, work on optimizing learning, e.g., \cite{recht2011hogwild, zinkevich2010parallelized},
orthogonal to (and can therefore be combined with) our work, which operates at a coarser granularity.
}

\section{Compilation and Representation}
\label{sec:representation}
%!TEX root=p390-xin.tex

In this section, we describe the \wf DAG, 
the abstract model used internally by \name
to represent a \wf program.
The \wf DAG model enables operator-level change tracking between iterations
and end-to-end optimizations. 

\subsection{The Workflow DAG}
\label{sec:dag}
%!TEX root=p390-xin.tex

At compile time, \name's intermediate code generator 
constructs a {\em \wf DAG} from 
\lang declarations\papertext{.}\techreport{, with nodes corresponding to operator outputs, 
(DCs, scalars, or ML models),
and edges corresponding to input-output relationships between operators.}

\begin{definition}
For a \wf $W$ containing \name operators $F = \{f_i\}$, 
the \wf DAG is a directed acyclic graph $G_W = (N, E)$,
where node $n_i \in N$ represents the output of $f_i \in F$
and $(n_i, n_j) \in E$ if the output of $f_i$ is an input to $f_j$.
\end{definition}

\noindent \Cref{fig:wfEx}b) shows the \wf DAG 
for the program in~\Cref{fig:wfEx}a). 
\techreport{Nodes for operators involved in DPR are colored purple
whereas those involved in L/I and PPR are colored orange.}
This transformation is straightforward, 
creating a node for each declared operator 
and adding edges between nodes based on the linking expressions,
e.g., \code{A results\_from B} creates an edge $(B, A)$.
Additionally, the intermediate code generator introduces edges not specified in the \wf 
between the extractor and the synthesizer nodes,
such as the edges marked by dots ($\bullet$) in Figure~\ref{fig:wfEx}b).
These edges  
connect extractors to downstream DCs
in order to automatically aggregate all features for learning.
One concern is that this may lead to redundant computation of unused features;
we describe pruning mechanisms to address this issue in Section~\ref{sec:prune}.

\subsection{Tracking Changes}
\label{sec:change}
%!TEX root=p390-xin.tex

As described in Section~\ref{sec:lifecycle}, 
a user starts with an initial workflow $W_0$ 
and iterates on this workflow. 
Let $W_t$ be the version of the workflow at iteration $t \geq 0$
with the corresponding DAG $G_W^t = (N_t, E_t)$;
$W_{t+1}$ denotes the workflow obtained in the next iteration.
To describe the changes between $W_t$ and $W_{t+1}$, 
we introduce the notion of \textit{equivalence}.

\begin{definition}\label{def:opEq}
A node $n_i^t \in N_t$ is {\em equivalent} to $n_i^{t+1} \in N_{t+1}$,
denoted as $n_i^t \equiv n_i^{t+1}$,
if \textbf{a}) the operators corresponding to $n_i^t$ and $n_i^{t+1}$ 
compute identical results on the same inputs
and \textbf{b}) $n_j^t \equiv n_j^{t+1} \ \forall \ n_j^t \in parents(n_i^t), n_j^{t+1} \in parents(n_i^{t+1})$.
We say $n_i^{t+1}\in N_{t+1}$ is {\em original} 
if it has no equivalent node in $N_t$.
\end{definition}
Equivalence is symmetric, i.e.,
$n_i^{t'} \equiv n_i^t \Leftrightarrow n_i^t \equiv n_i^{t'} $,
and transitive, 
i.e., $(n_i^{t} \equiv n_i^{t'} \wedge  n_i^{t'} \equiv n_i^{t''}) \Rightarrow n_i^t \equiv n_i^{t''} $.
Newly added operators in $W_{t+1}$ do not have equivalent nodes in $W_{t}$;
neither do nodes in $W_t$ that are removed in $W_{t+1}$.
For a node that persists across iterations, 
we need both the operator and the ancestor nodes to stay the same
for equivalence.
Using this definition of equivalence, 
we determine if intermediate results on disk can be safely reused 
through the notion of equivalent materialization:
\begin{definition}\label{def:eqMat}
A node $n_i^t \in N_t$ has an {\em equivalent materialization}
if $n_i^{t'}$ is stored on disk, where $t' \leq t$ and $n_i^{t'} \equiv n_i^t$.
\end{definition}
One challenge in determining 
equivalence is 
deciding whether two versions of an operator 
compute the same results on the same input.
For arbitrary functions,
this is undecidable as proven by 
Rice's Theorem~\mbox{\cite{rice1953classes}}.
The programming language community 
has a large body of work on
verifying operational equivalence 
for specific classes of programs~\cite{woodcock2009formal, pitts1997operationally, gordon1995tutorial}.
\name currently employs a simple representational 
equivalence verification---an operator 
remains equivalent across iterations 
if its declaration in the DSL is not modified 
and all of its ancestors are unchanged.
Incorporating more advanced 
techniques for verifying equivalence is future work.

To guarantee correctness\papertext{(the proof for which is in our technical report~\cite{dorx2017})}, i.e., results obtained at iteration $t$ 
reflect the specification for $W_t$ and are computed from the appropriate input,
we impose the constraint: 
\begin{constraint}\label{constraint:rerun}
At iteration $t+1$, if an operator $n_i^{t+1}$ is original, it must be recomputed.
\end{constraint}
\techreport{
With Constraint~\ref{constraint:rerun}, our current approach to tracking changes 
yields the following guarantee on result correctness:
\begin{theorem}\label{thm:correct}
\name returns the correct results
if the changes between iterations are made only within the programming interface,
i.e., all other factors, 
such as library versions and files on disk, 
stay invariant, i.e., unchanged, between executions at iteration $t$ and $t+1$.
\end{theorem}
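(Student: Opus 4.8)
The plan is to prove correctness by a nested induction: an outer induction on the iteration index $t$ and, inside each iteration, an inner induction along a topological order of the workflow DAG. First I would fix notation for the intended semantics. Given a workflow $W$ and the external state (disk contents, library versions), define the \emph{canonical value} $c(n_i)$ of a node $n_i \in G_W$ recursively as the result of applying the operator $f_i$ to the tuple of canonical values of $parents(n_i)$, a source operator's canonical value being whatever it reads from disk. The hypothesis of the theorem --- that disk files and libraries are unchanged between the executions at iterations $t$ and $t+1$ --- makes $c(\cdot)$ well defined and \emph{independent of which execution we evaluate it in}; this iteration-independence is the sole place the invariance hypothesis will be consumed. \name is \emph{correct} at iteration $t$ iff, for every output node of $W_t$, the value \name reports equals that node's canonical value. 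Throughout, I take as given that the execution engine evaluates each operator $f_i$ faithfully when supplied with the correct inputs --- the theorem concerns the optimization and reuse layer, not the engine.

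The key step is a soundness lemma for \name's equivalence test: if \name declares $n_i^{t+1}$ equivalent to a stored $n_i^{t'}$ (because its DSL declaration is unmodified and, recursively, all its ancestors are unchanged), then $n_i^{t'} \equiv n_i^{t+1}$ in the sense of Definition~\ref{def:opEq}, and consequently $c(n_i^{t'}) = c(n_i^{t+1})$. I would prove this by induction on the depth of $n_i$ in the DAG. Condition~(b) of Definition~\ref{def:opEq} is immediate, since \name's test already requires every ancestor to be unchanged, so the inductive hypothesis gives pairwise equivalence of the parents. For condition~(a): an operator's input/output behavior is determined by its DSL declaration together with whatever library code or disk files it consults; the declaration is unmodified and, by the invariance hypothesis, libraries and files are unchanged, so the two versions denote the same function and hence agree on identical inputs. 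Equivalence then lifts to canonical values: by the inductive hypothesis the parents have equal canonical values, so the input tuples fed to $f_i$ in $W_{t'}$ and $W_{t+1}$ coincide, and condition~(a) forces the outputs to coincide.

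Now the outer induction. Let $\mathcal{H}(t)$ assert: (i) for every output node of $W_t$, the value \name reports equals its canonical value; and (ii) every intermediate result on disk --- materialized during the execution of some $W_{t'}$, $t'\le t$ --- stores exactly $c(n_i^{t'})$. Base case $t=0$: no disk results exist, so by Definition~\ref{def:eqMat} no node has an equivalent materialization; every node is computed, and the inner induction on topological order (parents correct $\Rightarrow$ engine produces $c(n_i)$) yields (i) and (ii). Inductive step, assuming $\mathcal{H}(t)$: fix the optimized plan $G_{W_{t+1}}^{OPT}$; by the pruning mechanism of Section~\ref{sec:prune} the optimizer only removes nodes that reach no output and only inserts the inferred extractor-to-synthesizer edges belonging to the workflow's intended semantics, so $G_{W_{t+1}}^{OPT}$ has the same output canonical values as $G_{W_{t+1}}$. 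Run the inner induction on topological order. If $n_i^{t+1}$ is recomputed --- which, by Constraint~\ref{constraint:rerun}, it must be whenever it is original --- its parents already carry their canonical values (inner hypothesis) and the engine returns $c(n_i^{t+1})$. If instead it is loaded from an equivalent materialization $n_i^{t'}$, the soundness lemma gives $c(n_i^{t'}) = c(n_i^{t+1})$, while $\mathcal{H}(t)$ part~(ii) says the stored bytes equal $c(n_i^{t'})$; so the loaded value is again $c(n_i^{t+1})$. Either way every node, in particular every output, obtains its canonical value --- this is $\mathcal{H}(t+1)$ part~(i) --- and any result freshly materialized at iteration $t+1$ was computed or loaded and hence equals its canonical value, which together with $\mathcal{H}(t)$ part~(ii) gives $\mathcal{H}(t+1)$ part~(ii). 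The theorem is exactly $\mathcal{H}(t)$ part~(i), for all $t$.

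The real obstacle is the soundness lemma, and within it the source operators. A node like \code{CSVScanner} over a file has an ``unchanged declaration'' yet its output is fully determined by disk state, so absent the invariance hypothesis representational equivalence would be \emph{unsound}; pinning down that this hypothesis is exactly what repairs it is the crux of the argument. Closely related is the opacity of UDFs and imported library calls: the lemma must treat an operator as denoting a fixed mathematical function so long as its source and every library it transitively invokes are byte-identical across the two executions, which is the precise content that ``changes are made only within the programming interface'' has to carry. The remaining ingredients --- the two inductions, the appeal to Constraint~\ref{constraint:rerun}, and the transitivity of $\equiv$ that legitimizes loading an arbitrarily old materialization rather than only the immediately preceding one --- are routine.
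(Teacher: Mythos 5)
Your proposal is correct and takes essentially the same route as the paper's own proof: an induction over iterations whose crux is that, under the invariance hypothesis, \name's syntactic (representational) check identifies every original operator, and Constraint~\ref{constraint:rerun} forces those to be recomputed, so reuse only ever draws on genuinely equivalent results. The paper states this as a short induction-plus-contradiction and leaves implicit what you make explicit---the canonical-value semantics, the inner topological induction, the soundness lemma for the equivalence test, and the transitivity argument legitimizing reuse of materializations from iterations $t' \leq t$---so your write-up is a more rigorous elaboration rather than a different approach.
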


\begin{proof}
First, note that the results for $W_0$ are correct since there is no reuse at iteration 0.
Suppose for contradiction 
that given the results at $t$ are correct,
the results at iteration $t+1$ are incorrect,
i.e., $\exists\ n_i^{t+1}$ s.t. 
the results for $n_i^t$ are reused when $n_i^{t+1}$ is original.
Under the invariant conditions in Theorem~\ref{thm:correct},
we can only have $n_i^{t+1} \not\equiv n_i^t$ 
if the code for $n_i$ changed or the code changed for an ancestor of $n_i$.
Since \name detects all code changes, 
it identifies all original operators.
Thus, for the results to be incorrect in \name,
we must have reused $n_i^t$ for some original $n_i^{t+1}$.
However, this violates Constraint~\ref{constraint:rerun}.
Therefore, the results for $W_t$ are correct $\forall \ t \geq 0$.
\end{proof}
}

\section{Optimization}
\label{sec:optimization}
%!TEX root=p390-xin.tex

\def\oep{\textsc{Opt-Exec-Plan}\xspace}
\def\ojs{\textsc{Proj-Selection-Problem}\xspace}
\def\psp{\ojs}
\def\maxflow{\textsc{Max-Flow}\xspace}

\def\omp{\textsc{Opt-Mat-Plan}\xspace}
\def\matF{materialization run time\xspace}

In this section, we describe \gs's workflow-level optimizations,
motivated by the observation that {\em workflows
often share a large amount of intermediate computation
between iterations}; thus,
if certain intermediate results are materialized
at iteration $t$,
these can be used at iteration $t+1$.
We identify two distinct sub-problems:
\oep, which selects the operators to reuse given
previous materializations (Section~\mbox{\ref{sec:oep}}),
and \omp, which decides what to materialize to accelerate future iterations
(Section~\mbox{\ref{sec:omp}}).
We finally discuss pruning optimizations
to eliminate redundant computations (Section~\ref{sec:prune}). 
We begin by introducing common notation and definitions. 

\subsection{Preliminaries}\label{sec:optPrelim}
%!TEX root=p390-xin.tex

When introducing variables below, 
we drop the iteration number $t$ from $W_t$ and $G_W^t$ 
when we are considering a static workflow.

\topic{Operator Metrics}
In a \wf DAG $G_W = (N, E)$, 
each node $n_i \in N$ corresponding to the output of the operator $f_i$ 
is associated with a compute time $c_i$, 
the time it takes to compute $n_i$ from inputs in memory.
Once computed, 
$n_i$ can be materialized on disk 
and loaded back in subsequent iterations in time $l_i$,
referred to as its {\em load time}.
If $n_i$ does not have an equivalent materialization 
as defined in Definition~\ref{def:eqMat}, 
we set $l_i = \infty$.
Root nodes in the \wf DAG, which correspond to data sources,
have $l_i = c_i$.

\topic{Operator State}
During the execution of workflow $W$,
each node $n_i$ assumes one of the following states:
\begin{denselist}
\item {\em Load}, or $S_l$, if $n_i$ is loaded from disk;
\item {\em Compute}, or $S_c$, $n_i$ is computed from inputs;
\item {\em Prune}, or $S_p$, if $n_i$ is skipped (neither loaded nor computed).
\end{denselist}
Let $s(n_i) \in \{S_l, S_c, S_p\}$ denote the state of each $n_i\in N$.
To ensure that nodes in the Compute state have their inputs available, i.e., not \textit{pruned},
the states in a \wf DAG $G_W=(N,E)$ must satisfy the following 
{\em execution state constraint}:
\begin{constraint}
\label{constraint:exstate}
For a node $n_i \in N$, if $s(n_i) = S_c$,
then $s(n_j)\ne S_p$ for every $n_j \in parents(n_i)$.
\end{constraint}

\topic{Workflow Run Time}
A node $n_i$ in state $S_c$, $S_l$, or $S_p$ has run time $c_i$,
$l_i$, or $0$, respectively. 
The total run time of $W$ w.r.t. $s$ is thus 
\begin{equation} \label{eqn:wfcost}
T(W,s) = \sum_{n_i \in N} \indic{s(n_i) =S_c}c_i + \indic{s(n_i)=S_l}l_i
\end{equation}
where $\indic{}$ is the indicator function.

Clearly, setting all nodes to $S_p$ trivially minimizes Equation~\ref{eqn:wfcost}. 
However, recall that Constraint~\ref{constraint:rerun} requires 
all original operators to be rerun.
Thus, if an original operator $n_i$ is introduced, 
we must have $s(n_i) = S_c$, 
which by Constraint~\ref{constraint:exstate}
requires that $S(n_j) \neq S_p \ \forall n_j \in parents(n_i)$.
\techreport{Deciding whether to load or compute the parents 
can have a cascading effect on the states of their ancestors.}
We explore how to determine 
the states for each nodes 
to minimize Equation~\ref{eqn:wfcost} next.

\subsection{Optimal Execution Plan}
\label{sec:oep}
%!TEX root=p390-xin.tex

The {\em Optimal Execution Plan} (OEP) problem is
the core problem solved by \gs's DAG optimizer,
which determines at compile time the optimal execution plan 
given results and statistics from previous iterations.

\begin{problem} (\oep)
\label{prob:oep}
Given a \wf $W$ with DAG $G_W = (N, E)$, 
the compute time and the load time $c_i, l_i$ for each $n_i \in N$,
and a set of previously materialized operators $M$,
find a state assignment 
$s: N \rightarrow \{S_c, S_l, S_p\}$ 
that minimizes $T(W,s)$ while satisfying 
\mbox{\Cref{constraint:rerun}} and
\Cref{constraint:exstate}.
\end{problem}

Let $T^*(W)$ be the minimum execution time achieved by the solution to OEP, i.e.,
\begin{equation} \label{eqn:wfcost2}
T^*(W) = \min\limits_{s}\ T(W,s)
\end{equation}
Since this optimization takes place \textit{prior} to execution,
we must resort to operator statistics from past iterations.
{\em This does not compromise accuracy because
if a node $n_i$ has an equivalent materialization as defined in Definition~\ref{def:opEq},
we would have run the exact same operator before
and recorded accurate $c_i$ and $l_i$.}
A node $n_i$ without an equivalent materialization, such as a model with changed hyperparameters,
needs to be recomputed (Constraint~\ref{constraint:rerun}).

Deciding to load certain nodes can have cascading effects 
since  ancestors of a loaded node can potentially be pruned,
leading to large reductions in run time. 
On the other hand, Constraint~\ref{constraint:exstate} disallows
the parents of computed nodes to be pruned.
Thus, the decisions to load a node $n_i$ can be affected 
by nodes outside of the set of ancestors to $n_i$.
For example, in the DAG on the left in \Cref{fig:reduction}, 
loading $n_7$ allows $n_{1-6}$ to be potentially pruned.
However, the decision to compute $n_8$, 
possibly arising from the fact that $l_8 \gg c_8$,
requires that $n_5$ must not be pruned.

Despite such complex dependencies between the decisions for individual nodes, 
Problem~\ref{prob:oep} can be solved optimally in polynomial time
through a linear time reduction to the \textit{project-selection problem} (PSP), 
which is an application of \maxflow~\cite{kleinberg2006algorithm}.

\begin{problem} \ojs{} (PSP)
\label{prob:ojs}
Let $P$ be a set of projects. 
Each project $i \in P$ has a real-valued profit $p_i$ and a set of prerequisites $Q \subseteq P$.
Select a subset $A \subseteq P$ such that all prerequisites of a project $i \in A$ are also in $A$
and the total profit of the selected projects, $\sum_{i \in A} p_i$, is maximized.
\end{problem}

%!TEX root=p390-xin.tex

\begin{figure}[t]
\begin{subfigure}{.48\linewidth}
\begin{tikzpicture}[
            > = stealth, % arrow head style
            shorten > = 1pt, % don't touch arrow head to node
            auto,
            node distance = 1.5cm, % distance between nodes
            semithick, % line style
            every node/.style={
                scale=0.7,
                shape=circle,
                draw = black,
                thick,
            }
        ]

        \tikzstyle{p}=[
            %dashed,
            fill = white,
        ]
        
        \tikzstyle{l}=[
            fill = red!20,
        ]
        
        \tikzstyle{c}=[
            fill = blue!20,
        ]

        \node[p] (n6) [label=left:$S_l$]{$n_4$};
        \node[p] (n1) [above left of=n6, label=$S_p$] {$n_1$};
        \node[p] (n2) [above of=n6, label=$S_p$] {$n_2$};
        \node[p] (n3) [above right of=n6, label=$S_p$] {$n_3$};
        \node[p] (n7) [below right of=n6, label=below:$S_c$]{$n_6$};
        \node[p] (n8) [below left of=n7, label=below:$S_c$]{$n_7$};
        \node[p] (n9) [below right of=n7, label=below:$S_l$]{$n_8$};
        \node[p] (n5) [above right of=n7, label=left:$S_l$]{$n_5$};

        \path[->] (n5) edge (n7);
        \path[->] (n6) edge (n8);
        \path[->] (n6) edge (n7);
        \path[->] (n7) edge (n8);
        \path[->] (n5) edge (n9);
        \path[->] (n1) edge (n6);
        \path[->] (n2) edge (n6);
        \path[->] (n3) edge (n6);

        \node[draw,shape=rectangle,scale=1.5] at (2.7, -0.125) {$\varphi$};
        \draw[black,->,thick] (2, -0.5) -- (3.5, -0.5);
    \end{tikzpicture}
\label{fig:reductionA}
\end{subfigure}
\begin{subfigure}{.5\linewidth}
\begin{tikzpicture}[
            > = stealth, % arrow head style
            shorten > = 1pt, % don't touch arrow head to node
            auto,
            semithick, % line style
            every node/.style={
                scale=0.7,
                shape=circle,
                draw = black,
                thick,
            }
        ]

        \tikzstyle{a}=[
            node distance = 1.0cm, % distance between nodes
            fill = red!20,
        ]
        
        \tikzstyle{b}=[
            node distance = 2.0cm, % distance between nodes
            fill = blue!20,
        ]

        \node[b] (b6) {$b_4$};
        \node[a] (a6) [below of=b6, label=below:$\checkmark$] {$a_4$};

        \node[b] (b1) [above left of=b6] {$b_1$};
        \node[a] (a1) [below of=b1] {$a_1$};

        \node[b] (b2) [above of=b6] {$b_2$};
        \node[a] (a2) [below of=b2] {$a_2$};

        \node[b] (b3) [above right of=b6] {$b_3$};
        \node[a] (a3) [below of=b3] {$a_3$};

        \node[b] (b7) [below right of=b6, label=above:$\checkmark$]{$b_6$};
        \node[a] (a7) [below of=b7, label=below:$\checkmark$]{$a_6$};

        \node[b] (b8) [below left of=b7, label=left:$\checkmark$]{$b_7$};
        \node[a] (a8) [below of=b8, label=left:$\checkmark$]{$a_7$};

        \node[b] (b9) [below right of=b7]{$b_8$};
        \node[a] (a9) [below of=b9, label=right:$\checkmark$]{$a_8$};

        \node[b] (b5) [above right of=b7]{$b_5$};
        \node[a] (a5) [below of=b5, label=right:$\checkmark$]{$a_5$};
        
        \tikzset{curved/.style={<-,relative=false,in=60,out=0}}
        \tikzset{left/.style={->,bend left,in=120,out=60}}

        \path[<-]     (a1) edge (b1);
        \path[<-]     (a2) edge (b2);
        \path[<-]     (a3) edge (b3);
        \path[<-]     (a5) edge (b5);
        \path[<-]     (a6) edge (b6);
        \path[<-]     (a7) edge (b7);
        \path[<-]     (a8) edge (b8);
        \path[<-]     (a9) edge (b9);

        \path[<-]     (a5) edge (b7);
        \path[left]   (b8) edge (a6);
        \path[<-]     (a6) edge (b7);
        \path[<-]     (a7) edge (b8);
        \path[<-] (a5) edge (b9);
        \path[<-]     (a1) edge (b6);
        \path[<-] (a2) edge (b6);
        \path[<-]     (a3) edge (b6);

    \end{tikzpicture}
\label{fig:reductionB}
\end{subfigure}
\caption{Transforming a \wf DAG to a set of projects
and dependencies.
Checkmarks ($\checkmark$) in the RHS DAG indicate a feasible solution to PSP,
which maps onto the node states ($S_p, S_c, S_l$) in the LHS DAG. }
\label{fig:reduction}
\end{figure}
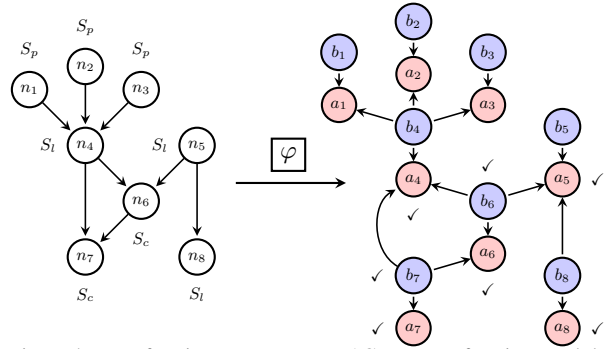

\topic{Reduction to the Project Selection Problem}
We can reduce an instance of \Cref{prob:oep} $x$
to an equivalent instance of PSP $y$
such that the optimal solution to 
$y$ maps to an optimal solution of $x$. 
Let $G = (N, E)$ be the \wf DAG in $x$,
and $P$ be the set of projects in $y$.
We can visualize the prerequisite requirements in $y$ as a DAG 
with the projects as the nodes 
and an edge $(j, i)$ indicating 
that project $i$ is a prerequisite of project $j$.
The reduction, $\varphi$, 
depicted in \Cref{fig:reduction} for an example instance of $x$, 
is shown in Algorithm~\ref{algo:oep}.
For each node $n_i \in N$, we create two projects in PSP:
$a_i$ with profit $-l_i$ and $b_i$ with profit $l_i - c_i$.
We set $a_i$ as the prerequisite for $b_i$.
For an edge $(n_i, n_j) \in E$, 
we set the project $a_i$ corresponding to node $n_i$ 
as the prerequisite for the project $b_j$ corresponding to node $n_j$.
Selecting both projects $a_i$ and $b_i$ corresponding to $n_i$ 
is equivalent to computing $n_i$, i.e., $s(n_i) = S_c$,
while selecting only $a_i$ is equivalent to loading $n_i$, i.e., $s(n_i)=S_l$.
Nodes with neither projects selected are pruned.
An example solution mapping from PSP to OEP is shown in Figure~\ref{fig:reduction}.
Projects $a_4, a_5, a_6, b_6, a_7, b_7, a_8$ are selected, 
which cause nodes $n_4, n_5, n_8$ to be loaded, 
$n_6$ and $n_7$ to be computed,
and $n_1, n_2, n_3$ to be pruned.

\begin{algorithm}[t]\papertext{\scriptsize}
 \caption{OEP via Reduction to PSP}\label{algo:oep}
 \KwIn{$G_W = (N, E), \{l_i\}, \{c_i\}$}
  $P \leftarrow \emptyset$\;
 \For{$n_i \in N$}{
 	$P \leftarrow P \cup \{a_i\}$ \tcp*[r]{Create a project $a_i$}
 	$profit[a_i] \leftarrow -l_i$ \tcp*[r]{Set profit of $a_i$ to $-l_i$}
 	$P \leftarrow P \cup \{b_i\}$ \tcp*[r]{Create a project $b_i$}
 	$profit[b_i] \leftarrow l_i - c_i$ \tcp*[r]{Set profit of $b_i$ to $l_i - c_i$}
 	\tcp*[h]{Add $a_i$ as prerequisite for $b_i$.}\;
 	$prerequisite[b_i] \leftarrow prerequisite[b_i] \cup a_i$\; \label{alg:cost}
 	\For{$(n_i, n_j) \in \{\textrm{edges leaving from }n_i\} \subseteq E$}{
 		\tcp*[h]{Add $a_i$ as prerequisite for $b_j$.}\; 
 		$prerequisite[b_j] \leftarrow prerequisite[b_j] \cup a_i$\; \label{algo:depend}
 	}
 }
 \tcp*[h]{$A$ is the set of projects selected by PSP}\;
 $A \leftarrow $ PSP($P, profit[], prerequisite[]$)\;
 \For(\tcp*[f]{Map PSP solution to node states}){$n_i \in N$}{
 	\If{$a_i \in A$ and $b_i \in A$}{
 		$s[n_i] \leftarrow S_c$\; \label{algo:compute}
 		}  
 	\ElseIf{$a_i \in A$ and $b_i \not\in A$}{
 		$s[n_i] \leftarrow S_l$\;  \label{algo:load}
 		}
 	\Else{
 		$s[n_i] \leftarrow S_p$\;
 		}
 }
 \KwRet{$s[]$} \tcp*[r]{State assignments for nodes in $G_W$.}
\end{algorithm}

Overall, the optimization objective in PSP models the ``savings'' in OEP 
incurred by loading nodes instead of computing them from inputs.
We create an equivalence between cost minimization in OEP 
and profit maximization in PSP by mapping the costs in OEP to negative profits in PSP.
For a node $n_i$, 
picking only project $a_i$ (equivalent to loading $n_i$) 
has a profit of $-l_i$,
whereas picking both $a_i$ and $b_i$ (equivalent to computing $n_i$) 
has a profit of $-l_i + (l_i - c_i) = -c_i$.
The prerequisites established in Line~\ref{alg:cost}
that require $a_i$ to also be picked if $b_i$ is picked
are to ensure correct cost to profit mapping.
The prerequisites established in Line~\ref{algo:depend} corresponds to \Cref{constraint:exstate}.
For a project $b_i$ to be picked, 
we must pick every $a_j$ corresponding to each parent $n_j$ of $n_i$.
If it is impossible ($l_j = \infty$) or costly to load $n_j$,
we can offset the load cost by picking $b_j$ for computing $n_j$.
However, computing $n_j$ also requires its parents to be loaded or computed,
as modeled by the outgoing edges from $b_j$.
The fact that $a_i$ projects have no outgoing edges 
corresponds to the fact loading a node removes its dependency on all ancestor nodes.

\begin{theorem}
\label{thm:reduction}
Given an instance of \oep $x$, 
the reduction in Algorithm~\ref{algo:oep} produces
a feasible and optimal solution to $x$.
\end{theorem}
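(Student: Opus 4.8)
The plan is to prove both claims through a single cost-preserving correspondence between feasible state assignments for \oep and feasible project selections for \psp: every feasible $s$ yields a feasible selection $A_s$ with total profit $-T(W,s)$, and, conversely, the selection returned by \psp maps back (via Lines~\ref{algo:compute}--\ref{algo:load} of Algorithm~\ref{algo:oep}) to a feasible $s$ with $T(W,s)$ equal to minus that profit. Optimality of the \oep solution then follows by transporting optimality across this correspondence, using the cited fact that \psp reduces to \maxflow and is hence solved exactly. Since the reduction creates $2|N|$ projects and $O(|N|+|E|)$ prerequisite pairs, it is linear-time, which also gives the claimed polynomial bound as a byproduct.

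First I would establish that the back-mapping is well defined and exhaustive: because $a_i$ is a prerequisite of $b_i$ (Line~\ref{alg:cost}), any feasible selection $A$ contains $b_i$ only if it contains $a_i$, so each node falls into exactly one of the three cases $\{a_i,b_i\}\subseteq A$, $a_i\in A\wedge b_i\notin A$, or $a_i,b_i\notin A$, i.e. $S_c$, $S_l$, or $S_p$. Next, feasibility. For \Cref{constraint:exstate}: if $s(n_i)=S_c$ then $b_i\in A$, and for each $n_j\in parents(n_i)$ the edge $(n_j,n_i)\in E$ makes $a_j$ a prerequisite of $b_i$ (Line~\ref{algo:depend}), so $a_j\in A$ and thus $s(n_j)\in\{S_c,S_l\}\neq S_p$; running this implication backwards shows that $A_s$ respects all prerequisites whenever $s$ satisfies \Cref{constraint:exstate}, and $a_i\in A_s$ whenever $b_i\in A_s$ by construction, so the correspondence is feasibility-preserving in both directions. \Cref{constraint:rerun} is handled by the convention that an original node has $l_i=\infty$: the project $a_i$ then carries profit $-l_i=-\infty$, so selecting $a_i$ alone is never part of a profit-maximizing solution, which means an original node that is not pruned must have $b_i$ (hence $a_i$) selected, i.e. $s(n_i)=S_c$, exactly as required.

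Then I would verify the cost identity node by node against \eqref{eqn:wfcost}: computing $n_i$ (both projects) contributes $-l_i+(l_i-c_i)=-c_i$; loading $n_i$ (only $a_i$) contributes $-l_i$; pruning (neither) contributes $0$. Summing, $\sum_{j\in A}p_j = -T(W,s)$ for corresponding $A$ and $s$. Hence the selection $A^*$ returned by \psp maps to a state assignment $s^*$ that is feasible (by the previous paragraph) and satisfies $T(W,s^*) = -\sum_{j\in A^*}p_j \le -\sum_{j\in A_s}p_j = T(W,s)$ for every feasible $s$, since $A_s$ is a feasible competitor for \psp; therefore $T(W,s^*) = T^*(W)$, proving \Cref{thm:reduction}.

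The step I expect to be the main obstacle is the bookkeeping around the $l_i=\infty$ convention: one must argue that the nominally ill-defined ``$-\infty$ together with $+\infty$'' profits of $a_i$ and $b_i$ combine exactly to the intended finite pair-cost $-c_i$, and that \psp can never ``cheat'' by loading a node lacking an equivalent materialization, so that \Cref{constraint:rerun} (and the non-prunability of mandatory output nodes) is enforced even though it never appears as an explicit prerequisite in the reduced instance. Everything else — exhaustiveness of the three cases, the two-way implication for \Cref{constraint:exstate}, and the additive profit/cost accounting — is routine once the correspondence is set up.
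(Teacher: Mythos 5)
Your overall route is the same as the paper's (the appendix proves exactly your two facts: prerequisite-feasibility in \psp implies \Cref{constraint:exstate}, and the selected-profit equals $-T(W,s)$, stated there via an ILP whose variables are your three cases), so the core correspondence and the optimality transfer are fine. The genuine gap is precisely the point you defer: \Cref{constraint:rerun}. Setting $l_i=\infty$ for an original node only eliminates the state $S_l$; it does not eliminate $S_p$. Under your own accounting, where the pair $\{a_i,b_i\}$ contributes combined profit $-c_i$ and selecting neither contributes $0$, the \psp optimum will prune an original node whenever computing it (together with the prerequisite cost of making its parents available) is a net loss --- for instance an original sink node with ordinary $c_i>0$ is always dropped by the profit maximizer, and the mapped assignment then violates \Cref{constraint:rerun}. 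So the sentence ``an original node that is not pruned must have $b_i$ selected'' is true but does not establish feasibility of $s^*$; nothing in your reduction forces such a node to be non-pruned, and your claimed cost identity also cannot be left with $\pm\infty$ profits on the same pair.

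The paper closes this by adjusting the \emph{inputs} to \oep before reducing: for every node that must be recomputed it sets $l_i=\infty$ \emph{and} $c_i=-\epsilon$ for a small $\epsilon>0$, so the states $S_l,S_p,S_c$ cost $\infty,0,-\epsilon$ and the pair $\{a_i,b_i\}$ has strictly positive combined profit $-c_i=\epsilon$, making $S_c$ preferable to $S_p$ for those nodes; \Cref{constraint:rerun} is folded into the cost data rather than appearing in the \psp instance. To repair your argument you need this device (or an equivalent one, e.g., marking $b_i$ mandatory for original nodes, or giving it a bonus larger than the total cost of the instance so that cascading prerequisite costs on ancestors provably cannot outweigh it --- the latter is worth stating explicitly, since a bare $-\epsilon$ only wins the local prune-vs-compute comparison), and you should then restate the identity $\sum_{j\in A}p_j=-T(W,s)$ over the adjusted, finite costs so no $\infty-\infty$ bookkeeping remains. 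Everything else --- exhaustiveness of the three cases via Line~\ref{alg:cost}, the two-way argument for \Cref{constraint:exstate} via Line~\ref{algo:depend}, and optimality via the exact \maxflow solution of \psp --- matches the paper's proof.
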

\papertext{\noindent See the technical report~\cite{dorx2017} for a proof.}
\techreport{\noindent See Appendix~\ref{sec:t1proof} for a proof.}

\techreport{
\topic{Computational Complexity}
For a \wf DAG $G_W = (N_W,$ $E_W)$ in OEP,
the reduction above results in $\bigo{|N_W|}$ projects 
and $\bigo{|E_W|}$ prerequisite edges in PSP.
PSP has a straightforward linear reduction to \maxflow~\cite{kleinberg2006algorithm}.
We use the Edmonds-Karp algorithm~\cite{edmonds1972theoretical} 
for \maxflow, 
which runs in time $\bigo{|N_W|\cdot|E_W|^2}$.}

\techreport{
\topic{Impact of change detection precision and recall}
The optimality of our algorithm for OEP assumes
that the changes between iteration $t$ and $t+1$ 
have been identified perfectly. 
In reality, this maybe not be the case 
due to the intractability of change detection,
as discussed in Section~\ref{sec:change}.
An undetected change is a false negative in this case,
while falsely identifying an unchanged operator as deprecated is a false positive.
A detection mechanism with high precision 
lowers the chance of unnecessary recomputation,
whereas anything lower than perfect recall leads to incorrect results.
In our current approach,
we opt for a detection mechanism 
that guarantee correctness under mild assumptions, 
at the cost of some false positives 
such as  $a + b \not\equiv b + a$.
}

\subsection{Optimal Materialization Plan}
\label{sec:omp}
%!TEX root=p390-xin.tex

The \omp{} (OMP) problem is tackled by \name's materialization optimizer:
while running workflow $W_t$ at iteration $t$, 
intermediate results are selectively materialized 
for the purpose of accelerating execution in iterations $> t$.
We now formally introduce OMP
and show that it is {\sc NP-hard} even under strong assumptions.
We propose an online heuristic for OMP that runs in linear time
and achieves good reuse rate in practice 
(as we will show in Section~\ref{sec:experiments}),
in addition to minimizing memory footprint 
by avoiding unnecessary caching of intermediate results.

\topic{Materialization cost}
We let $s_i$ denote the {\em storage
cost} for materializing $n_i$, 
representing the size of $n_i$ on disk. 
When loading $n_i$ back from disk to memory,
we have the following relationship between
load time and storage cost: 
$l_i = s_i /$(disk read speed).
For simplicity, we also assume the time to write $n_i$ to disk 
is the same as the time for loading it from disk, i.e., $l_i$.
We can easily generalize to 
the setting where load and write latencies are different.

To quantify the benefit of materializing intermediate results 
at iteration $t$ on subsequent iterations,
we formulate the \textit{\matF} $T_M(W_t)$
to capture the tradeoff between 
the additional time to materialize intermediate results
and the run time reduction in iteration $t+1$.
Although materialized results can be reused
in multiple future iterations,
we only consider the $(t+1)$th iteration 
since the total number of future iterations\techreport{, $\mathcal{T}$,} is unknown.
\techreport{Since modeling $\mathcal{T}$ is a complex open problem, 
we defer the amortization model to future work.}

\begin{definition} Given a workflow $W_t$, 
operator metrics $c_i, l_i, s_i$ for every $n_i \in N_t$,
and a subset of nodes $M \subseteq N_t$,
the {\em \matF} is defined as
\label{def:matE}
\begin{equation}\label{eqn:matTime}
T_M(W_t) = \sum\limits_{n_i \in M} l_i +  T^*(W_{t+1})
\end{equation}
where $\sum_{n_i \in M} l_i$ is the time to materialize all nodes selected for materialization,
and
$T^*(W_t)$ is the optimal workflow run time 
obtained using the algorithm in Section~\ref{sec:oep}, 
with $M$ materialized.
\end{definition}
\noindent Equation~\ref{eqn:matTime} defines the optimization objective for OMP.

\begin{problem} (\omp) 
\label{prob:omp}
Given a \wf $W_t$ with DAG $G_W^t=(N_t, E_t)$ at iteration $t$ 
and a storage budget $S$,
find a subset of nodes $M \subseteq N_t$ to materialize at $t$ 
in order to minimize $T_M(W_t)$,
while satisfying the storage constraint $\sum_{n_i \in M} s_i \leq S$.
\end{problem}

\techreport{
Let $M^*$ be the optimal solution to OMP, i.e.,
\begin{equation}\label{eqn:omp}
\argmin\limits_{M \subseteq N_t} \sum\limits_{n_i \in M} l_i +  T^*(W_{t+1})
\end{equation}}
\revision{As discussed in~\cite{xin2018developers},}
there are many possibilities for $W_{t+1}$,
and they vary by application domain.
User modeling and predictive analysis of $W_{t+1}$ itself 
is a substantial research topic
that we will address in future work.
This user model can be incorporated into OMP
by using the predicted changes 
to better estimate the likelihood of reuse for each operator.
However,
even under very restrictive assumptions about $W_{t+1}$,
we can show that \omp{} is {\sc NP-Hard},
via a simple reduction from the {\sc Knapsack} problem.
\begin{theorem}\label{thm:nphard}
\omp{} is NP-hard. 
\end{theorem}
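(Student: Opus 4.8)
The plan is to exhibit a polynomial-time reduction from \textsc{Knapsack} to \omp, using an extremely restricted family of workflows so that the hardness cannot be blamed on the intricacies of the DAG structure or on uncertainty about $W_{t+1}$. First I would fix the worst-case assumption about the next iteration in the adversary's favor: take $W_{t+1} = W_t$, so that the term $T^*(W_{t+1})$ in Equation~\ref{eqn:matTime} is simply the optimal one-shot run time of $W_t$ itself, with whatever set $M$ of nodes has been materialized. Under $W_{t+1} = W_t$, every node $n_i$ has an equivalent materialization exactly when $n_i \in M$, so the OEP subproblem on $W_{t+1}$ becomes: for each $n_i$, pay $\min(c_i, l_i)$ if we may prune its parents, and the reuse savings are entirely local once the DAG is chosen to be trivial.

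Concretely, I would take $G_W^t$ to be a collection of $n$ independent root nodes $n_1,\dots,n_n$ (no edges at all), one per \textsc{Knapsack} item. Given a \textsc{Knapsack} instance with items of weight $w_i$, value $v_i$, and capacity $C$, set the storage cost $s_i = w_i$, the load time $l_i$ to be a small fixed quantity, and the compute time $c_i$ so that the per-node run-time saving from having materialized $n_i$, namely $c_i - l_i$, equals $v_i$ (scaling $l_i$ down so all $c_i$ stay positive). With no edges, materializing $n_i$ never enables pruning of anything else, so $T^*(W_{t+1}) = \sum_i c_i - \sum_{n_i \in M}(c_i - l_i)$, and the materialization run time is
\[
T_M(W_t) = \sum_{n_i \in M} l_i + \sum_i c_i - \sum_{n_i \in M}(c_i - l_i) = \sum_i c_i - \sum_{n_i \in M}\big(c_i - 2l_i\big).
\]
Choosing $l_i$ uniformly small and rescaling $c_i$ so that $c_i - 2l_i = v_i$, minimizing $T_M(W_t)$ subject to $\sum_{n_i\in M} s_i \le S$ with $S = C$ is exactly maximizing $\sum_{n_i \in M} v_i$ subject to $\sum_{n_i \in M} w_i \le C$, i.e., \textsc{Knapsack}. (One has to note that root nodes carry $l_i = c_i$ per the preliminaries; I would sidestep this by adding one trivial parent operator per item, or equivalently by observing that the theorem statement permits us to treat the $l_i$ and $c_i$ as free parameters of the OMP instance, which is how OMP is defined in Problem~\ref{prob:omp}.) The reduction is clearly polynomial, and a solution to OMP yields a solution to \textsc{Knapsack} of the same objective value and vice versa.

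The step I expect to need the most care is the bookkeeping around the root-node constraint $l_i = c_i$ and the ``we only consider the $(t+1)$th iteration'' modeling choice: I want the instance to be faithful to Definitions~\ref{def:matE} and Problem~\ref{prob:omp} as literally stated, so I would either insert a single dummy data-source node feeding each $n_i$ (making $n_i$ a non-root with independently chosen $c_i, l_i$) or, more cleanly, route all $n_i$ off one shared source so the DAG stays linear-size. Everything else — verifying Constraint~\ref{constraint:rerun} is vacuous here (no operators are original since $W_{t+1}=W_t$), checking that the OEP optimum on a disjoint union of nodes decomposes node-by-node, and confirming the arithmetic that turns the objective into $\sum v_i$ — is routine. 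Since \textsc{Knapsack} is NP-hard and the reduction is polynomial, \omp is NP-hard.
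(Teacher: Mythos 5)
Your proposal is correct and is essentially the paper's own argument: the paper also reduces from \textsc{Knapsack} under the assumption that all of $W_t$ is reusable in $W_{t+1}$, using a star-shaped DAG with a cheap dummy source node $n_0$ feeding one node per item, with $l_i = s_i$ and $c_i = p_i + 2s_i$ so that minimizing $T_M$ maximizes $\sum_{n_i\in M}(c_i - 2l_i) = \sum p_i$ under the storage budget. Your handling of the root-node technicality (adding a shared dummy parent) is exactly the paper's $n_0$ construction, so the two proofs coincide.
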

\techreport{\noindent See Appendix~\ref{sec:npProof} for a proof.}
\papertext{\noindent See the technical report~\cite{dorx2017} for a proof.}

\topic{Streaming constraint}
Even when $W_{t+1}$ is known, 
solving \omp{} optimally requires knowing the run time statistics for all operators,
which can be fully obtained only at the end of the workflow.
Deferring materialization decisions
until the end requires all intermediate results to be cached or recomputed,
which imposes undue pressure on memory and cripples performance.
Unfortunately, reusing statistics from past iterations as in Section~\ref{sec:oep}
is not viable here because of the cold-start problem---materialization decisions
need to be made for new operators based on realistic statistics.
Thus, to avoid slowing down execution with high memory usage, 
we impose the following constraint. 

\begin{definition} \label{def:ofs}
Given a \wf DAG $G_w = (N, E)$, $n_i \in N$ is {\em out-of-scope} at runtime 
if all children of $n_i$ have 
been computed or reloaded from disk,
thus removing all dependencies on $n_i$.
\end{definition}

\begin{constraint} \label{cons:stream}
Once $n_i$ becomes out-of-scope, it is either materialized immediately or removed from cache.
\end{constraint}

\topic{OMP Heuristics}
We now describe the heuristic employed by \name to approximate OMP 
while satisfying Constraint~\ref{cons:stream}. 
\begin{definition}
Given \wf DAG $G_w = (N, E)$, the {\em cumulative run time} for a node $n_i$ is defined as 
\begin{equation}
C(n_i) = t(n_i) + \sum\limits_{n_j \in ancestors(n_i)} t(n_j)
\end{equation}
where $t(n_i) = \indic{s(n_i) =S_c}c_i + \indic{s(n_i)=S_l}l_i$.
\end{definition}

\begin{algorithm}[t]
\papertext{\scriptsize}
 \caption{Streaming OMP}\label{algo:somp}
 \KwData{$G_w = (N, E), \{l_i\}, \{c_i\}, \{s_i\}, \text{storage budget } S$}
 $M \leftarrow \emptyset$\;
 \While{Workflow is running}{
     $O \leftarrow$ FindOutOfScope($N$)\;
     \For{$n_i \in O$} {
     		\If{$C(n_i) > 2l_i$ and $S - s_i \geq 0$}{
     		Materialize $n_i$\;
     		$M \leftarrow M \cup \{n_i\}$\;
     		$S \leftarrow S - s_i$
     		}
     }
 }
\end{algorithm}

\noindent Algorithm~\ref{algo:somp} shows the heuristics 
employed by \name's materialization optimizer
to decide what intermediate results to materialize.
In essence, Algorithm~\ref{algo:somp} decides to materialize 
if twice the load cost is less than the cumulative run time for a node.
The intuition behind this algorithm is that 
assuming loading a node allows all of its ancestors to be pruned,
the materialization time in iteration $t$ and the load time in iteration $t+1$ combined
should be less than the total pruned compute time, 
for the materialization to be cost effective.

\techreport{
Note that the decision to materialize does not depend on 
which ancestor nodes have been previously materialized. 
The advantage of this approach is that regardless 
of where in the workflow the changes are made,
the reusable portions leading up to the changes 
are likely to have an efficient execution plan. 
That is to say, if it is cheaper to load a reusable node $n_i$ than to recompute,
Algorithm~\mbox{\ref{algo:somp}} would have materialized $n_i$ previously,
allowing us to make the right choice for $n_i$.
Otherwise, Algorithm~\mbox{\ref{algo:somp}} would have materialized some ancestor $n_j$ of $n_i$
such that loading $n_j$ and computing everything leading to $n_i$ 
is still cheaper than loading $n_i$.}

Due to the streaming Constraint~\ref{cons:stream}, 
complex dependencies between descendants of ancestors 
such as the one between $n_5$ and $n_8$ in Figure~\ref{fig:reduction} 
previously described in Section~\ref{sec:oep},
are ignored by Algorithm~\ref{algo:somp}---we 
cannot retroactively update our decision for $n_5$
after $n_8$ has been run.
We show in Section~\ref{sec:experiments} 
that this simple algorithm is effective in multiple application domains.

\revision{\techreport{
\topic{Limitations of Streaming OMP}
The streaming OMP heuristic given in Algorithm~\ref{algo:somp}
can behave poorly in pathological cases. For one simple example,
consider a workflow given by a chain DAG of $m$ nodes, where node $n_i$ (starting from $i=1$)
is a prerequisite for node $n_{i+1}$. If node $n_i$ has $l_i = i$ and 
$c_i = 3$, for all $i$, then Algorithm~\ref{algo:somp} will choose to
materialize every node, which has storage costs of $\bigo{m^2}$, whereas
a smarter approach would only materialize later nodes and perhaps have
storage cost $\bigo{m}$. If storage is exhausted because Algorithm~\ref{algo:somp}
persists too much early on, this could easily lead to poor execution times
in later iterations. We did not observe this sort of pathological behavior
in our experiments.}
}

\revision{\techreport{
\topic{Mini-Batches}
In the stream processing (to be distinguished from the streaming constraint in Constraint~\ref{cons:stream}) where the input is divided into mini batches processed end-to-end independently, Algorithm~\ref{algo:somp} can be adapted as follows: 
1) make materialization decisions using the load and compute time for the first mini batch processed end-to-end;
2) reuse the same decisions for all subsequent mini batches for each operator.
This approach avoids dataset fragmentation that complicates reuse for different workflow versions.
We plan on investigating other approaches for adapting
\name for stream processing in future work.}
}

\subsection{Workflow DAG Pruning}
\label{sec:prune}
%!TEX root=p390-xin.tex

\label{sec:prune}
\papertext{
In addition to optimizations involving intermediate result reuse,
\name further reduces overall workflow execution 
time by pruning extraneous operators
from the \wf DAG. 
In a nutshell, \name traverses the DAG backwards from the output nodes 
and prunes away any nodes not visited in this traversal---a simple form
of {\em program slicing}~\cite{weiser1981program}.
\name provides two additional mechanisms 
for pruning operators other than using the lack of output dependency:
1) user can explicitly specify nodes to be excluded in \lang for manual feature selection;
and 2) if the user desires it, \name can inspect relevant data to determine 
low-impact operators that can be removed without compromising the model performance.
We plan on investigating the latter extensively in future work.
}

\techreport{
In addition to optimizations involving intermediate result reuse,
\name further reduces overall workflow execution time by
time by pruning extraneous operators
from the \wf DAG. 

\name performs pruning
by applying program slicing on the \wf DAG.
In a nutshell, \name traverses the DAG backwards from the output nodes 
and prunes away any nodes not visited in this traversal.
Users can explicitly guide this process in the programming interface
through the \code{has\_extractors} and \code{uses} keywords, 
described in Table~\ref{tab:semantics}.
An example of an Extractor pruned in this fashion is 
\code{raceExt}(grayed out) in Figure~\ref{fig:wfEx}b), 
as it is excluded from the \code{rows has\_extractors} statement.
This allows users to conveniently perform manual feature selection using domain knowledge.

\name provides two additional mechanisms 
for pruning operators other than using the lack of output dependency, described next.
}

\techreport{
\topic{Data-Driven Pruning}
Furthermore, \name inspects relevant data 
to automatically identify operators to prune.
The key challenge in \textit{data-driven pruning} is 
data lineage tracking across the entire workflow.
For many existing systems, 
it is difficult to trace features in the learned model back to the operators that produced them.
To overcome this limitation, 
\name performs additional provenance bookkeeping 
to track the operators that led to each feature in the model
when converting DPR output to ML-compatible formats.
An example of data-driven workflow optimization enabled by this bookkeeping
is pruning features by model weights.
Operators resulting in features with zero weights can be pruned 
without changing the prediction outcome,
thus lowering the overall run time without compromising model performance.

Data-driven pruning is a powerful technique that can be extended to 
unlock the possibilities for many more impactful automatic workflow optimizations. 
Possible future work includes using this technique 
to minimize online inference time in large scale, high query-per-second settings
and to adapt the workflow online in stream processing.

\topic{Cache Pruning}
While Spark, the underlying data processing engine for \name, 
provides automatic data uncaching 
via a least-recently-used (LRU) scheme,
\name improves upon the performance 
by actively managing the set of data to evict from cache.
From the DAG, \name can detect 
when a node becomes out-of-scope.
Once an operator has finished running,
\name analyzes the DAG to uncache newly out-of-scope nodes.
Combined with the lazy evaluation order,
the intermediate results for an operator reside in cache
only when it is immediately needed for a dependent operator.

One limitation of this eager eviction scheme is 
that any dependencies undetected by \name,
such as the ones created in a UDF,
can lead to premature uncaching of DCs before they are truly out-of-scope.
The \code{uses} keyword in \lang, described in Table~\mbox{\ref{tab:semantics}}, 
provides a mechanism
for users to manually prevent this 
by explicitly declaring a UDF's dependencies on other operators.
In the future, we plan on providing automatic UDF dependency detection
via introspection.
}

\section{Empirical Evaluation}
\label{sec:experiments}
%!TEX root=p390-xin.tex

\begin{table*}[t!]
\centering
\scriptsize
\begin{tabular}{r|c|c|c|c}

     & \textbf{Census} (Source: \cite{ddRepo}) & \textbf{Genomics} (Source: \cite{ren2017life}) &\textbf{IE} (Source: \cite{deepdive2016}) & \textbf{MNIST} (Source: \cite{sparks2016end}) \\
\hline
\hline
\textbf{Num. Data Source} & Single & Multiple & Multiple  & Single   \\
\hline
\textbf{Input to Example Mapping} &  
One-to-One & One-to-Many & One-to-Many & One-to-One  \\
\hline
\textbf{Feature Granularity} & Fine Grained & N/A & Fine Grained & Coarse Grained \\
\hline 
\textbf{Learning Task Type} & Supervised; Classification & Unsupervised & Structured Prediction & Supervised; Classification \\
\hline
\textbf{Application Domain} &Social Sciences & Natural Sciences & NLP & Computer Vision \\
\hline
\hline
\textbf{Supported by \name} & \checkmark  & \checkmark & \checkmark & \checkmark \\
\hline
\textbf{Supported by KeystoneML}  & \checkmark &  \checkmark  &  \cellcolor{black!25}& \checkmark * \\
\hline
\textbf{Supported by DeepDive}  & \checkmark *  & \cellcolor{black!25} & \checkmark * &\cellcolor{black!25}  \\
\hline
\end{tabular}
\vspace{4pt}
\caption{Summary of workflow characteristics and support by the systems compared.
Grayed out cells indicate that 
the system in the row does not support the workflow in the column.
$\checkmark*$ indicates that the implementation is by the original developers of DeepDive/KeystoneML.}
\label{tab:workflows}
\end{table*}

The goal of our evaluation is to test if
\name 
1) {\em supports} ML workflows in a variety of application domains;
2) {\em accelerates} iterative execution through intermediate result reuse, compared to other ML systems 
that don't optimize iteration;
3) is {\em efficient},
enabling optimal reuse without incurring a large storage overhead.

\subsection{Systems and Baselines for Comparison}

We compare the optimized version of \name, \opt, against two state-of-the-art ML 
workflow systems: KeystoneML~\cite{sparks2016end}, and DeepDive~\cite{zhang2015deepdive}.
In addition, we compare \opt 
with two simpler versions, \am and \nm.
While we compare against DeepDive, and KeystoneML 
to verify 1) and 2) above,
\am and \nm are used to verify 3).
We describe each of these variants below:

\vspace{1pt}
\noindent {\bf KeystoneML.}
KeystoneML~\cite{sparks2016end} is a system, written in Scala and built on top of Spark,
for the construction of large scale, end-to-end, ML pipelines.
KeystoneML specializes in classification tasks on structured input data.
No intermediate results are materialized in KeystoneML, 
as it does not optimize execution across iterations.

\vspace{1pt}
\noindent {\bf DeepDive.}
DeepDive~\cite{zhang2015deepdive,deepdive2016} is a system, 
written using Bash scripts and Scala for the main engine,
with a database backend, for the construction of 
end-to-end information extraction pipelines.
Additionally, DeepDive provides limited support for classification tasks.
All intermediate results are materialized in DeepDive.

\vspace{1pt}
\noindent {\bf \opt}. A version of \name that uses 
Algorithm~\ref{algo:oep} for the optimal reuse strategy
and Algorithm~\ref{algo:somp} to decide what to materialize.

\vspace{1pt}
\noindent {\bf \am}. A version of \name that uses 
the same reuse strategy as \opt
and {\em always materializes} all intermediate results.

\vspace{1pt}
\noindent {\bf \nm}. A version of \name that uses 
the same reuse strategy as \opt
and {\em never materializes} any intermediate results.

\subsection{Workflows}
We conduct our experiments using four real-world ML workflows 
spanning a range of application domains.
Table~\ref{tab:workflows} summarizes the characteristics of the four workflows, 
described next.
We are interested in four properties when characterizing each workflow:

\begin{denselist}
\item {\em Number of data sources:}
whether the input data comes from a single source 
(e.g., a CSV file) or multiple sources (e.g., documents and a knowledge base),
\revision{necessitating joins}.

\item {\em Input to example mapping:}
the mapping from each input data unit (e.g., a line in a file) 
to each learning example for ML.
One-to-many mappings require more complex \dataprep than one-to-one mappings.

\item {\em Feature granularity:}
fine-grained features involve 
applying extraction logic on a specific piece of the data (e.g., 2nd column)
and are often application-specific,
whereas coarse-grained features 
are obtained by applying an operation, 
usually a standard DPR technique such as normalization,
on the entire dataset.

\item {\em Learning task type:}
while classification and structured prediction tasks 
both fall under supervised learning for having observed labels,
structured prediction workflows involve more complex \dataprep and models;
unsupervised learning tasks do not have known labels,
so they often require more qualitative and fine-grained analyses of outputs.

\end{denselist}

\topic{Census Workflow}
\label{sec:census}
This workflow corresponds to a classification task 
with simple features from structured inputs
 from the DeepDive Github repository~\mbox{\cite{ddRepo}}.
It uses the Census Income dataset~\cite{Dua:2017},
with 14 attributes 
representing demographic information,
with the goal 
to predict whether a person's annual income is >50K,
using fine-grained features derived from input attributes.
The complexity of this workflow is representative of 
use cases in the social and natural sciences, 
where covariate analysis is conducted on well-defined variables.
\name code for the initial version of this workflow is shown in Figure~\ref{fig:wfEx}a).
\techreport{This workflow evaluates a system's efficiency 
in handling simple ML tasks with fine-grained feature engineering.}

\topic{Genomics Workflow}
This workflow is described in Example~\ref{ex:gene}, involving 
two major steps: 
1) split the input articles into words 
and learn vector representations for entities of interest,
identified by joining with a genomic knowledge base,
using word2vec~\cite{mikolov2013distributed};
2) cluster the vector representation of genes using K-Means
to identify functional similarity.
\techreport{Each input record is an article,
and it maps onto many gene names, which are training examples.}
This workflow has minimal \dataprep with no specific features
but involves multiple learning steps.
Both learning steps are unsupervised,
which leads to more qualitative and exploratory evaluations of the model 
outputs\techreport{ than the standard metrics used for supervised learning}.
We include a workflow with unsupervised learning and multiple learning steps
to verify that the system is able to accommodate variability in the learning task.

\topic{Information Extraction (IE) Workflow}
This workflow involves identifying mentions of spouse pairs
from news articles, using a knowledge-base of known spouse pairs, 
from DeepDive~\cite{deepdive2016}.
The objective is to extract structured information 
from unstructured input text, 
using complex fine-grained features such as part-of-speech tagging.
\techreport{Each input article contains $\geq 0$ spouse pairs, 
hence creating a one-to-many relationship between input records and learning examples.}
This workflow exemplifies use cases in information extraction, 
and tests a system's ability to handle joins and complex \dataprep.

\topic{MNIST Workflow}
The MNIST dataset~\cite{lecun1998gradient}
contains images of handwritten digits to be classified,
which is a well-studied task in the computer vision community,
\revision{from the KeystoneML~\mbox{\cite{sparks2016end}} evaluation}.
The workflow 
involves nondeterministic (and hence not reusable) \dataprep,
with a substantial fraction of the overall run time
spent on L/I in a typical iteration.
We include this application to ensure that 
in the extreme case where there is little reuse across iterations,
\name does not \revision{incur a large overhead}.

\smallskip
\noindent Each workflow was implemented in \name, 
and if supported, in DeepDive and KeystoneML,
with \checkmark * in Table~\ref{tab:workflows} indicating 
that we used an existing implementation by the developers of DeepDive or KeystoneML,
\papertext{
\revision{with scripts enumerated in~\cite{dorx2017}.}
}
\techreport{
which can be found at: 
\begin{denselist}
\item Census DeepDive: \url{https://github.com/HazyResearch/deepdive/blob/master/examples/census/app.ddlog}
\item IE DeepDive: \url{https://github.com/HazyResearch/deepdive/blob/master/examples/spouse/app.ddlog}
\item MNIST KeystoneML: \url{https://github.com/amplab/keystone/blob/master/src/main/scala/keystoneml/pipelines/images/mnist/MnistRandomFFT.scala}
\end{denselist}
}
DeepDive has its own DSL,
while KeystoneML's programming interface is an embedded DSL in Scala, similar to \lang.
We explain limitations that prevent DeepDive and KeystoneML from supporting certain workflows (grey cells) 
in Section~\mbox{\ref{sec:useCaseSupport}}.

\subsection{Running Experiments}

\topic{Simulating iterative development}
In our experiments, we modify the workflows to simulate typical iterative development
by a ML application developer or data scientist.
Instead of arbitrarily choosing operators to modify in each iteration,
we use the iteration frequency in \revision{Figure 3 from our literature study~\cite{xin2018developers}}
to determine the type of modifications to make in each iteration,
for the specific domain of each workflow.
We convert the iteration counts into fractions
that represent the likelihood of a certain type of change.
At each iteration, we draw an iteration type from \{DPR, L/I, PPR\} according to these likelihoods.
Then, we randomly choose an operator of the drawn type and modify its source code.
For example, if an ``L/I'' iteration were drawn, 
we might change the regularization parameter for the ML model.
We run 10 iterations per workflow (except NLP, which has only DPR iterations)\techreport{,
double the average iteration count found in our survey in Section~\ref{sec:overview-survey}}.

\techreport{Note that in real world use, 
the modifications in each iteration are entirely up to the user.  
\name is not designed to suggest modifications,
and the modifications chosen in our experiments 
are for evaluating only system run time and storage use.
We use statistics aggregated over $>100$ papers 
to determine the iterative modifications
in order to simulate behaviors of the {\em average domain expert}
more realistically than arbitrary choice.}

\topic{Environment}
All single-node experiments are run on a server with 
125 GiB of RAM, 
16 cores on 8 CPUs (Intel Xeon @ 2.40GHz),
and 2TB HDD with 170MB/s as both the read and write speeds.
\revision{Distributed experiments are run on nodes 
each with 64GB of RAM, 
16 cores on 8 CPUs (Intel Xeon @ 2.40GHz), 
and 500GB of HDD with 180MB/s as both the read and write speeds.}
We set the storage budget in \name to 10GB.
\revision{That is, 10GB is the maximum accumulated disk storage 
for \opt at all times during the experiments.}
After running the initial version to obtain the run time for iteration 0,
a workflow is modified according to the type of change determined as above.
In all \revision{four} systems the modified workflow is recompiled.
In DeepDive, we rerun the workflow using the command \code{deepdive run}.
In \name and KeystoneML, we resubmit a job to Spark in local mode.
\techreport{We use Postgres as the database backend for DeepDive.}
\revision{Although \name and KeystoneML support distributed execution
via Spark,
DeepDive needs to run on a single server.
Thus, we compare against all systems on a single node 
and additionally compare against KeystoneML on clusters.}

\subsection{Metrics}
\label{sec:metrics}
We evaluate each system's ability to support diverse ML tasks
by qualitative characterization of the workflows and use-cases 
supported by each system.
Our primary metric for workflow execution is \textit{cumulative run time}
over multiple iterations.
\techreport{The cumulative run time considers only the run time of the workflows,
not any human development time.
We measure with wall-clock time because it is the latency experienced by the user.}
When computing cumulative run times, we average the per-iteration 
run times over five complete runs for stability. 
Note that the per-iteration time measures both the time to execute the workflow
and any time spent to materialize intermediate results.
\revision{We also measure \textit{memory usage} \papertext{(results shown in the technical report)}
\techreport{to analyze the effect of batch processing, }
and} measure {\em storage size} 
to compare the run time reduction to storage ratio of time-efficient approaches.
\revision{
Storage is compared only for variants of \name 
since other systems do not support automatic reuse.}

\subsection{Evaluation vs. State-of-the-art Systems}

\subsubsection{Use Case Support}
\label{sec:useCaseSupport}
\frameme{
\name supports ML workflows in
multiple distinct application domains, 
spanning tasks with varying complexity in both supervised and unsupervised learning.
}

Recall that the four workflows used in our experiments are in 
social sciences, NLP, computer vision, and natural sciences, respectively.
Table~\ref{tab:workflows} lists the characteristics of each workflow
and the three systems' ability to support it.
Both KeystoneML and DeepDive have limitations 
that prevent them from supporting certain types of tasks.
The pipeline programming model in KeystoneML
is effective for large scale classification
and can be adapted to support unsupervised learning.
However, it makes fine-grained features cumbersome to program
and is not conducive to structured prediction tasks due to complex \dataprep.
On the other hand, DeepDive is highly specialized for information extraction
and focuses on supporting \dataprep. 
Unfortunately, its learning and evaluation components are not configurable by the user,
limiting the type of ML tasks supported.
DeepDive is therefore unable to support the MNIST and genomics workflows, 
both of which required custom ML models.
Additionally, we are only able to show DeepDive performance for DPR iterations 
for the supported workflows in our experiments.

\subsubsection{Cumulative Run Time}
\label{sec:cTime}

\frameme{
\name achieves up to $\mathbf{19\times}$ cumulative run time reduction in ten iterations 
over state-of-the-art ML systems. 
}

\begin{figure}[t]
\centering

	\begin{subfigure}[t]{0.22\textwidth}
	    \centering
    \caption{Census}
        \includegraphics[width=\textwidth]{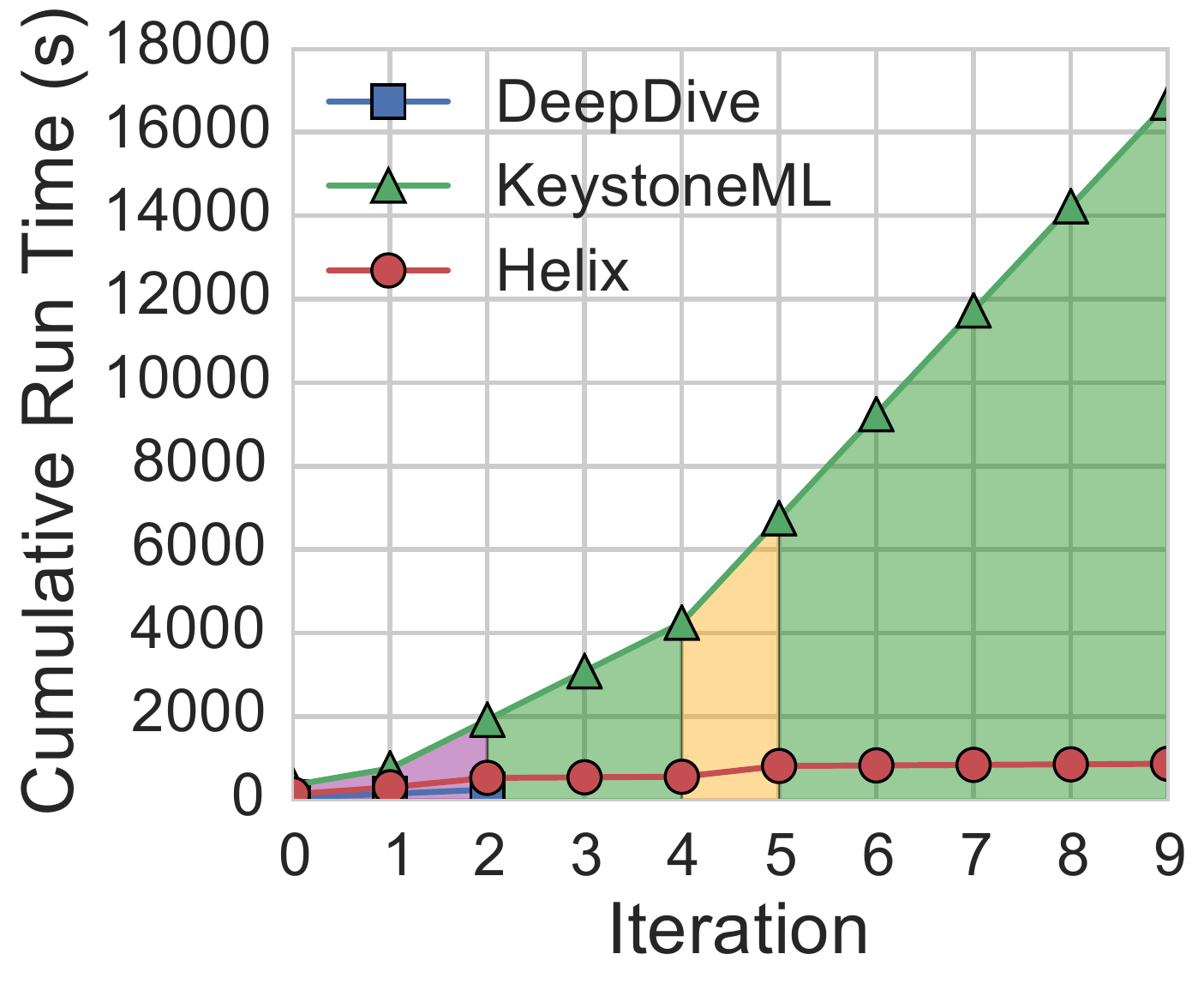}
        \label{fig:censusSys}
    \end{subfigure}
    \hfill
    \begin{subfigure}[t]{0.22\textwidth}
        \centering
    \caption{Genomics}
        \includegraphics[width=\textwidth]{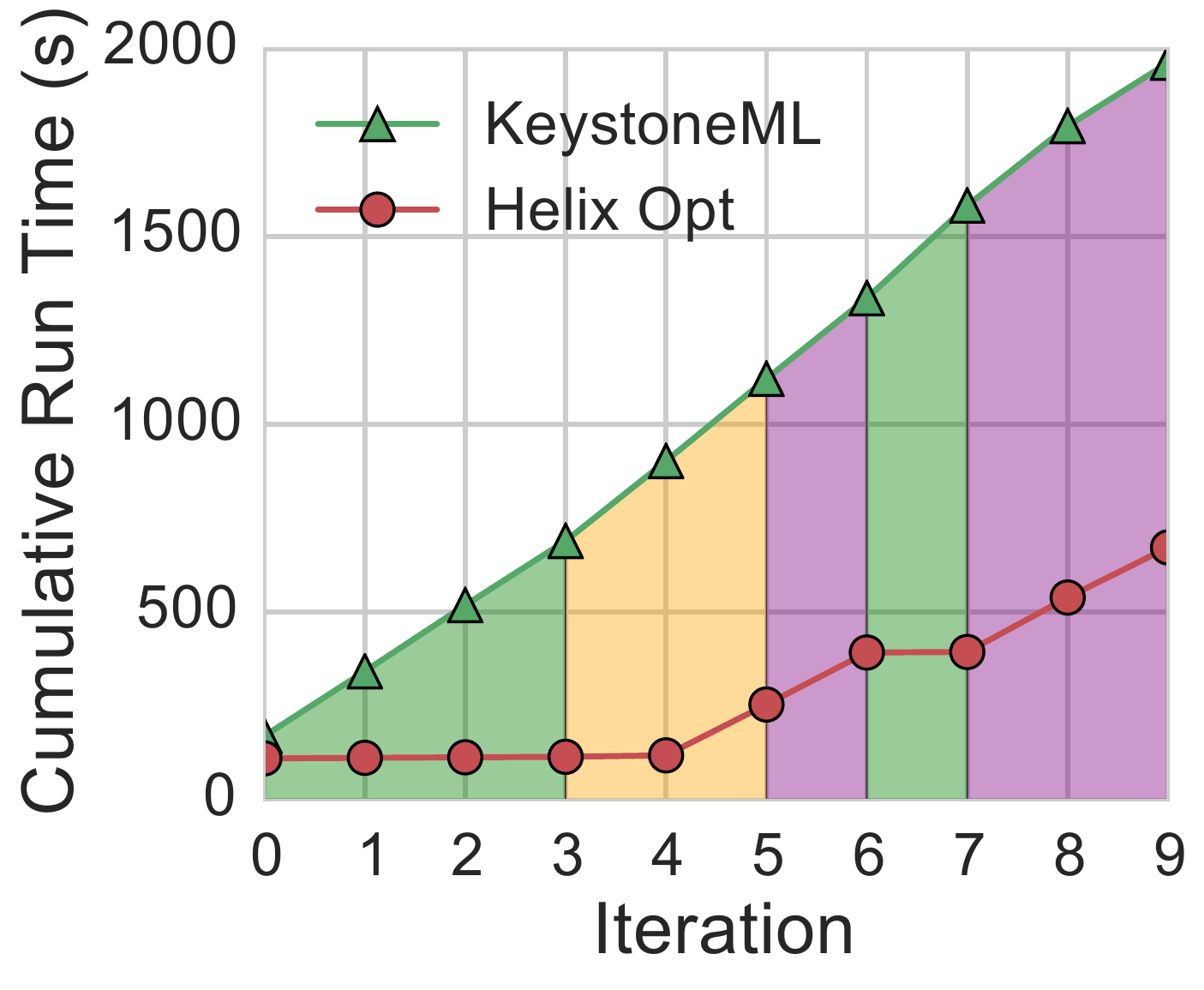}
        \label{fig:knowengSys}
    \end{subfigure}
    
    \begin{subfigure}[t]{0.22\textwidth}
        \centering
        \caption{NLP}
        \includegraphics[width=\textwidth]{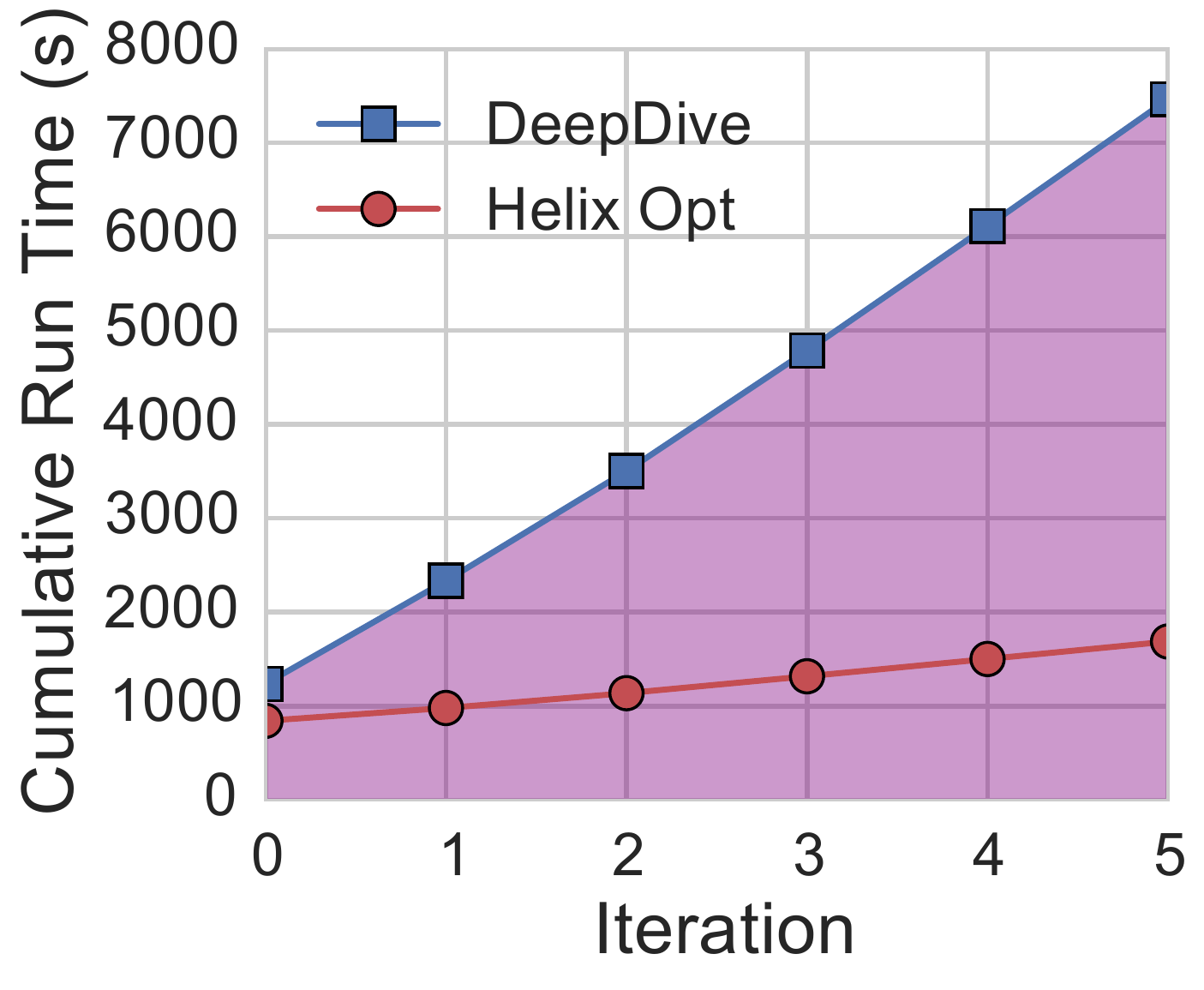}
        \label{fig:nlpSys}
    \end{subfigure}
    \hfill
    \begin{subfigure}[t]{0.22\textwidth}
        \centering
        \caption{MNIST}
        \includegraphics[width=\textwidth]{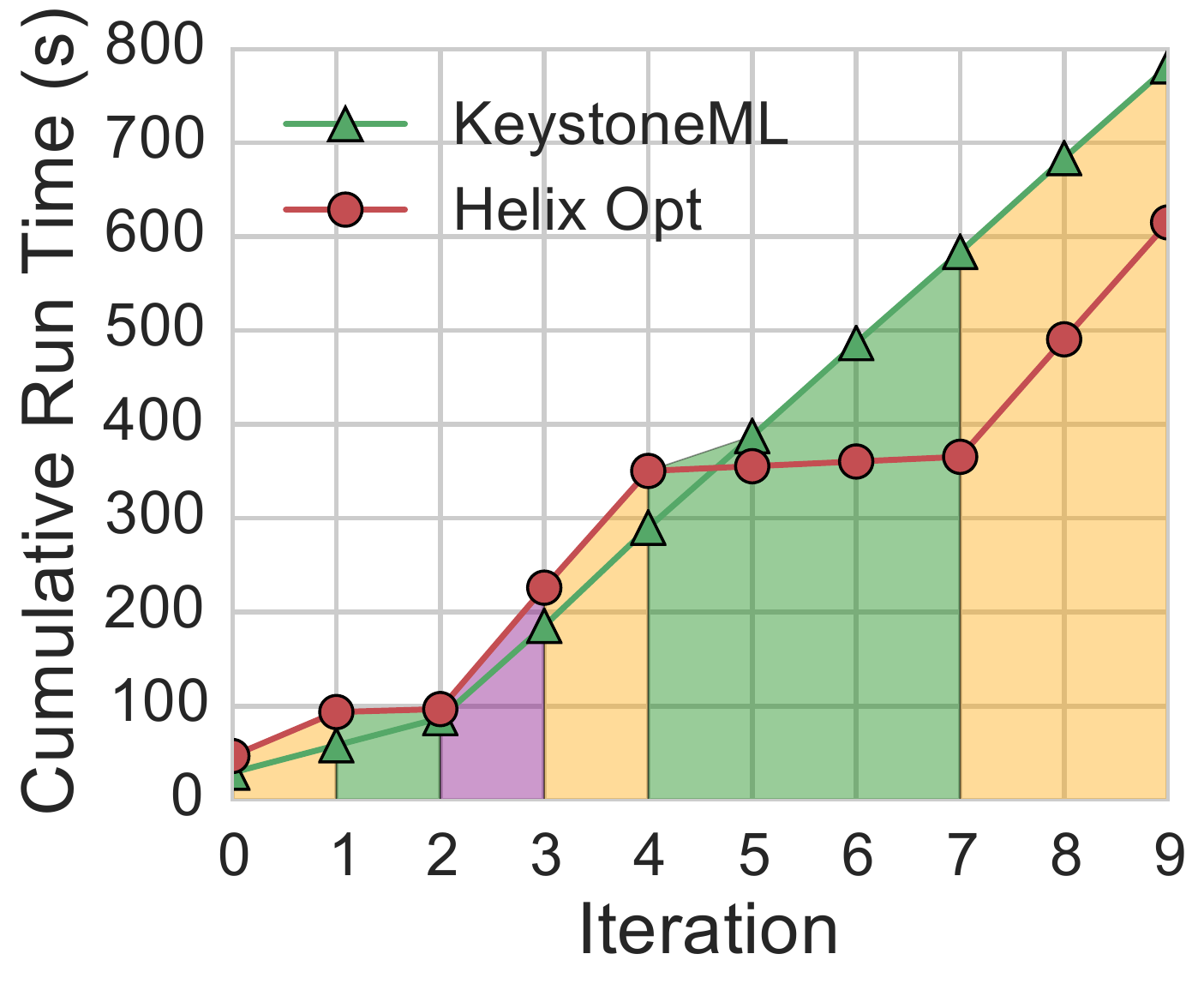}
        \label{fig:mnistSys}
    \end{subfigure}
\caption{Cumulative run time for the four workflows.
The color under the curve indicates the type of change in each iteration:
purple for DPR, orange for L/I, and green for PPR.
}
\label{fig:sysBaselines}
\end{figure}

\begin{figure}[t]
\centering

	\begin{subfigure}[t]{0.23\textwidth}
    \caption{Census}
        \includegraphics[width=\textwidth]{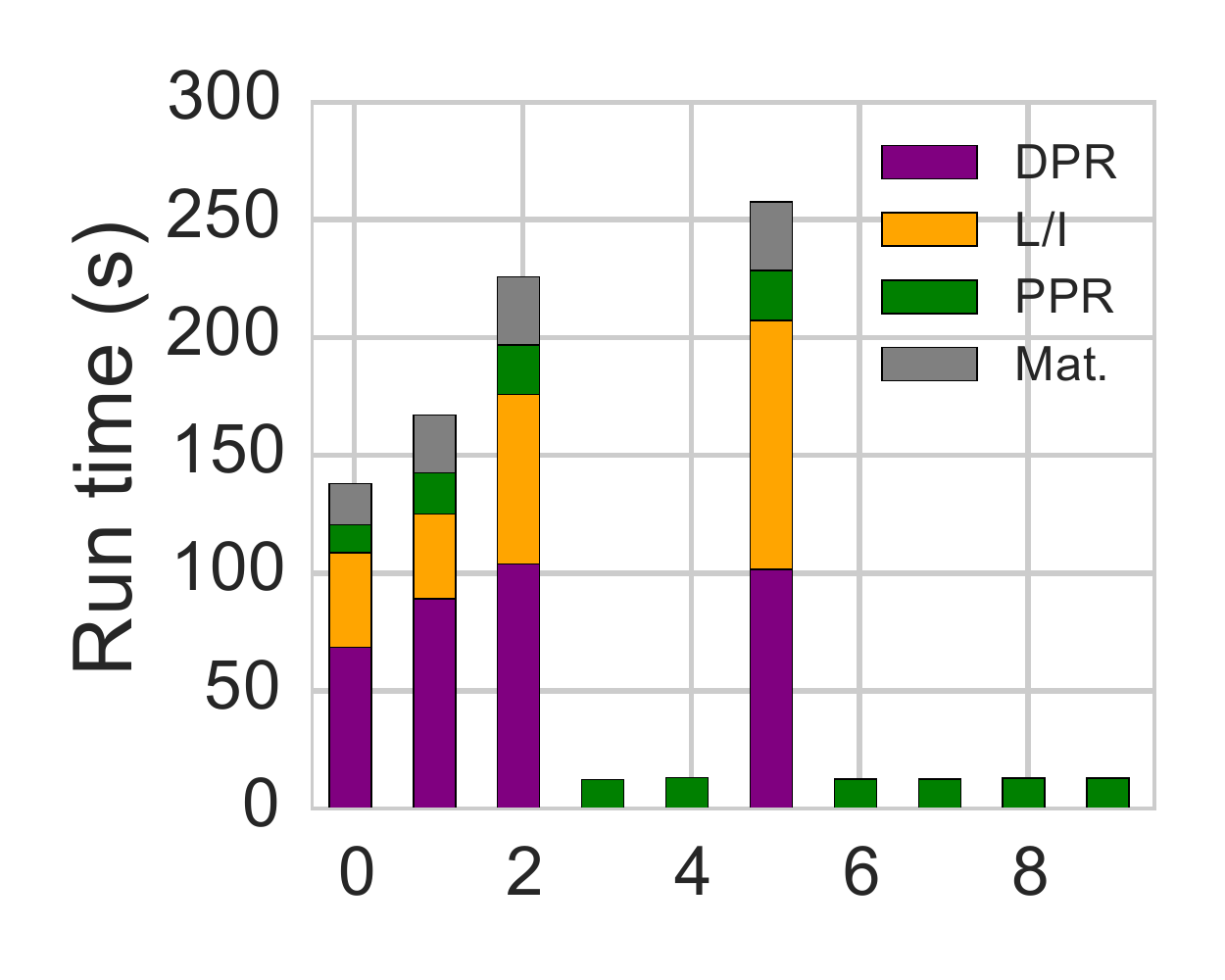}
        \label{fig:censusBreak}
    \end{subfigure}
    \hfill
    \begin{subfigure}[t]{0.23\textwidth}
    \caption{Genomics}
        \includegraphics[width=\textwidth]{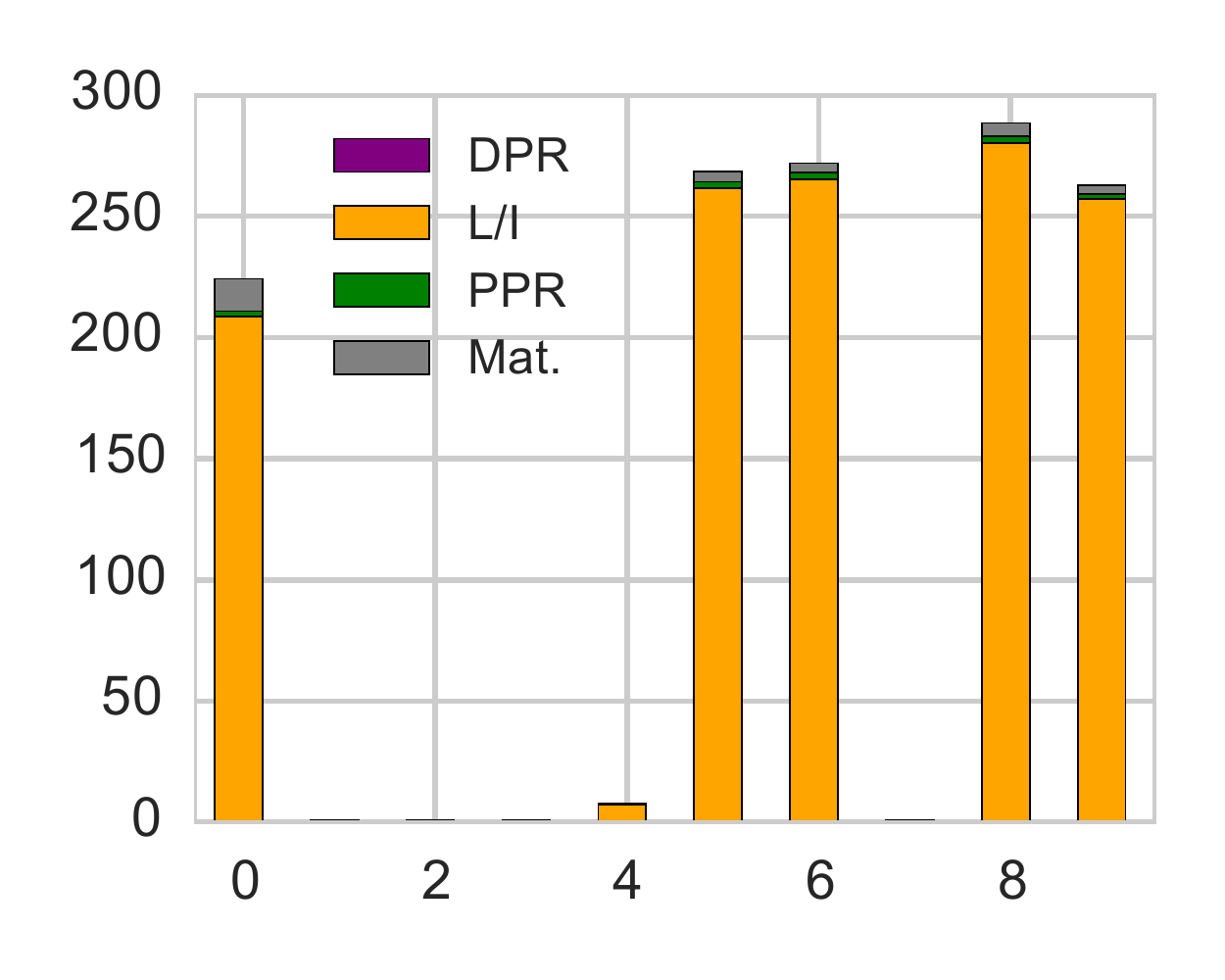}
        \label{fig:knowengBreak}
    \end{subfigure}
    
    \begin{subfigure}[t]{0.23\textwidth}
        \caption{NLP}
        \includegraphics[width=\textwidth]{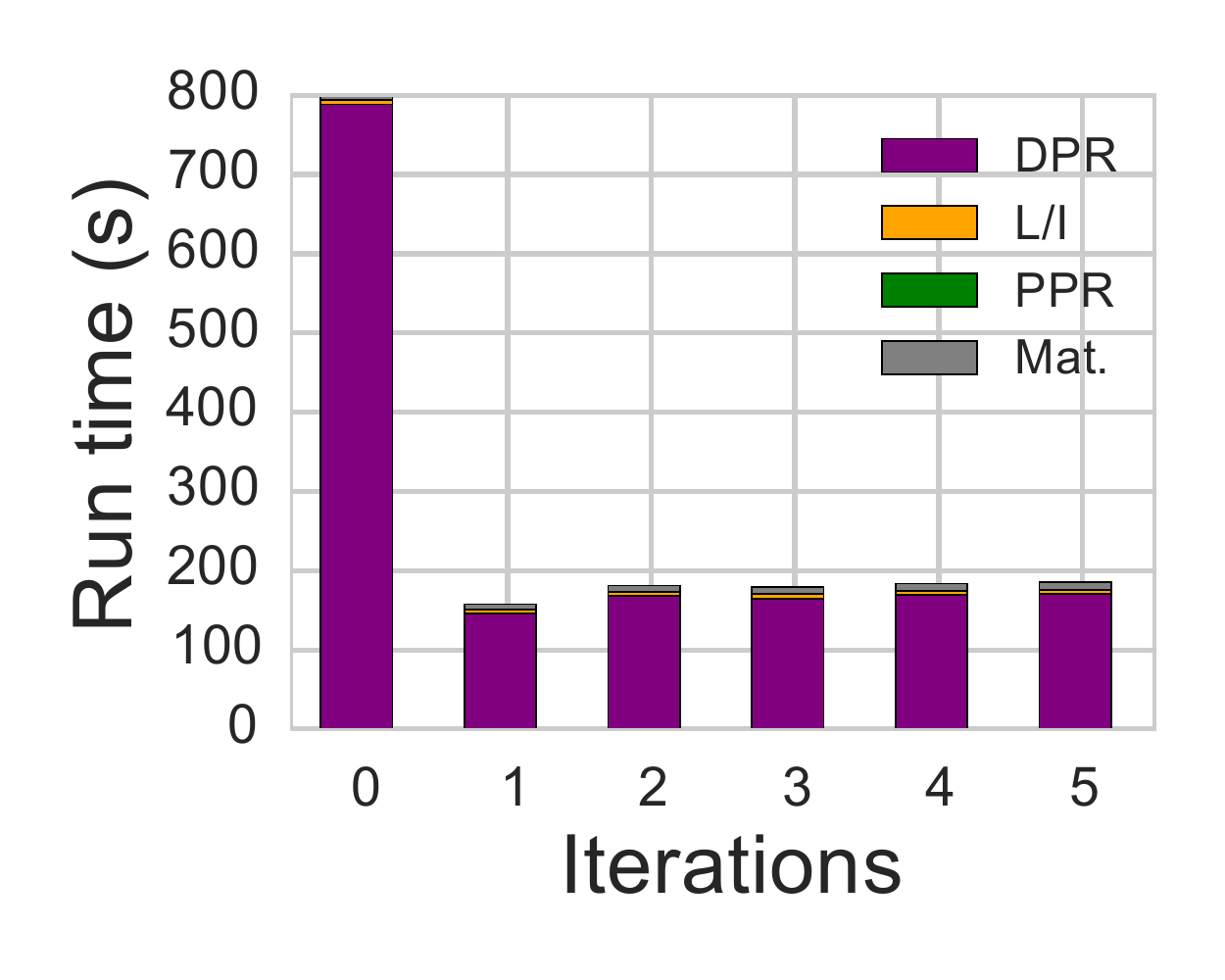}
        \label{fig:nlpBreak}
    \end{subfigure}
    \hfill
    \begin{subfigure}[t]{0.23\textwidth}
        \caption{MNIST}
        \includegraphics[width=\textwidth]{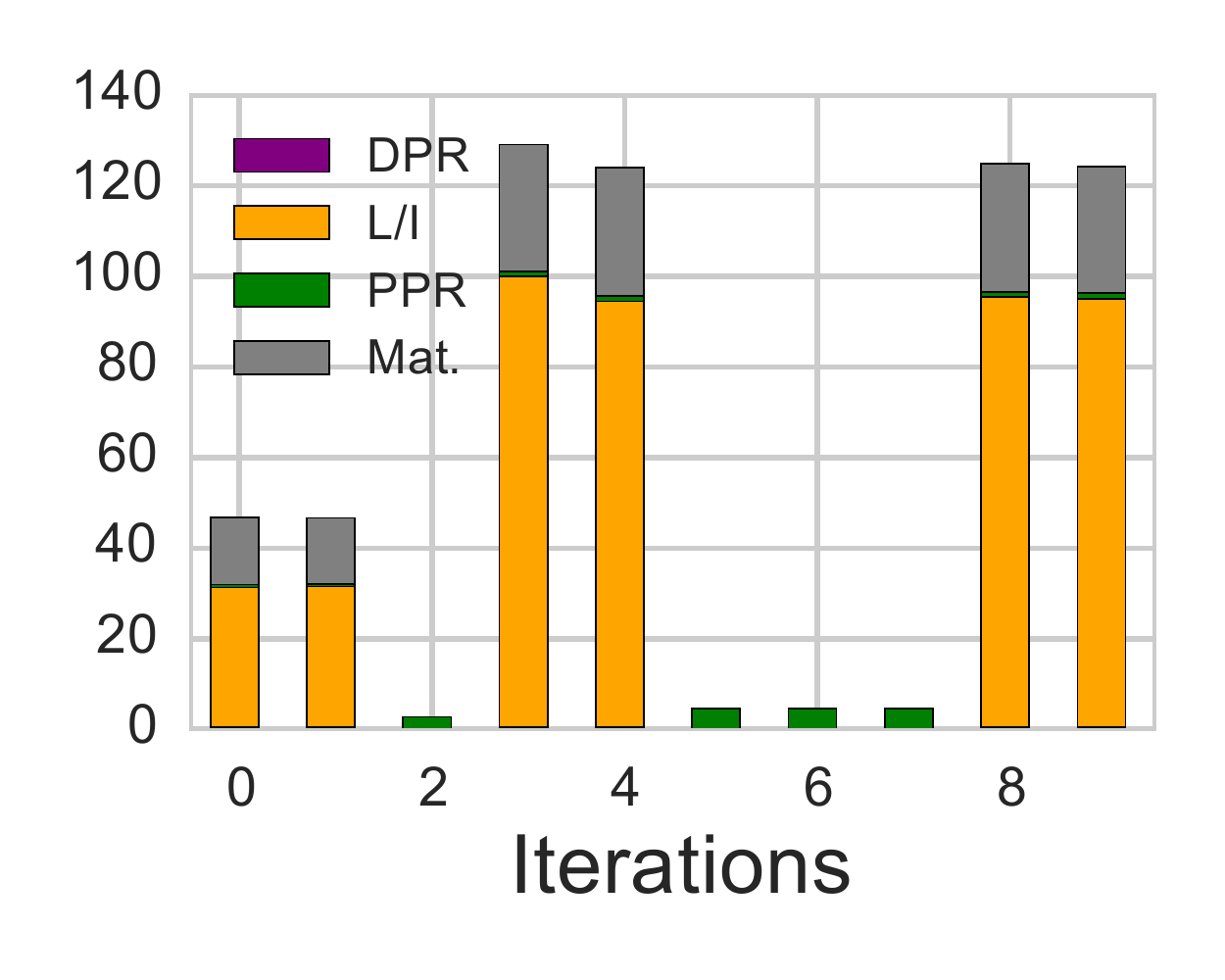}
        \label{fig:mnistBreak}
    \end{subfigure}
\caption{Run time breakdown by workflow component and materialization time per iteration for \name.}
\label{fig:breakDown}
\end{figure}

Figure~\ref{fig:sysBaselines}
shows the cumulative run time 
for all four workflows. 
The x-axis shows the iteration number, while
the y-axis shows the cumulative run time in log scale at the $i$th iteration.
Each point represents the cumulative run time of the first $i$ iterations. 
The color under the curve indicates
the workflow component modified in each iteration 
(purple = DPR, orange = L/I, green = PPR).
For example, the DPR component was modified 
in the first iteration of Census. 
\revision{Figure~\ref{fig:breakDown} shows the breakdown by workflow components and materialization for the individual iteration run times in \name,
with the same color scheme as in Figure~\ref{fig:sysBaselines} for the workflow components and gray for materialization time.}

\topic{Census}
As shown in Figure~\ref{fig:censusSys}, 
the census workflow has the largest cumulative run time gap 
between \opt and the competitor systems---{\em \opt
is $\mathbf{19\times}$ faster than KeystoneML 
as measured by cumulative run time over 10 iterations.}
By materializing and reusing intermediate results \opt is able
to substantially reduce cumulative run-time relative to other systems. 
\revision{Figure~\ref{fig:censusBreak} shows that 
1) on PPR iterations \name recomputes only the PPR;
2) the materialization of L/I outputs, which allows the pruning of DPR and L/I in PPR iterations, 
takes considerably less time than the compute time for DPR and L/I;
3) \opt reruns DPR in iteration 5 (L/I) because \opt avoided materializing the large DPR output in a previous iteration.}
For the first three iterations,
which are DPR (the only type of iterations DeepDive supports),
the $2\times$ reduction between \opt and DeepDive 
is due to the fact that DeepDive does \dataprep with Python and shell scripts,
while \opt uses Spark.
While both KeystoneML and \opt use Spark, 
\revision{KeystoneML takes longer on DPR and L/I iterations than\opt 
due to a longer L/I time 
incurred by its caching optimizer's failing to cache the training data for learning.
}
\techreport{The dominant number of PPR iterations for this workflow 
reflects the fact that users in the social sciences 
conduct extensive fine-grained analysis of results, 
per our literature survey~\cite{xin2018developers}. }

\topic{Genomics}
In Figure~\ref{fig:knowengSys}, \opt shows a $\mathbf{3\times}$ speedup
over KeystoneML for the genomics workflow.
The materialize-nothing strategy in KeystoneML 
clearly leads to no run time reduction in subsequent iterations.
\opt, on the other hand, shows a per-iteration run time 
that is proportional to the number of operators 
affected by the change in that iteration.
\revision{Figure~\ref{fig:knowengBreak} shows that 
1) in PPR iterations \opt has near-zero run time, enabled by a small materialization time in the prior iteration;
2) one of the ML models takes considerably more time, 
and \opt is able to prune it in iteration 4 since it is not changed.
}

\topic{NLP}
Figure~\ref{fig:nlpSys} shows that the cumulative run time for both DeepDive and \opt
increases linearly with iteration for the NLP workflow,
but at a much higher rate for DeepDive than \opt.
This is due to the lack of automatic reuse in DeepDive.
The first operator in this workflow is a time-consuming NLP parsing operator,
whose results are reusable for all subsequent iterations.
While both DeepDive and \opt materialize this operator in the first iteration,
DeepDive does not automatically 
reuse the results.
\revision{\opt, on the other hand, consistently prunes this NLP operation in all subsequent iterations,
as shown in Figure~\ref{fig:nlpBreak}, leading to large run time reductions 
in iterations 1-5 and thus a large cumulative run time reduction.
}

\topic{MNIST}
Figure~\ref{fig:mnistSys} shows the cumulative run times for the MNIST workflow.
As mentioned above, the MNIST workflow has nondeterministic \dataprep,
which means any changes to the DPR and L/I components 
prevents safe reuse of any intermediate result.
However, iterations containing only PPR changes 
can reuse intermediates for DPR and L/I
had they been materialized previously. 
Furthermore, we found that the DPR run time is short
but cumulative size of all DPR intermediates is large.
Thus, materializing all these DPR intermediates would incur a large run time overhead.
KeystoneML, which does not materialize any intermediate results,
shows a linear increase in cumulative run time due to no reuse.
\opt, on the other hand, 
only shows slight increase in runtime over KeystoneML for DPR and L/I iterations
because it is only materializing the L/I results on these iterations,
not the nonreusable, large DPR intermediates.
\revision{In Figure~\ref{fig:mnistBreak}, we see 
1) DPR operations take negligible time, and \opt avoids wasteful materialization of their outputs;
2) }
the materialization time taken in the DPR and L/I iterations 
pays off for \opt in PPR iterations, 
which take negligible run time due to reuse.

\begin{figure}[t]
\centering

	\begin{subfigure}[t]{0.22\textwidth}
    \caption{}
        \includegraphics[width=\textwidth]{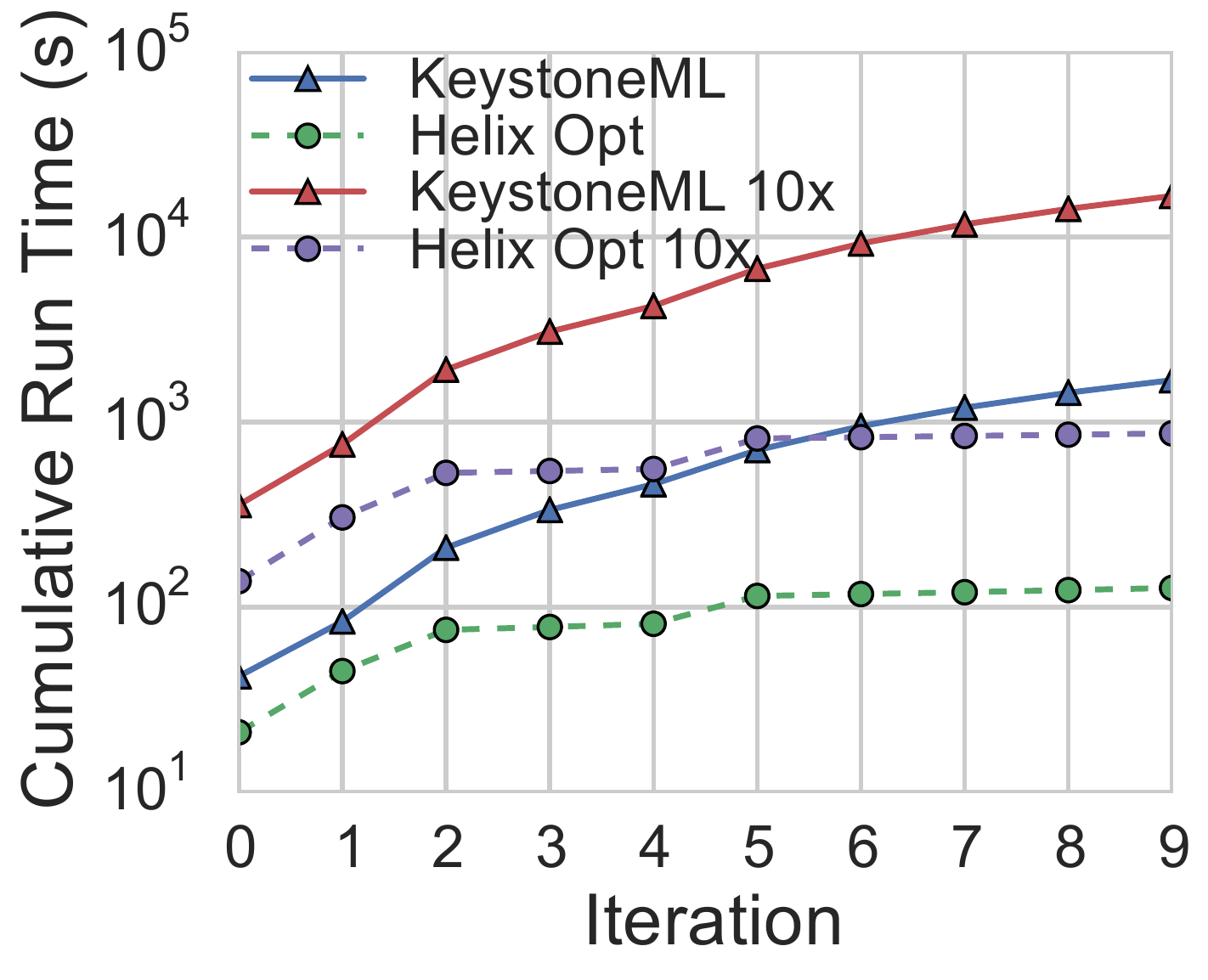}
        \label{fig:censusKSBreak}
    \end{subfigure}
    \hfill
    \begin{subfigure}[t]{0.22\textwidth}
    \caption{}
        \includegraphics[width=\textwidth]{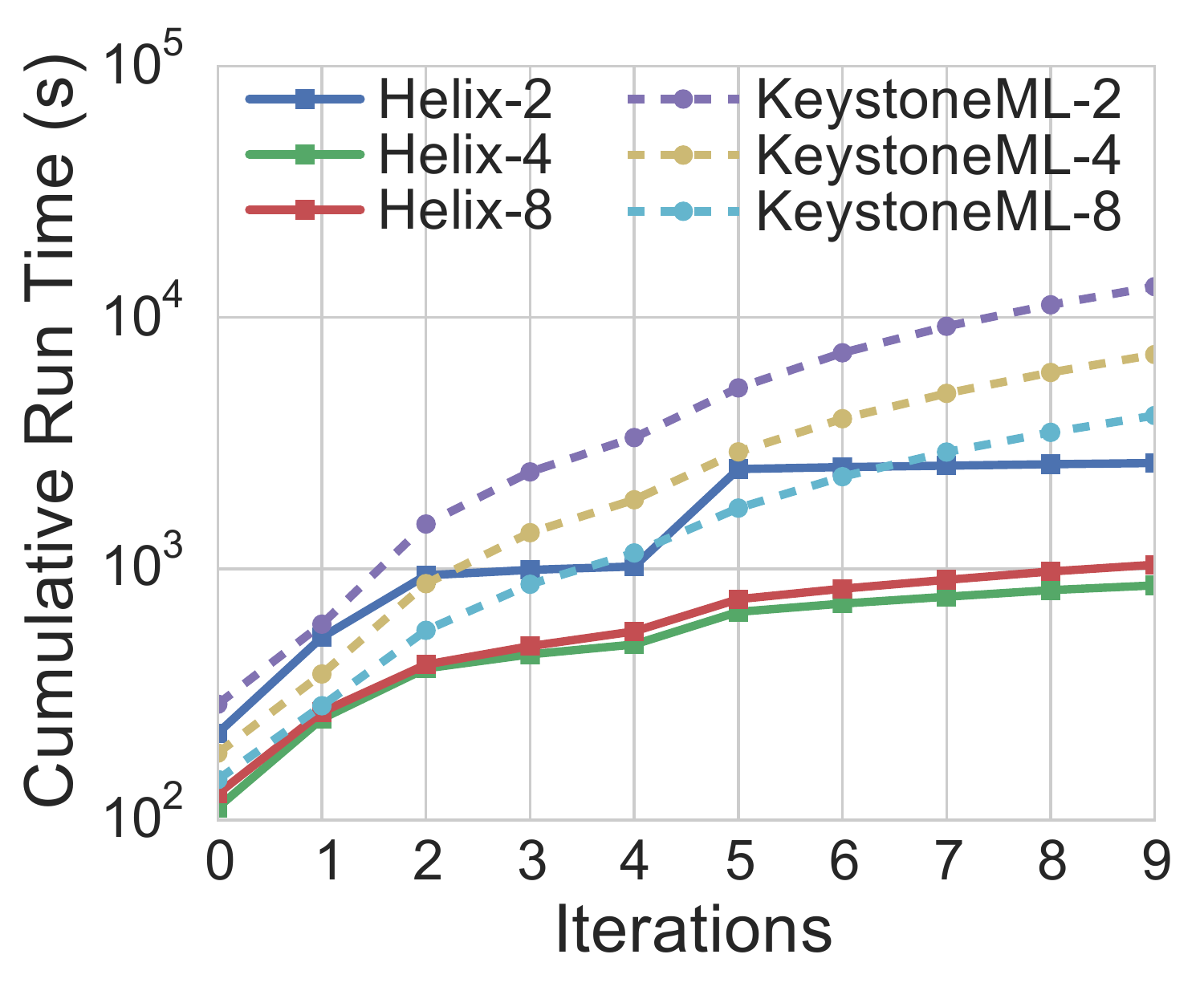}
        \label{fig:twoX}
    \end{subfigure}
\caption{
\revision{a) Census and Census 10x cumulative run time for \name and KeystoneML on a single node; b) Census 10x cumulative run time for \name and KeystoneML on different size clusters.}
}
\label{fig:scale}
\end{figure}

\revision{
\subsubsection{Scalability vs. KeystoneML}
\label{sec:scale}
}

\begin{figure}[h]
\centering
\includegraphics[width=0.45\textwidth]{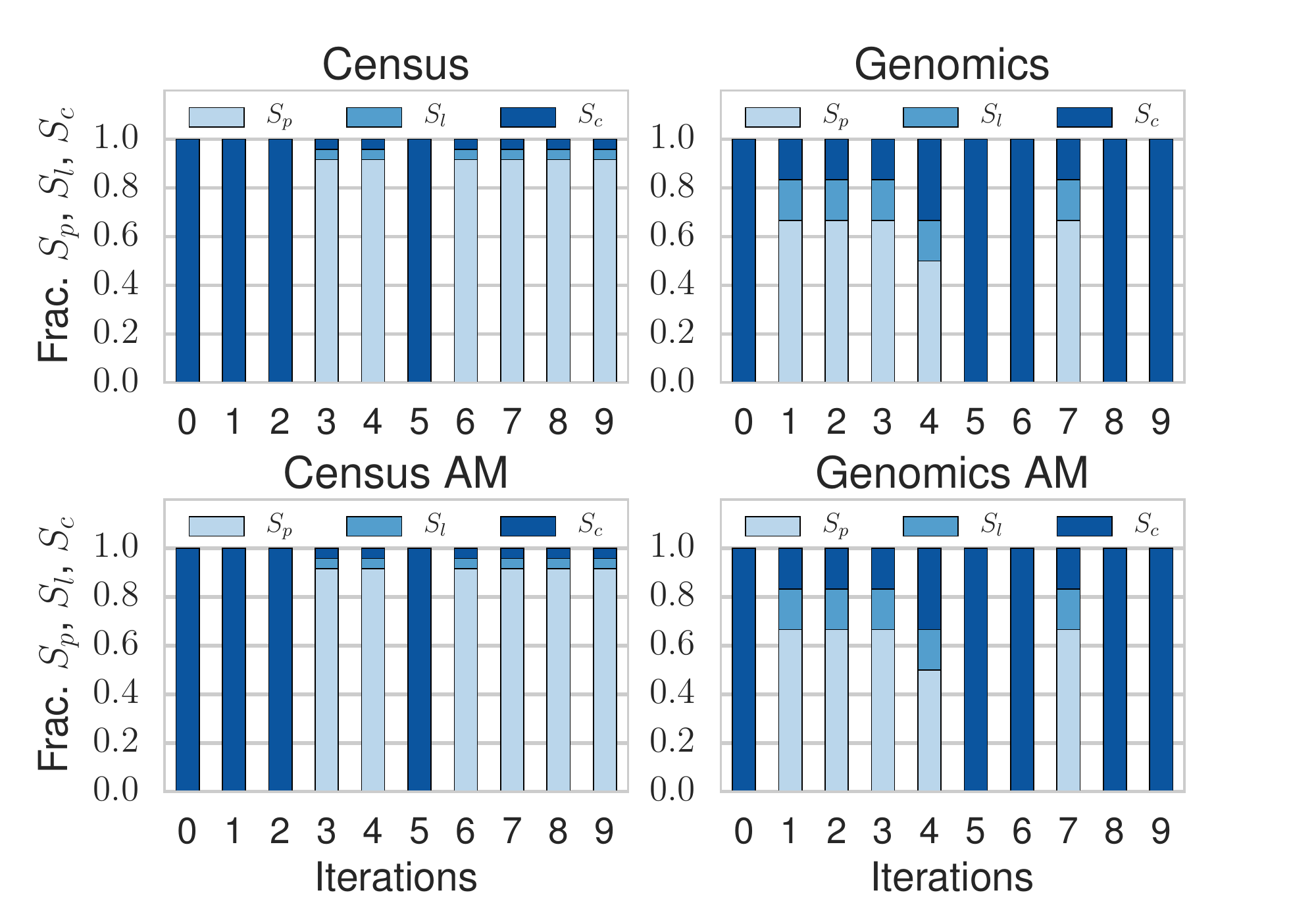}
\caption{\revision{Fraction of states in $S_p, S_l, S_c$ as determined by Algorithm~\ref{algo:oep} for the Census and Genomics workflows for \opt and \am.}}
\label{fig:frac}
\end{figure}

\topic{\revision{Dataset Size}}
\revision{We test scalability of \name and KeystoneML with respect to dataset size 
by running the ten iterations in Figure~\ref{fig:censusSys} of the Census Workflow
on two different sizes of the input. 
Census 10x is obtained by replicating Census ten times in order to preserve the learning objective.
Figure~\ref{fig:censusKSBreak} shows run time performance of \name and KeystoneML on the two datasets on a single node.
Both yield 10x speedup over the smaller dataset, scaling linearly with input data size, but \name continues to dominate KeystoneML.}

\topic{\revision{Cluster}}
\revision{
We test scalability of \name and KeystoneML with respect to cluster size by running the same ten iterations in Figure~\ref{fig:censusSys} on Census 10x described above. 
Using a uniform set of machines, we create clusters with 2, 4, and 8 workers
and run \name and KeystoneML on each of these clusters to collect cumulative run time.}

\revision{Figure~\ref{fig:twoX} shows that 
1) \name has lower cumulative run time than KeystoneML on the same cluster size, 
consistent with the single node results;
2) KeystoneML achieves $\approx 45\%$ run time reduction when the number of workers is doubled, scaling roughly linearly with the number of workers;
3) From 2 to 4 workers, \name achieves up to $75\%$ run time reduction
4) From 4 to 8 workers, \name sees a slight increase in run time.
Recall from Section~\ref{sec:interface} that the semantic unit data structure in \lang 
allows multiple transformer operations 
(e.g., indexing, computing discretization boundaries) to be learned using a single pass over the data via loop fusion.
This reduces the communication overhead in the cluster setting, 
hence the super linear speedup in 3). 
On the other hand, the communication overhead for PPR operations outweighs the benefits of distributed computing, hence the slight increase in 4).
}

\subsection{Evaluation vs. Simpler {\large \name} Versions}
\label{sec:hBaselines}

\begin{figure}[t]
\centering
    
    \begin{subfigure}[t]{0.22\textwidth}
    \centering
        \caption{Census}
        \includegraphics[width=\textwidth]{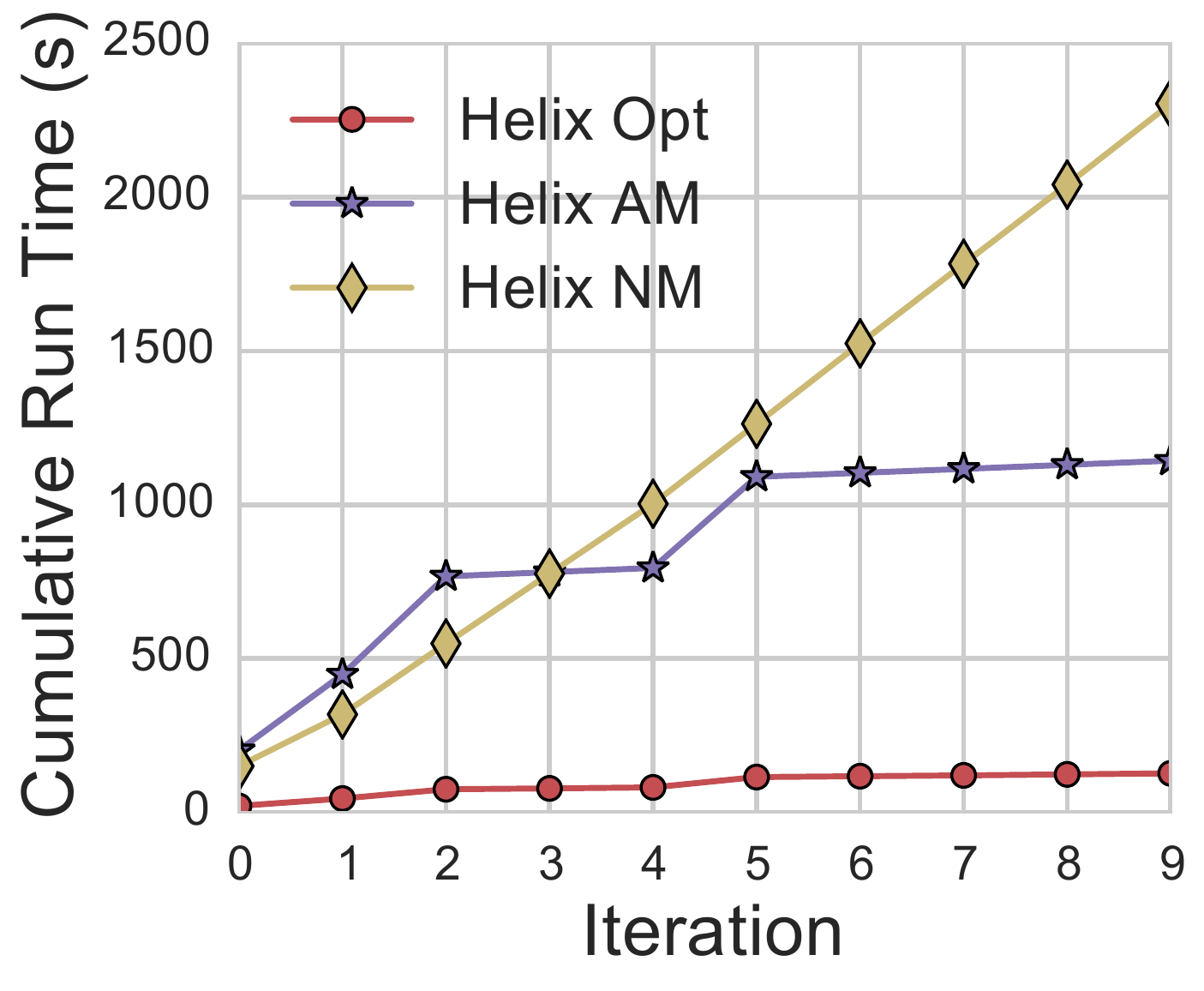}
        \label{fig:censusH}
    \end{subfigure}
    \hfill
    \begin{subfigure}[t]{0.22\textwidth}
        \centering
    \caption{Genomics}
        \includegraphics[width=\textwidth]{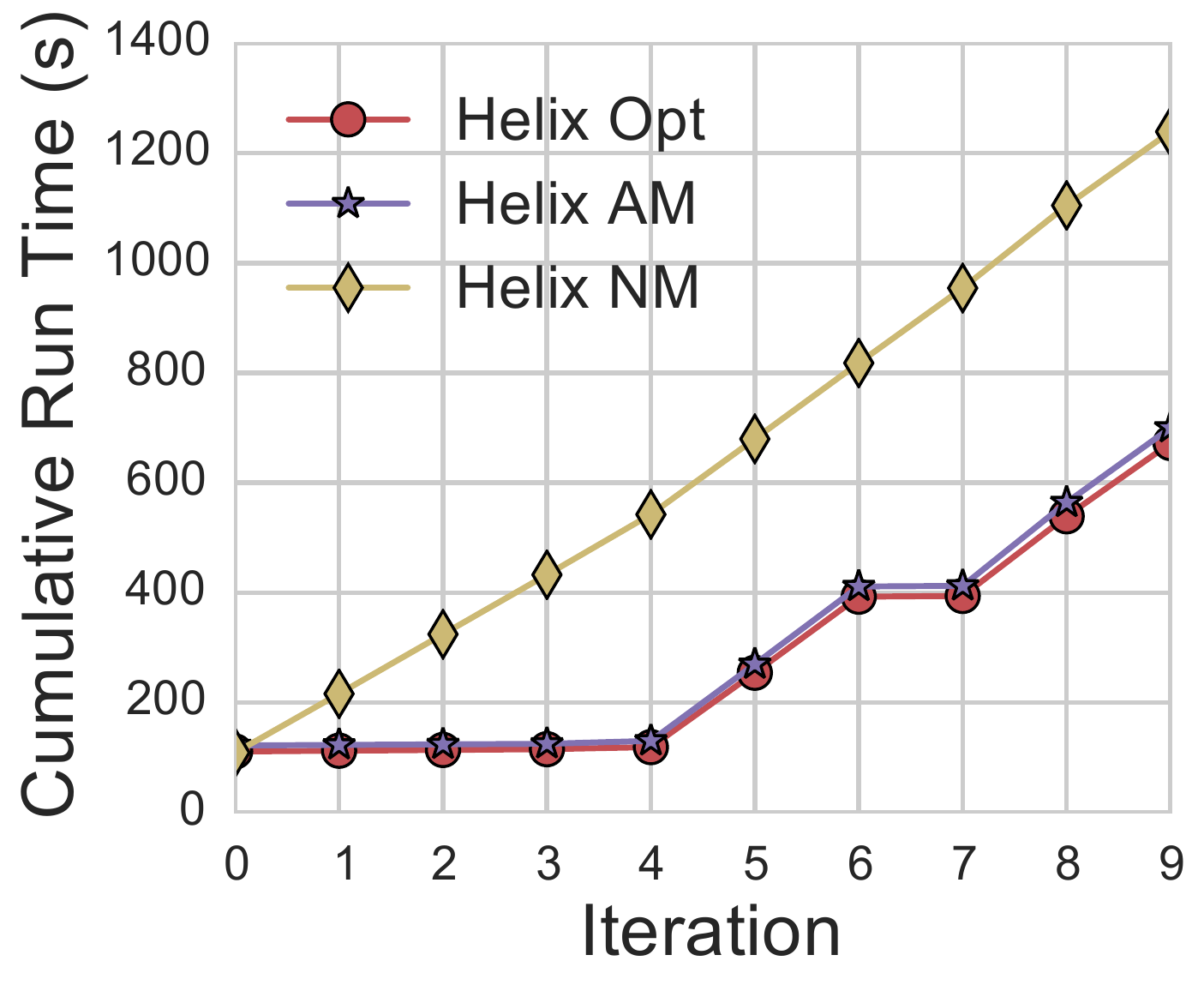}
        \label{fig:knowengH}
    \end{subfigure}
    
    \begin{subfigure}[t]{0.22\textwidth}
        \centering
        \caption{Census Storage}
        \includegraphics[width=\textwidth]{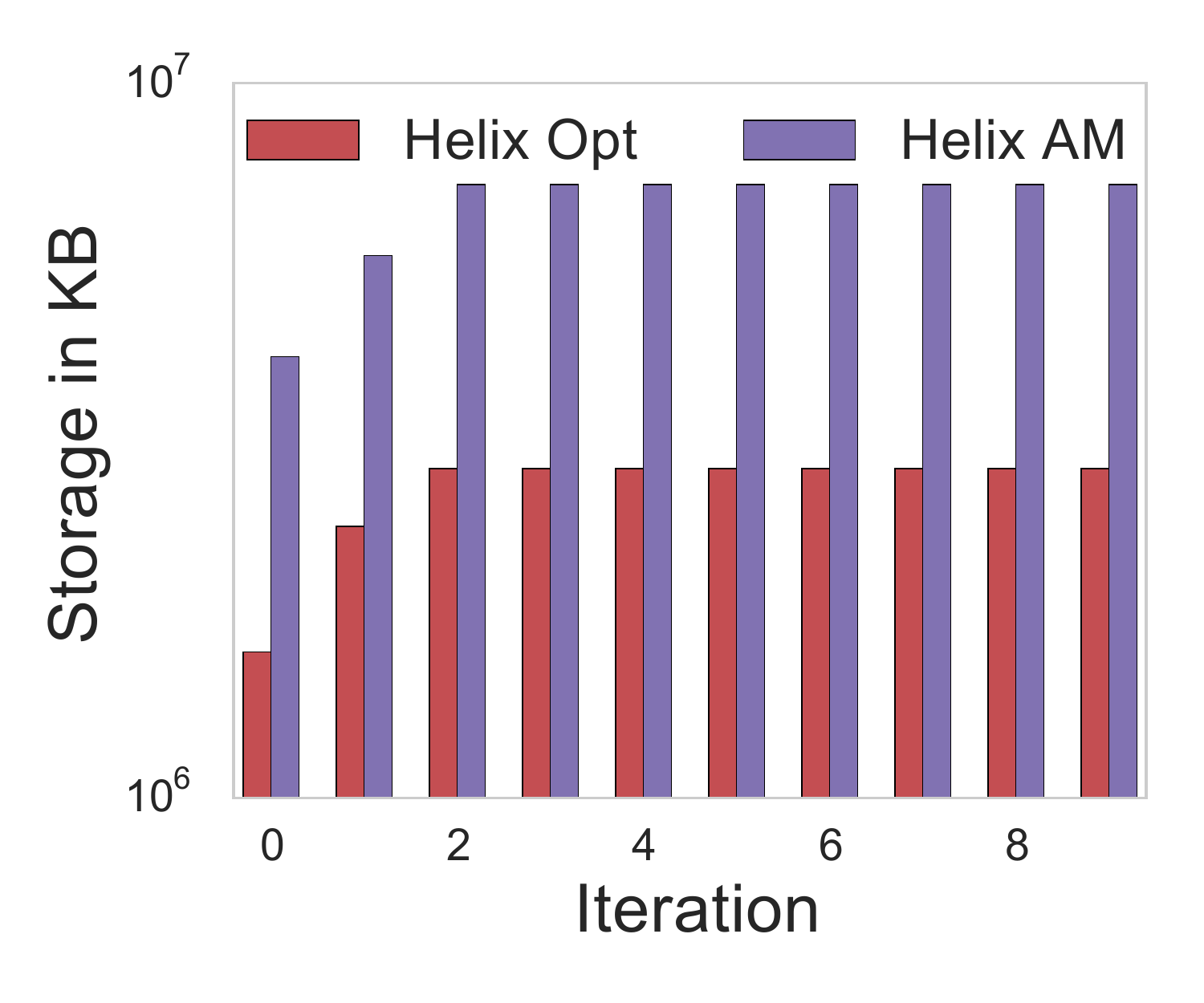}
        \label{fig:censusStoreH}
    \end{subfigure}
    \hfill
    \begin{subfigure}[t]{0.22\textwidth}
        \centering
    \caption{Genomics Storage}
        \includegraphics[width=\textwidth]{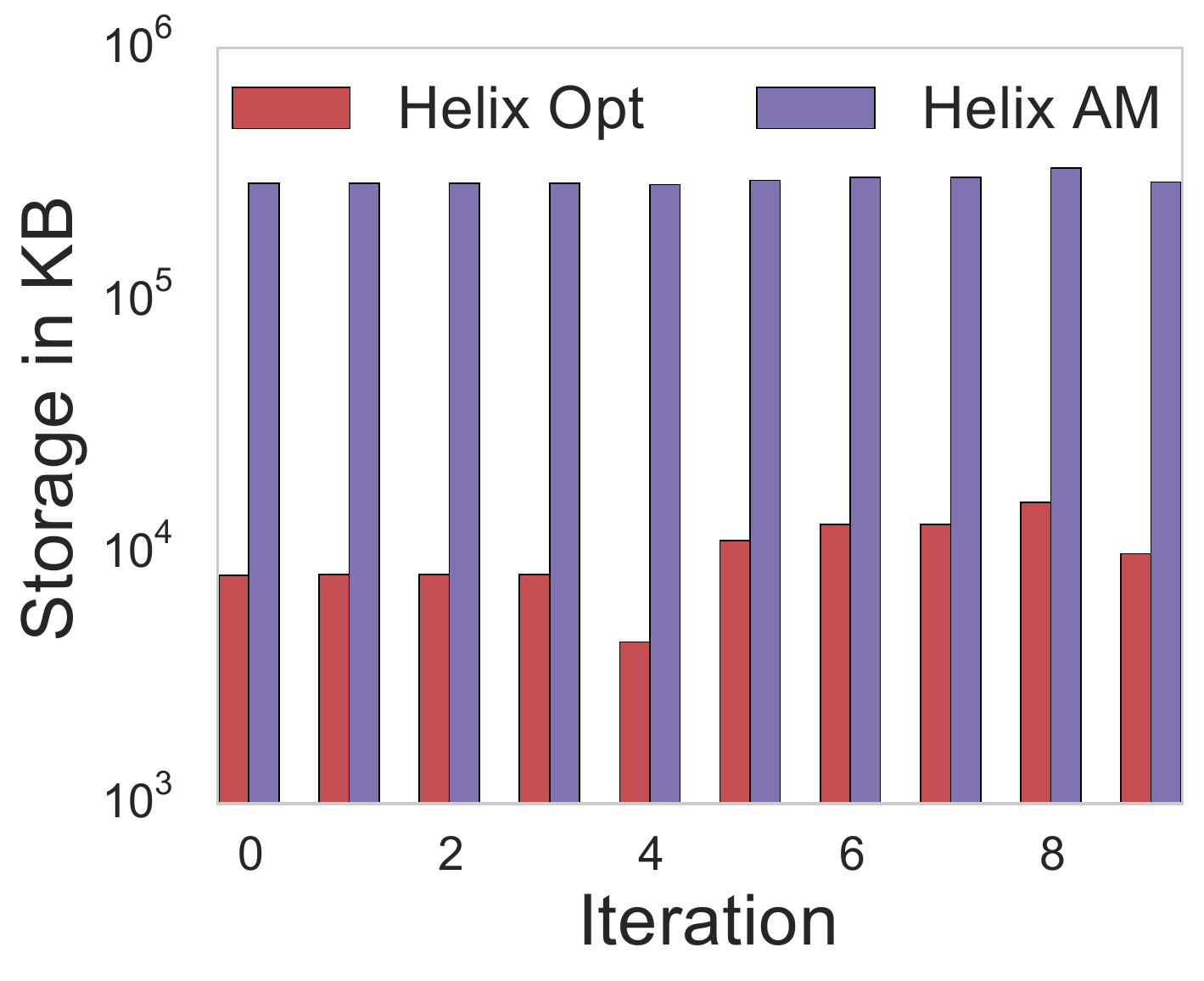}
        \label{fig:knowengStoreH}
    \end{subfigure}

    \begin{subfigure}[t]{0.22\textwidth}
    \caption{NLP}
        \includegraphics[width=\textwidth]{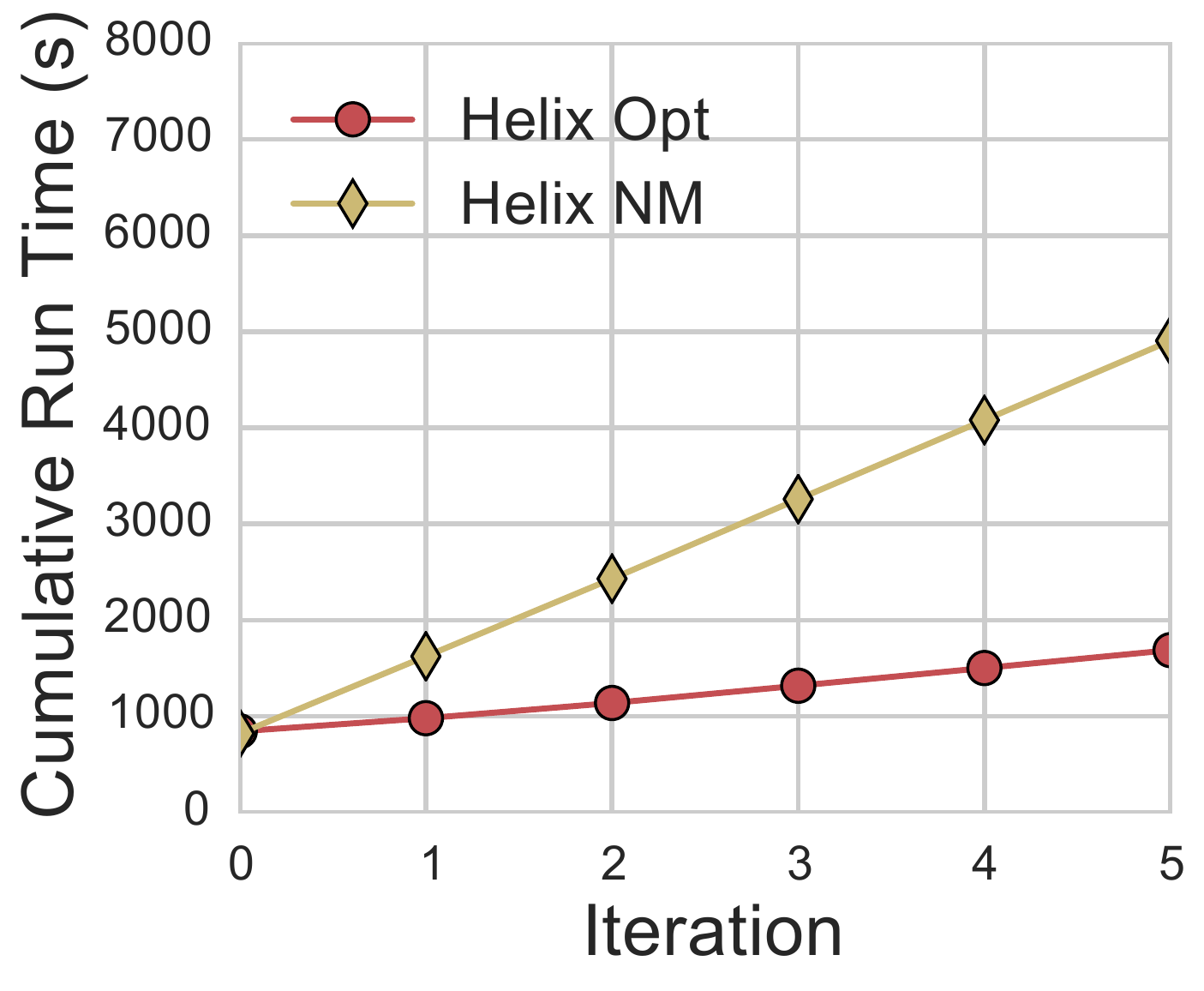}
        \label{fig:nlpH}
    \end{subfigure}
    \hfill
    \begin{subfigure}[t]{0.22\textwidth}
        \caption{MNIST}
        \includegraphics[width=\textwidth]{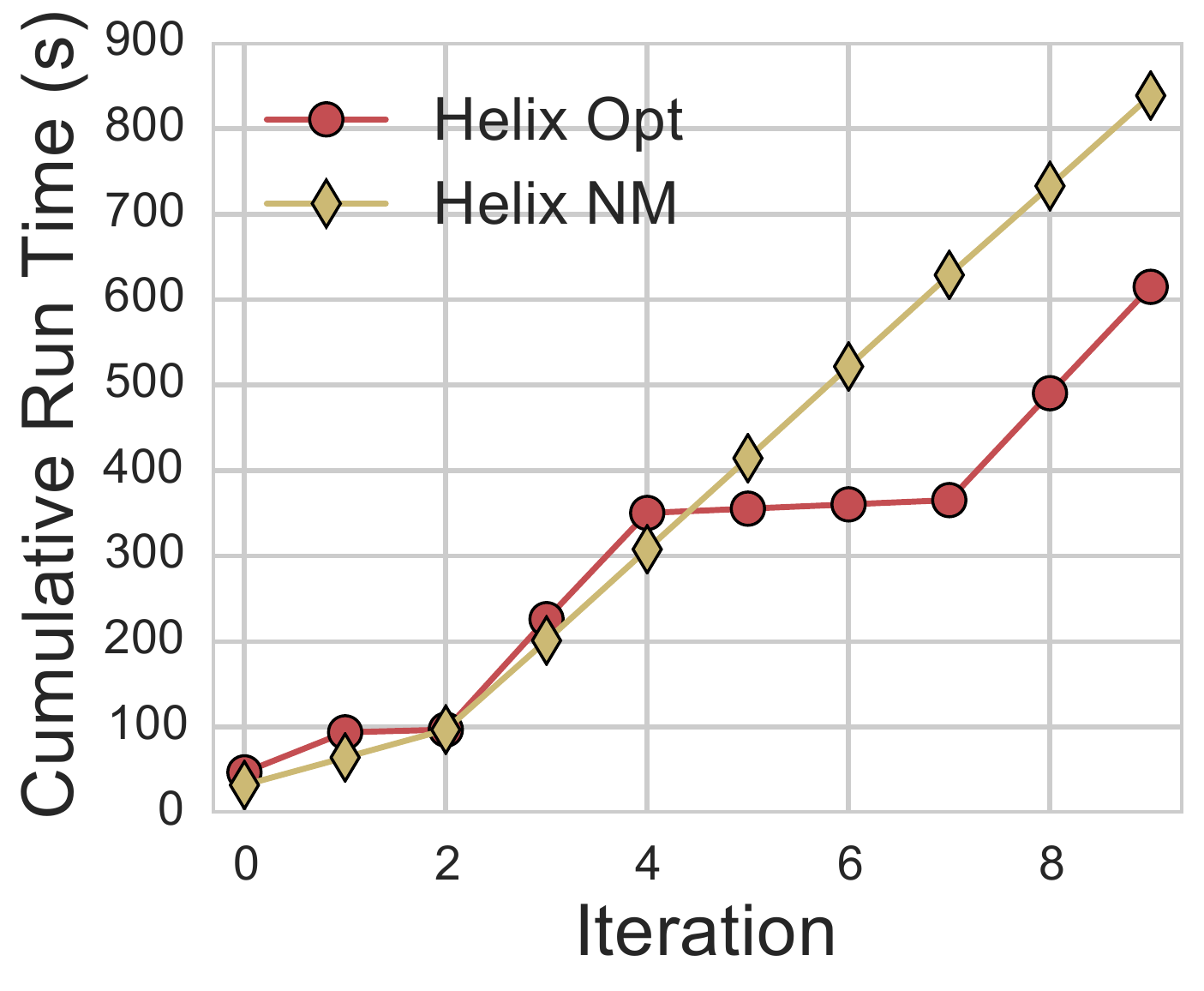}
        \label{fig:mnistH}
    \end{subfigure}
\caption{Cumulative run time and storage use against materialization heuristics
on the same four workflows as in Figure~\ref{fig:sysBaselines}.}
\label{fig:hBaselines}
\end{figure}

\frameme{
\opt achieves the lowest cumulative run time on all workflows compared to 
simpler versions of \name. 
\am often uses more than 30$\times$ the storage of \opt 
when able to complete in a reasonable time, 
while not being able to complete within 50$\times$ of the time
taken by \opt elsewhere.
\nm takes up to 4$\times$ the time taken by \opt. 
}

Next, we evaluate the effectiveness of Algorithm~\ref{algo:somp}
at approximating the solution to the NP-hard \omp problem.
We compare \opt that runs Algorithm~\ref{algo:somp} 
against:
\am that replaces Algorithm~\ref{algo:somp} 
with the policy to always materialize every operator,
and \nm that never materializes any operator.
The two baseline heuristics present two performance extremes:
\am maximizes storage usage, 
time for materialization, 
and the likelihood of being able to reuse unchanged results,
whereas \nm minimizes all three quantities.
\am provides the most flexible choices for reuse.
On the other hand, \nm has no materialization time overhead but also offers no reuse.

Figures~\ref{fig:censusH},~\subref{fig:knowengH},~\subref{fig:nlpH}, and~\subref{fig:mnistH}
show the cumulative run time 
on the same four workflows as in Figure~\ref{fig:sysBaselines} for the three variants.

{\em \am is absent from Figures~\ref{fig:nlpH} and~\subref{fig:mnistH} because
it did not complete within $50\times$ the time it took for other variants.}
The fact that \am failed to complete for the MNIST and NLP workflows
demonstrate that indiscriminately materializing all intermediates can cripple performance.
Figures~\ref{fig:nlpH} and \subref{fig:mnistH} show 
that \opt achieves substantial run time reduction over 
\nm using very little materialization time overhead 
(where the red line is above the yellow line).

For the census and genomics workflows where the materialization time is not prohibitive,
Figures~\ref{fig:censusH} and~\subref{fig:knowengH} show that
in terms of cumulative run time, 
\opt outperforms \am, which attains the best reuse as explained above.
We also compare the storage usage by \am and \nm for these two workflows.
Figures~\ref{fig:censusStoreH} and~\subref{fig:knowengStoreH} show
the storage size snapshot at the end of each iteration.
The x-axis is the iteration numbers,
and the y-axis is the amount of storage (in KB) in log scale.
The storage use for \nm is omitted from these plots because it is always zero.

We find that \opt outperforms \am while using less 
than half the storage used by \am for the census workflow in Figure~\ref{fig:censusStoreH} 
and $\frac{1}{30}$ the storage of \am for the genomics workflow in Figure~\ref{fig:knowengStoreH}.
Storage is not monotonic 
because \name purges any previous materialization of original operators prior to execution,
and these operators may not be chosen for materialization after execution,
thus resulting in a decrease in storage.

\revision{
Furthermore, to study the optimality of Algorithm~\ref{algo:somp}, 
we compare the distribution of nodes in the prune, reload, and compute states $S_p, S_l, S_c$
between \opt and \am for workflows with \am completed in reasonable times. 
Since everything is materialized in \am, it achieves maximum reuse in the next iteration.
Figure~\ref{fig:frac} shows that \opt enables the exact same reuse as \am, 
demonstrating its effectiveness on real workflows.
}

{\em Overall, neither \am nor \nm is the dominant strategy in all scenarios,
and both can be suboptimal in some cases.}

\techreport{
\revision{
\subsection{Memory Usage by  {\large \name}}
\label{sec:memory}
}

\begin{figure}
\includegraphics[width=0.5\textwidth]{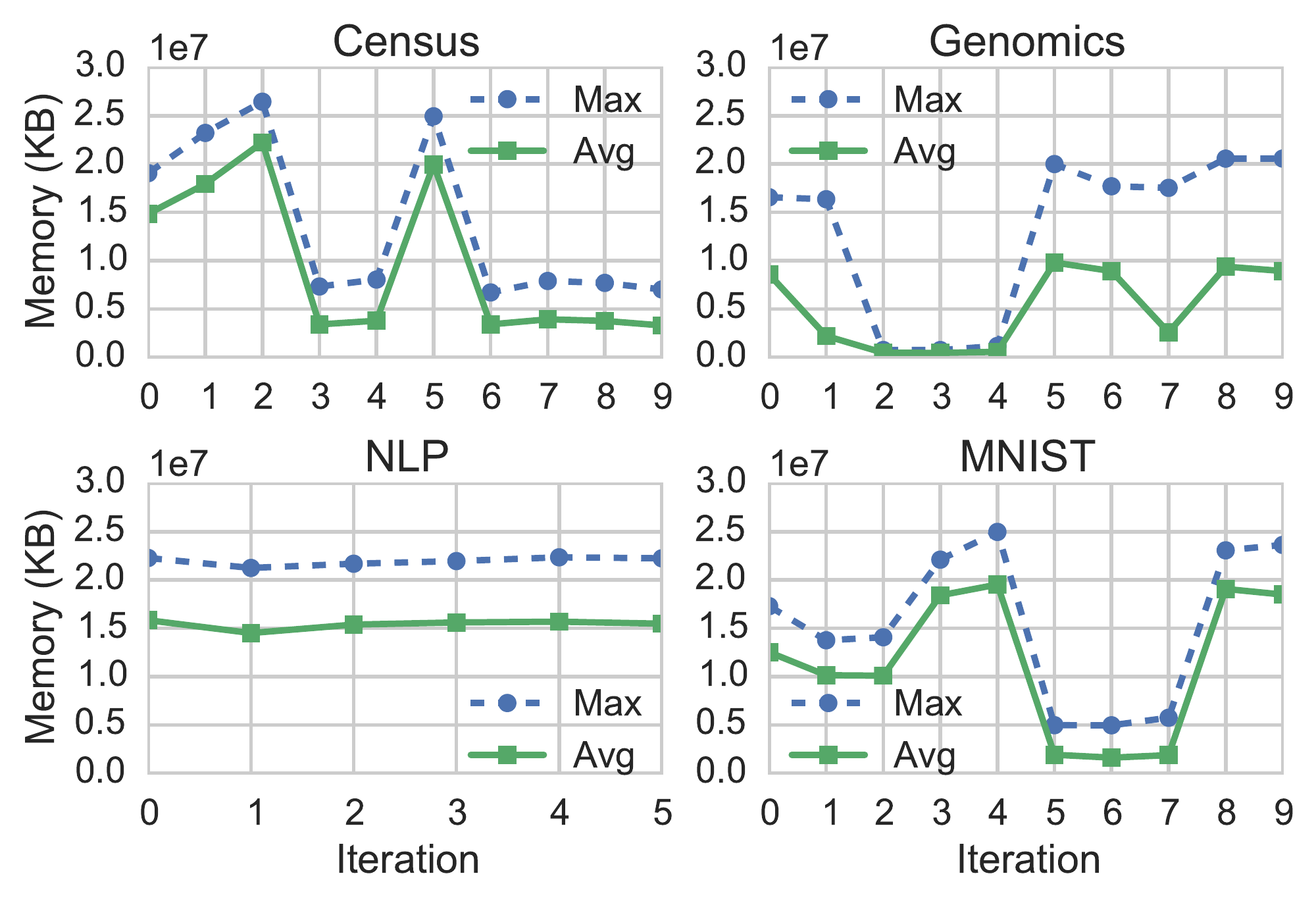}
\caption{\revision{Peak and average memory for \name.}}
\label{fig:mem}
\end{figure}

\revision{
We evaluate memory usage by \name to ensure 
that its materialization and reuse benefits 
do not come at the expense of large memory overhead.
We measure memory usage at one-second intervals during \name workflow execution.
Figure~\ref{fig:mem} shows the peak and average memory used by \name in each iteration
for all four workflows.
We allocate 30GB memory (25\% of total available memory) in the experiments. 
We observe that \name runs within the memory constraints on all workflows.
Furthermore, on iterations where \name reuses intermediate results to
achieve a high reduction in run time compared to other systems,
memory usage is also significantly reduced.
This indicates that \name reuses small intermediates 
that enable the pruning of a large portion of the subgraph to reduce run time, 
instead of overloading memory.}
}

\section{Related Work}
\label{sec:related}
%!TEX root=p390-xin.tex

Many systems have been developed in recent years 
to better support ML workflows.
\revision{We begin by describing ML systems and other general
workflow management tools}, followed
by systems that target the reuse of intermediate results. 

\stitle{Machine Learning Systems.} 
We describe machine learning systems 
that support declarative programming,
followed by other general-purpose systems
that optimize across frameworks.

\emtitle{Declarative Systems.}
Due to the challenges
in developing ML workflows, there has been recent efforts 
to make it easier to do so declaratively.
Boehm et al.\ categorize
declarative ML systems into three groups based on the usage:
{\em declarative ML algorithms}, {\em ML libraries},
and {\em declarative ML tasks}~\cite{boehm2016declarative}.
Systems that support {\em declarative ML algorithms}, 
such as TensorFlow~\cite{abadi2016tensorflow}, 
SystemML~\cite{ghoting2011systemml}, OptiML~\cite{sujeeth2011optiml}, ScalOps~\cite{weimer2011machine}, 
and SciDB~\cite{stonebraker2011architecture},
allow ML experts to program new ML algorithms,
by declaratively specifying linear algebra and statistical operations
at higher levels of abstraction. Although it also
builds a computation graph like \name, TensorFlow has no intermediate reuse\techreport{ and
always performs a full computation e.g. any in-graph data preparation}.
TensorFlow's lower level linear algebra operations are not conducive to data preprocessing. \name handles reuse at a higher level than TensorFlow ops.
{\em ML libraries}, 
such as Mahout~\cite{owen2012mahout}, Weka~\cite{hall2009weka}, 
GraphLab~\cite{low2012distributed}, Vowpal Wabbit~\cite{langford2007vowpal}, 
MLlib~\cite{meng2016mllib} and Scikit-learn~\cite{pedregosa2011scikit},
provide \revision{simple interfaces to} optimized implementations 
of popular ML algorithms. TensorFlow has also
recently started providing TFLearn~\cite{damien2016tflearn},
a high level ML library targeted at deep learning.
Systems that support {\em declarative ML tasks}
allow application developers with limited ML knowledge
to develop models using higher-level primitives than in declarative ML algorithms.
\name falls into this last group of systems,
along with DeepDive~\cite{zhang2015deepdive,deepdive2016} and
KeystoneML~\cite{sparks2016end}. 
\techreport{These systems
perform workflow-level optimizations to reduce end-to-end execution time.
\revision{Finally, at the extreme end of this spectrum are systems for in-RDBMS
analytics~\cite{hellerstein2012madlib,feng2012bismarck,wang2011hybrid}
that extend databases to support
ML, at great cost to flexibility.}}

Declarative ML task systems, like 
\name, can seamlessly make use of improvements in ML library implementations, 
such as MLlib~\cite{meng2016mllib}, CoreNLP~\cite{manning2014stanford} 
and DeepLearning4j~\cite{team2016deeplearning4j}, within UDF calls.
Unlike declarative ML algorithm systems, 
that are targeted at ML experts and researchers,
these systems focus on end-users of ML.

\emtitle{Systems that Optimize Across Frameworks.}
These systems target a broad range of use-cases, including
ML. Weld~\cite{palkar2017weld} and Tupleware~\cite{crotty2015architecture} 
optimize UDFs written in different frameworks
by compiling them down to a common intermediate representation.
Declarative ML task systems like \name can take advantage of 
the optimized UDF implementations; unlike \name,
these systems do not benefit from seamless specification,
execution, and end-to-end optimizations across workflow components
that come from a unified programming model.

\emtitle{\revision{Systems for Optimizing Data Preprocessing.}}
\revision{The database community has identified various opportunities for
optimizing DPR.
Several approaches identify as a key bottleneck in DPR and optimize it~\cite{kumar2015learning,chen2017towards,olteanu2016f,kumar2016join}.
\techreport{
Kumar et al.~\cite{kumar2015learning}
optimizes generalized linear models directly over
factorized / normalized representations of relational data,
avoiding key-foreign key joins.
Morpheus~\cite{chen2017towards} and F~\cite{olteanu2016f}
extend this factorized approach to general linear algebra operations
and linear regression models, respectively (the latter over arbitrary joins).
Some work~\cite{kumar2016join} even attempts to characterize
when joins can be eschewed altogether, without sacrificing performance.
All of these optimizations are orthogonal to those used by \name.}
\revision{Another direction aims at reducing the manual effort
involved in data cleaning and feature
engineering~\cite{ratner2017snorkel,zhang2016materialization,krishnan2016activeclean,anderson2013brainwash,anderson2016input}. 
All of these optimizations are orthogonal to those used by \name, which targets
end-to-end iterative optimizations.}
\techreport{Snorkel~\cite{ratner2017snorkel} supports training data engineering using rules.
\textsc{Columbus}~\cite{zhang2016materialization} 
optimizes feature selection
specifically for regression models.
ActiveClean~\cite{krishnan2016activeclean}
integrates data cleaning with learning convex models,
using gradient-biased samples to identify dirty data.
Brainwash~\cite{anderson2013brainwash}
proposes to expedite feature engineering
by recommending feature
transformations.
Zombie~\cite{anderson2016input}
speeds up data preparation by learning over smaller,
actively-learned informative subsets of data during feature engineering.
These approaches are bespoke for the \dataprep portion of ML workflows
and do not target end-to-end optimizations, although there is no reason they
could not be integrated within \name.}}

\stitle{\revision{ML and non-ML Workflow Management Tools.}}
\revision{Here we discuss ML workflow systems,
production platforms for ML,
\techreport{industry} batch processing workflow systems, and systems for
scientific workflows.}

\emtitle{\revision{ML Workflow Management.}}
\revision{Prior tools for managing ML workflows
focus primarily on making their pipelines easier to debug.
For example, Gestalt~\cite{patel2010gestalt} and Mistique~\cite{vartak2018mistique} 
both tackle the problem of model diagnostics
by allowing users to inspect intermediate results.
The improved workflow components 
in these systems could be easily incorporated within \name.}

\emtitle{\revision{ML Platforms-as-Services.}}
\revision{A number of industry frameworks~\cite{baylor2017tfx,fblearner,barnes2015azure,michelangelo,sagemaker,mlflow}, attempt to automate
typical steps in deploying machine learning by providing
a Platform-as-a-Service (PaaS) capturing common use cases.}
\papertext{\revision{Instead,
\name is not designed to reduce manual
effort of model deployment, but rather model {\em development}.}}
\techreport{\revision{These systems
vary in generality --- frameworks like SageMaker, Azure Studio, and MLFlow are built around
services provided by Amazon, Microsoft, and Databricks, respectively, and provide general
solutions for production deployment of ML models for companies that in-house
infrastructure. On the other hand, TFX, FBLearner
Flow, and Michelangelo are optimized for internal use at Google, Facebook, and Uber,
respectively. For example, TFX is optimized for use with TensorFlow, and Michelangelo
is optimized for Uber's real-time requirements, allowing production models to use features
extracted from streams of live data.}}

\techreport{\revision{The underlying ``workflow'' these
frameworks manage is not always given an explicit representation,
but the common unifying thread is the automation of production deployment,
monitoring, and continuous retraining steps,
thereby alleviating engineers from the labor of ad-hoc solutions.
\name is not designed to reduce manual
effort of model deployment, but rather model {\em development}.
The workflow \name manages sits at a lower level than those of
industry PaaS systems, and therefore the techniques it leverages are quite different.}}

\emtitle{\revision{General Batch Processing Workflow Systems.}}
\revision{A number of companies have implemented workflow management
systems for batch processing~\cite{airflow,sumbaly2013big}. These systems are not
concerned with runtime optimizations, and rather provide features
useful for managing large-scale workflow complexity.}

\emtitle{\revision{Scientific Workflow Systems.}}
\revision{Some systems address the significant mental and computational
overhead associated with scientific workflows. VisTrails~\cite{callahan2006vistrails}
and Kepler~\cite{ludascher2006scientific}
add provenance and other metadata tracking\techreport{ to visualization-producing workflows,
allowing for reproducibility, easier visualization comparison, and faster iteration}.
Other systems attempt to map scientific workflows to cluster resources~\cite{yu2005taxonomy}.
One such system, Pegasus~\cite{deelman2004pegasus},
also identifies reuse opportunities when executing workflows.
The optimization techniques employed by all systems discussed leverage
reuse in a simpler manner than does \name\techreport{,
since the workflows are coarser-grained and computation-heavy, so that the cost
of loading cached intermediate results can be considered negligible}.}

\stitle{Intermediate Results Reuse.}
The OEP/OMP problems within \name are reminiscent of classical work
on view materialization in database systems~\cite{chirkova2012materialized},
but operates at a more coarse-grained level on black box operators.
\techreport{However, the reuse of intermediate results within ML workflows
differs from traditional database view materialization
in that it is less concerned with fine-grained updates,
and instead treats operator outputs as immutable black-box units
due to the complexity of the data analytics operator. }
\textsc{Columbus}~\cite{zhang2016materialization}
focuses on caching feature columns 
for feature selection exploration within a single workflow.
ReStore~\cite{elghandour2012restore} manages reuse of intermediates 
across dataflow programs written in Pig~\cite{olston2008pig}, 
while Nectar~\cite{gunda2010nectar} does so 
across DryadLINQ~\cite{yu2008dryadlinq} workflows.
Jindal et al.~\cite{jindal2018selecting} study SQL subexpression 
materialization within a single workflow with many subqueries. 
\techreport{Perez et al.~\cite{perez2014history} also study SQL subexpression materialization, 
but in an inter-query fashion that uses historical data 
to determine utility of materialization for future reuse.}
In the same vein, Mistique~\cite{vartak2018mistique} \revision{and its spiritual
predecessor Sherlock~\cite{vartak2015supporting} use} historical usage 
as part of \revision{their cost models} for adaptive materialization.
\techreport{
\name shares some similarities with the systems above 
but also differs in significant ways.
Mistique~\cite{vartak2018mistique},
Nectar~\cite{gunda2010nectar}, and ReStore~\cite{elghandour2012restore} share 
the goal of efficiently reusing ML workflow intermediates with \name.}
However, the cost models and algorithms proposed in these systems for deciding what to reuse
do not consider the operator/subquery dependencies in the DAG 
and make decisions for each operator independently 
based on availability, operator type, size, and compute time.
We have shown in Figure~\ref{fig:reduction} 
that decisions can have cascading effects on the rest of the workflow.
\techreport{The reuse problems studied in \textsc{Columbus}~\cite{zhang2016materialization} and Jindal et al.~\cite{jindal2018selecting}
differ from ours in that they are concerned with 
decomposing a set of queries $Q$ into subqueries
and picking the minimum cost set of subqueries to cover $Q$.
The queries and subqueries can be viewed as a bipartite graph,
and the optimization problem can be cast as a {\sc Set Cover}.
They do not handle iteration but rather efficient execution of parallel queries.}
\techreport{Furthermore, the algorithms for choosing what to materialize 
in Mistique~\cite{vartak2018mistique} and Perez et al.~\cite{perez2014history} 
use historical data as signals for likelihood of reuse in the future,
whereas our algorithm directly uses projected savings for the next iteration 
based on the reuse plan algorithm.
Their approaches are reactive, while ours is proactive.}

\balance
\section{Conclusions and Future Work}
\label{sec:conclusion}
%!TEX root=p390-xin.tex

We presented \name, a declarative system aimed at
accelerating iterative ML application development.
In addition to its user friendly, flexible, and succinct programming interface,
\name tackles two major optimization problems, namely  \oep and \omp,
that together enable cross-iteration optimizations 
resulting in significant run time reduction for future iterations.
We devised a PTIME algorithm to solve \oep by using a reduction to {\sc Max-Flow}.
We showed that \omp is {\sc NP-Hard} and proposed
a light-weight, effective heuristic for this purpose.
We evaluated \name against DeepDive and KeystoneML on workflows from 
social sciences, NLP, computer vision, and natural sciences that vary greatly in characteristics 
to test the versatility of our system.
We found that \name supports a variety of diverse machine learning applications with ease
and \papertext{provides up to a 19$\times$ cumulative run time speedup relative to baselines. }\techreport{and provides {\em 40-60\% cumulative run time reduction} on complex learning tasks
and nearly {\em an order of magnitude reduction} on simpler ML tasks
compared to both DeepDive and KeystoneML.} 
While \name is implemented in a specific way, 
the techniques and abstractions presented in this work 
are general-purpose; other systems can enjoy the benefits of \name's 
optimization modules through simple wrappers and connectors.

In future work, we aim to further accelerate iterative workflow development 
via introspection and querying across workflow versions over time, 
automating trimming of redundant workflow nodes,
as well as auto-suggestion of workflow components to aid workflow development 
by novices. 
\techreport{
\revision{Specifically, \name is capable of tracing specific features in the ML model 
to the operators in the DAG.
This allows information about feature importance learned in the ML model 
to be used directly to prune the DAG.
In addition, the materialization and reuse techniques we proposed 
can be extended to optimize parallel executions of similar workflows. }}

\newpage
\bibliographystyle{abbrv} % vldb
\bibliography{dml.bib}

\techreport{\pagebreak 
\appendix
%!TEX root =p390-xin.tex

\section{\lang Specifications}
\label{sec:grammar}

\begin{table*}
\centering
\small
\begin{tabular}{|C{1.8cm}|C{4cm}|C{5cm}|C{5cm}|}
\hline
\textbf{Phrase} & \textbf{Usage} & \textbf{Operation} & \textbf{Example} \\
\hline
\code{refers\_to} & \textit{string} \code{refers\_to} \textit{\name object} & 
Register a \textit{\name object} to a \textit{string} name & 
\code{``ext1'' refers\_to Extractor(...)} \\
\hline
\code{is\_read\_into ... using} & $DC_i[SU]$ \code{is\_read\_into} $DC_j[SU]$ \code{using} \textit{scanner} & 
Apply \textit{scanner} on $DC_i$ to obtain $DC_j$ & 
\makecell{ \code{``sentence'' is\_read\_into ``word''}\\
\code{using whitespaceTokenizer}} \\
\hline
\code{has\_extractors} & $DC[SU]$ \code{has\_extractors} \textit{extractor+} & Apply extractors to $DC$ & \code{``word'' has\_extractors (``ext1'', ``ext2'')}\\
\hline
\code{on} & \textit{synthesizer/learner/reducer} \code{on} $DC[*]+$ & Apply \textit{synthesizer/learner}  on input DC(s) to produce an output DC[E] &  
\makecell{\code{``match'' on  (``person\_candidate'' ,} \\
\code{``known\_persons'')}} \\
\hline
\multirow{2}{*}{\code{results\_from}} & $DC_i[E]$ \code{results\_from} $DC_j[*]$ [\code{with\_label} \textit{extractor}]& Wrap each element in $DC_i$ in an Example and optionally labels the Examples with the output of \textit{extractor}. & \makecell{\code{``income'' results\_from ``rows''}\\
\code{with\_label ``target''}} \\
\cline{2-4}
 & $DC[E]$/Scalar \code{results\_from} \textit{clause} &  Specify the name for \textit{clause}'s output DC[E]. & 
 \code{``learned'' results\_from ``L'' on ``income'' } \\
\hline
\code{uses} & \textit{synthesizer/learner/reducer} \code{uses} \textit{extractors+} & 
Specify  \textit{synthesizer/learner}'s dependency on the output of \textit{extractors+} 
to prevent pruning or uncaching of intermediate results due to optimization. & \code{``match'' uses (``ext1'', ``ext2'')} \\
\hline
\code{is\_output} & $DC[*]$\textit{/result} \code{is\_output} & Requires $DC$\textit{/result} to be materialized.   
& \code{``learned'' is\_output} \\
\hline
\end{tabular}
\vspace{0.2cm}
\caption{
Usage and functions of key phrases in \lang. 
$DC[A]$ denotes a DC with name $DC$ and elements of type $A \in \{SU, E\}$, 
with $A = *$ indicating both types are legal.
$x$+ indicates that $x$ appears one or more times.
When appearing in the same statement, 
\code{on} takes precedence over \code{results\_from}.
}
\label{tab:semantics}
\end{table*}

%!TEX root =p390-xin.tex

\setlength{\grammarparsep}{1pt}
\setlength{\grammarindent}{7em}
\begin{figure}[ht]
\begin{grammar}
<var> ::= <string>

<scanner> ::= <var> | <scanner-obj>

<extractor> ::= <var> | <extractor-obj>

<typed-ext> ::= `(' <var> `,' <extractor> `)'

<extractors> ::= `('  <extractor>  \{ `,'  <extractor> \}  `)' 

<typed-exts> ::= `(' <typed-ext> \{`,' <typed-ext>\} `)'

<obj> ::= <data-source> | <scanner-obj> | <extractor-obj> | <learner-obj> | <synthesizer-obj> | <reducer-obj>

<assign> ::= <var> `refers\_to' <obj>

<expr1> ::= <var> `is\_read\_into' <var> `using' <scanner>

<expr2> ::=  <var> `has\_extractors' <extractors>

<list> ::= <var> | `(' <var> `,'  <var> \{ `,' <var> \} `)'

<apply> ::= <var> `on' <list>

<expr3> ::=  <apply> `as\_examples' <var>

<expr4> ::=  <apply> `as\_results' <var>

<expr5> ::= <var> `as\_examples' <var> \\
`with\_labels' <extractor> 

<expr6> ::= <var> `uses' <typed-exts>

<expr7> ::= <var> `is\_output()'

<statement> ::= <assign> | <expr1> | <expr2> | <expr3> | <expr4> | <expr5> | <expr6> | <expr7> | <Scala expr>

<program> ::= `object' <string> `extends Workflow \{' \\
\{ <statement> <line-break> \} \\
`\}'
\end{grammar}
\caption{\name syntax in Extended Backus-Naur Form. 
\textit{<string>} denotes a legal String object in Scala;
\textit{<*-obj>} denotes the correct syntax for instantiating object of type ``*'';
\textit{<Scala expr>} denotes any legal Scala expression.
A \name Workflow can be comprised of any combination of \name and Scala expressions,
a direct benefit of being an embedded DSL.}
\label{fig:syntax}
\end{figure}

\section{Proof for Theorem~\ref{thm:reduction}}
\label{sec:t1proof}

For clarity, we first formulate \oep as an integer linear program 
before presenting the proof itself.

\topic{Integer Linear Programming Formulation}
\Cref{prob:oep} can be formulated as an integer linear program (ILP)
as follows. First, for each node $n_i \in G$,
introduce binary indicator variables $\jobAi$ and $\jobBi$ defined as follows:
\begin{align*}
\jobAi &= \indic{s(n_i) \ne S_p} \\
\jobBi &= \indic{s(n_i) = S_c}
\end{align*}
That is, $\jobAi = 1$ if node $n_i$ is not pruned, 
and $\jobBi = 1$ if node $n_i$ is computed. 
Note that it is not possible to have $\jobAi = 0$ and $\jobBi = 1$. 
Also note that these variables uniquely determine node $n_i$'s state $s(n_i)$.

With the $\{\jobAi\}$ and $\{\jobBi\}$ thus defined, our ILP is as follows:
\begin{mini!}
    {\jobAi,\jobBi}{\sum_{i=1}^{|N|} \jobAi l_i + \jobBi(c_i-l_i)}
    {}{} \label{eqn:obj}
    \addConstraint{\jobAi-\jobBi}{\geq 0,\quad}
    {1\leq i\leq|N|}
    \label{cons:statefunc}
    \addConstraint{\sum_{n_j \in \text{Pa}(n_i)}\jobAj-\jobBi}{\geq 0,\quad}
    {1\leq i\leq|N|}
    \label{cons:pselect}
    \addConstraint{\jobAi, \jobBi\in\ }{\{0, 1\},\ }
    {1\leq i\leq|N|}
    \label{cons:binary}
\end{mini!}
\Cref{cons:statefunc} prevents the assignment 
$\jobAi = 0$ ($n_i$ is pruned) and $\jobBi = 1$ ($n_i$ is computed),
since a pruned node cannot also be computed by definition.
\Cref{cons:pselect} is equivalent to
\Cref{constraint:exstate} --- if $\jobBi = 1$ ($n_i$ is computed),
any parent $n_j$ of $n_i$ must not be pruned, i.e., $\jobAj = 1$,
in order for the sum to be nonnegative.
\Cref{cons:binary} requires the solution to be integers.

This formulation does not specify a constraint 
corresponding to Constraint~\ref{constraint:rerun}.
Instead, we enforce Constraint~\ref{constraint:rerun} 
by setting the load and compute costs of nodes that need to be recomputed to specific values,
as inputs to Problem~\ref{prob:oep}.
Specifically, we set the load cost to $\infty$
and the compute cost to $-\epsilon$ for a small $\epsilon > 0$.
With these values, the cost of a node in $S_l, S_p, S_c$ are $\infty, 0, -\epsilon$ respectively,
which makes $S_c$ a clear choice for minimizing Eq(\ref{eqn:obj}).

Although ILPs are, in general, NP-Hard, the astute reader may
notice that the constraint matrix associated with the above
optimization problem is {\em totally unimodular} (TU), which means
that an optimal solution for the LP-relaxation (which removes constraint~\ref{cons:binary}
in the problem above) assigns integral values to $\{\jobAi\}$ and $\{\jobBi\}$,
indicating that it is both optimal and feasible for the problem above as well~\cite{schrijver1998theory}.
In fact, it turns out that the above problem is the dual of a flow problem; specifically, it
is a minimum cut problem~\cite{yannakakis1985class,hochbaum2000performance}.
This motivates the reduction introduced in Section~\ref{sec:oep}.

\topic{Main proof}
The proof for Theorem~\ref{thm:reduction} follows directly from the two lemmas
proven below.
Recall that given an optimal solution $A$ to PSP, 
we obtain the optimal state assignments for OEP using the following mapping:
\begin{equation}\label{eq:solMap}
s(n_i) = \begin{cases}
S_c & \text{if $a_i \in A$ and $b_i \in A$} \\
S_l & \text{if $a_i \in A$ and $b_i \not\in A$} \\
S_p& \text{if $a_i \not\in A$ and $b_i \not\in A$}
\end{cases}
\end{equation}
 
\begin{lemma}
A feasible solution to PSP under $\varphi$ also produces a feasible solution to OEP.
\end{lemma}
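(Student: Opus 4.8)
The plan is to show that the state assignment $s$ obtained from a feasible PSP solution $A$ via the mapping \eqref{eq:solMap} is a legal state assignment for OEP: that it is well defined on all of $N$, and that it satisfies the execution-state constraint (\Cref{constraint:exstate}). Note that \Cref{constraint:rerun} is not a structural constraint of the PSP instance produced by $\varphi$ but is instead enforced through the cost reparametrization discussed above (setting $l_i=\infty$ and $c_i=-\epsilon$ for original operators), so "feasible solution to OEP" here is to be understood as a state assignment satisfying \Cref{constraint:exstate}.

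First I would check that \eqref{eq:solMap} actually defines a function on $N$, i.e., that the case $a_i\notin A$, $b_i\in A$ never occurs. This is immediate from the reduction: Line~\ref{alg:cost} of Algorithm~\ref{algo:oep} makes $a_i$ a prerequisite of $b_i$, and feasibility of $A$ (closure under prerequisites) forces $b_i\in A \Rightarrow a_i\in A$. Hence the three listed cases are exhaustive and mutually exclusive, and $s(n_i)$ is well defined for every $n_i\in N$.

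Next I would verify \Cref{constraint:exstate}. Suppose $s(n_i)=S_c$; by \eqref{eq:solMap} this forces $b_i\in A$. Let $n_j$ be an arbitrary parent of $n_i$, so $(n_j,n_i)\in E$; by Line~\ref{algo:depend} of Algorithm~\ref{algo:oep}, $\varphi$ installs $a_j$ as a prerequisite of $b_i$. Since $A$ is prerequisite-closed and $b_i\in A$, we get $a_j\in A$, and therefore $s(n_j)\in\{S_l,S_c\}$, i.e.\ $s(n_j)\neq S_p$. As $n_j$ was an arbitrary parent of an arbitrary computed node, \Cref{constraint:exstate} holds throughout $G_W$, so $s$ is a feasible OEP solution.

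I expect the only real source of friction to be bookkeeping the orientation of edges versus prerequisites: an edge $(n_i,n_j)\in E$ means $n_i$ is a \emph{parent} of $n_j$, and $\varphi$ converts this into "$a_i$ is a prerequisite of $b_j$" --- not the reverse --- so one must invoke prerequisite-closure in the correct direction when translating "$n_i$ is computed $\Rightarrow$ its parents are not pruned." Beyond this there is no genuine obstacle, since $\varphi$ was constructed precisely so that prerequisite-closure in PSP coincides with the execution-state constraint in OEP.
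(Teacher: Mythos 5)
Your proof is correct and follows essentially the same route as the paper's: the prerequisite edge $a_j \rightarrow b_i$ installed for each workflow edge, together with prerequisite-closure of the selected set $A$, forces $a_j \in A$ whenever $b_i \in A$, which is exactly \Cref{constraint:exstate}; the paper merely phrases this as a contradiction rather than a direct verification. Your additional check that the mapping is well defined (ruling out $b_i \in A$, $a_i \notin A$ via the $a_i$-prerequisite of $b_i$) is a small, harmless refinement the paper leaves implicit.
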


\begin{proof}
We first show that satisfying the prerequisite constraint in PSP 
leads to satisfying Constraint~\ref{constraint:exstate} in \oep. 
Suppose for contradiction that a feasible solution to PSP under $\varphi$
does not produce a feasible solution to OEP.
This implies that for some node $n_i \in N$ s. t. $s(n_i)=S_c$,
at least one parent $n_j$ has $s(n_j) = S_p$.
By the inverse of Eq (\ref{eq:solMap}), 
$s(n_i)=S_c$ implies that $b_i$ was selected, 
while $s(n_j) = S_p$ implies that neither $a_j$ nor $b_j$ was selected.
By construction, there exists an edge $a_j \rightarrow b_i$.
The project selection entailed by the operator states 
leads to a violation of the prerequisite constraint.
Thus, a feasible solution to PSP must produce a feasible solution to OEP under $\varphi$.
\end{proof}

\begin{lemma}
An optimal solution to PSP is also an optimal solution to OEP under $\varphi$.
\end{lemma}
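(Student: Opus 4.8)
The plan is to establish a bijective correspondence between feasible solutions to the two problems and show that it preserves optimality by matching objective values up to an additive constant. Concretely, the previous lemma gives one direction (feasible PSP solution $\mapsto$ feasible OEP solution under $\varphi$); I would first prove the converse, that every feasible state assignment $s$ for OEP yields a feasible project selection $A$ for PSP under the inverse of Eq.~(\ref{eq:solMap}). This is routine: given $s$, set $A = \{a_i : s(n_i) \neq S_p\} \cup \{b_i : s(n_i) = S_c\}$. Constraint~\ref{constraint:exstate} (parents of computed nodes are not pruned) exactly gives the prerequisite $a_j \in A$ for every edge $a_j \to b_i$ with $b_i \in A$; and the constraint that $S_c$ requires $a_i$ selected gives the prerequisite edge $a_i \to b_i$. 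So $\varphi$ and its inverse are mutually inverse bijections between feasible solutions of the two problems.

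Next I would compute the objective values. For a feasible OEP solution $s$ with corresponding $A$, the total profit of $A$ is
\[
\sum_{i : a_i \in A} (-l_i) + \sum_{i : b_i \in A} (l_i - c_i)
= -\sum_{i : s(n_i) \neq S_p} l_i + \sum_{i : s(n_i) = S_c}(l_i - c_i).
\]
Splitting the first sum by whether $s(n_i) = S_l$ or $S_c$, the $l_i$ terms for computed nodes cancel against the $l_i$ in $(l_i - c_i)$, leaving $-\sum_{s(n_i) = S_l} l_i - \sum_{s(n_i) = S_c} c_i = -T(W,s)$ by Eq.~(\ref{eqn:wfcost}). Hence the PSP profit of $A$ equals exactly $-T(W,s)$; there is no additive constant to track. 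Maximizing profit over $A$ is therefore identical to minimizing $T(W,s)$ over feasible $s$, so an optimal $A^*$ for PSP maps under Eq.~(\ref{eq:solMap}) to a state assignment that minimizes $T(W,s)$, i.e.\ an optimal OEP solution. Combined with the feasibility lemma, this proves Theorem~\ref{thm:reduction}.

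The only subtlety — and the step I would be most careful about — is the handling of Constraint~\ref{constraint:rerun} (original operators must be recomputed), since the ILP formulation does not encode it as a linear constraint but instead bakes it into the cost values ($l_i = \infty$, $c_i = -\epsilon$ for original $n_i$). I would note that with these costs, any feasible PSP solution that does not select both $a_i$ and $b_i$ for an original node either has profit $-\infty$ (if it selects $a_i$ alone, interpreting $l_i = \infty$) or strictly lower profit than the variant that additionally selects $b_i$ (since adding $b_i$ contributes $l_i - c_i > 0$ — this also needs $a_i$'s prerequisites, but those are forced anyway once $b_i$ is in), so an optimal $A^*$ automatically respects Constraint~\ref{constraint:rerun}. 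One should check the prerequisite-feasibility of "adding $b_i$" carefully: adding $b_i$ requires all parents' $a_j$ selected, but if the OEP instance is feasible at all then computing an original node is possible, so such an $A$ with all these $a_j$ already present exists. This argument is short but is the one place where the reduction is not purely mechanical, so it deserves an explicit sentence or two rather than being left implicit.
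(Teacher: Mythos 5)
Your proposal is correct and follows essentially the same route as the paper: identify the PSP selection variables with the OEP state indicators and observe that the total selected profit is exactly the negation of the workflow run time $T(W,s)$, so profit maximization coincides with cost minimization under the mapping of Eq.~(\ref{eq:solMap}). The extra points you raise---explicitly checking the converse feasibility direction (feasible OEP assignments map to prerequisite-feasible project selections) and spelling out how the rerun constraint is absorbed into the $l_i=\infty$, $c_i=-\epsilon$ cost settings---are left implicit in the paper's proof and tighten, rather than change, the argument.
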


\begin{proof}
Let $Y_{a_i}$ be the indicator for whether project $a_i$ is selected,
$Y_{b_i}$ for the indicator for $b_i$,
and $p(x_i)$ be the profit for project $x_i$.
The optimization object for PSP can then be written as 
\begin{equation}\label{eqn:psp}
\max\limits_{Y_{a_i}, Y_{b_i}} \sum\limits_{i=1}^{|N|} Y_{a_i} p(a_i) + Y_{b_i} p(b_i)
\end{equation}
Substituting our choice for $p(a_i)$ and $P(b_i)$, Eq (\ref{eqn:psp}) becomes
\begin{align}
& \max\limits_{Y_{a_i}, Y_{b_i}} \sum\limits_{i=1}^{|N|} -Y_{a_i} l_i + Y_{b_i} (l_i - c_i) \\
=& \max\limits_{Y_{a_i}, Y_{b_i}} - \sum\limits_{i=1}^{|N|} (Y_{a_i} - Y_{b_i}) l_i + Y_{b_i} c_i
\label{eqn:pspRed}
\end{align}
The mapping established by Eq (\ref{eq:solMap}) 
is equivalent to setting 
$X_{a_i} = Y_{a_i}$
and $X_{b_i} = Y_{b_i}$.
Thus the maximization problem in Eq (\ref{eqn:pspRed}) is equivalent to the minimization problem  in Eq (\ref{eqn:obj}),
and we obtain an optimal solution to OEP from the optimal solution to PSP.
\end{proof}

\section{Proof for Theorem~\ref{thm:nphard}}
\label{sec:npProof}

We show that OMP is NP-hard 
under restrictive assumptions about the structure of $W_{t+1}$ relative to $W_t$,
which implies the general version of OMP is also NP-hard.

In our proof we make the simplifying assumption that 
all nodes in the \wf DAG are reusable in the next iteration:
\begin{equation}\label{eqn:same}
n_i^{t} \equiv n_i^{t+1} \ \forall n_i^t \in N_t, n_i^{t+1} \in N_{t+1}
\end{equation}
Under this assumption, 
we achieve maximum reusability of materialized intermediate results
since all operators that persist across iterations $t$ and $t+1$ are equivalent.
We use this assumption to sidestep the problem of predicting iterative modifications,
which is a major open problem by itself.

Our proof for the NP-hardness of OMP subject to Eq(~\ref{eqn:same}) 
uses a reduction from the known NP-hard Knapsack problem.

\begin{problem} (Knapsack)
Given a knapsack capacity $B$ and a set $N$ of $n$ items, with each $i \in N$ having a size $s_i$ and  a profit $p_i$, find $S^* =$
\begin{equation}
\argmax\limits_{S \subseteq N} \sum\limits_{i \in S} p_i
\end{equation}
such that $\sum_{i \in S^*} s_i \leq B$.
\end{problem} 

\begin{figure}
\centering
\begin{tikzpicture}[
            > = stealth, % arrow head style
            shorten > = 1pt, % don't touch arrow head to node
            auto,
            node distance = 1.5cm, % distance between nodes
            semithick % line style
        ]

        \tikzstyle{p}=[
            shape=circle,
            draw = black,
            thick,
            fill = white,
            minimum size = 7mm
        ]
        
         \tikzstyle{f}=[
            shape=circle,
            minimum size = 6mm
        ]

        \node[p] (0) [label=right:$\qquad l_0 \leftarrow \epsilon \ll \min_i s_i$] {$0$};
        \node[p] (1) [below left of=0]{$1$};
        \node[p] (2) [right of=1]{$2$};
        \node[f] (3) [right of=2]{$\ldots$};
        \node[p] (n) [right of=3]{$N$};
        
        \path[->] (0) edge (1);
        \path[->] (0) edge (2);
        \path[->] (0) edge (n);
        
        \node[align=left] at (-2,-0.2) {$l_i \leftarrow s_i$};
        \node[align=left] at (-2,-0.5) {$c_i \leftarrow p_i + 2 s_i$};

    \end{tikzpicture}
    \vspace{10pt}
\caption{OMP DAG for Knapsack reduction.}
\label{fig:knapRed}
\end{figure}
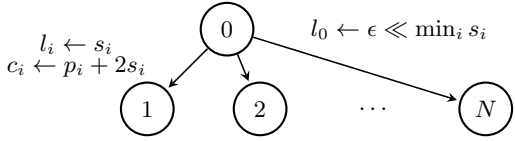

For an instance of Knapsack, we construct a simple \wf DAG $W$ as shown in Figure~\ref{fig:knapRed}.
For each item $i$ in Knapsack, 
we construct an output node $n_i$ with $l_i = s_i$ and $c_i = p_i + 2 s_i$.
We add an input node $n_0$ with $l_0 = \epsilon < \min s_i$ that all output nodes depend on.
Let $Y_i \in \{0, 1\}$ indicate whether a node $n_i \in M$ 
in the optimal solution to OMP in Eq (\ref{eqn:omp})
and $X_i \in \{0, 1\}$ indicate whether an item is picked in the Knapsack problem.
We use $B$ as the storage budget, i.e., $\sum_{i \in \in \{0, 1\}} Y_i l_i \leq B$.
\begin{theorem}
We obtain an optimal solution to the Knapsack problem for $X_i = Y_i \ \ \forall i \in \{1, 2, \ldots, n\}$.
\end{theorem}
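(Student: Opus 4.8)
The plan is to take the OMP instance built in \Cref{fig:knapRed} from an arbitrary Knapsack instance, show that its optimal materialization set $M^*$ --- read off as $Y_i = \indic{n_i \in M}$ --- is an optimal Knapsack solution, and then observe that the construction is polynomial and that OMP under the full-reusability assumption \Cref{eqn:same} is a special case of \Cref{prob:omp}; together these establish NP-hardness.

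First I would normalize the candidate solutions. Since $n_0$ is a root, $l_0 = c_0 = \epsilon$, so putting $n_0$ in $M$ costs $\epsilon$ of materialization time in iteration $t$ while leaving the optimal iteration-$(t+1)$ plan unchanged (loading versus recomputing $n_0$ cost the same); hence we may assume $n_0 \notin M$. Next, because the workflow must produce its outputs $n_1,\dots,n_n$, each such $n_i$ is in state $S_c$ or $S_l$ in iteration $t+1$; and because $c_i = p_i + 2 s_i > s_i = l_i$ (as $p_i \ge 0$, $s_i > 0$), the OEP solver of \Cref{sec:oep} \emph{loads} $n_i$ exactly when it is available --- i.e. when $n_i \in M$, recalling that under \Cref{eqn:same} every materialized node has an equivalent materialization --- and \emph{computes} it otherwise. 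By \Cref{constraint:exstate}, $n_0$ must then be kept present whenever some leaf is computed and may be pruned otherwise; note \Cref{constraint:rerun} is vacuous here by \Cref{eqn:same}.

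Next I would evaluate the objective in closed form. The optimal iteration-$(t+1)$ run time is $T^*(W_{t+1}) = \sum_{n_i \in M} s_i + \sum_{n_i \notin M}(p_i + 2 s_i) + \epsilon\,\indic{M \text{ omits some leaf}}$, and adding the materialization cost $\sum_{n_i \in M} l_i = \sum_{n_i \in M} s_i$ gives
\begin{equation}
T_M(W_t) \;=\; \left(\sum_{i=1}^{n}(p_i + 2 s_i)\right) \;-\; \sum_{n_i \in M} p_i \;+\; \epsilon\,\indic{M \text{ omits some leaf}}.
\end{equation}
The parenthesized term is a constant independent of $M$, and the storage constraint of \Cref{prob:omp} is exactly $\sum_{n_i \in M} s_i \le B$. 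Hence minimizing $T_M(W_t)$ over feasible $M$ is the same as maximizing $\sum_{n_i \in M} p_i$ subject to $\sum_{n_i \in M} s_i \le B$ --- the Knapsack objective --- once the residual $\epsilon$ term is disposed of: if the full leaf set is feasible it both maximizes profit (all $p_i \ge 0$) and avoids the $\epsilon$ penalty, hence is optimal for both problems; if it is infeasible, every feasible $M$ carries the same constant $\epsilon$ penalty, which therefore cannot affect the $\arg\min$. Thus reading $X_i := Y_i$ off the optimal $M^*$ yields an optimal Knapsack solution.

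I expect the main obstacle to be the case analysis behind the closed form for $T^*(W_{t+1})$: it requires arguing precisely how the \Cref{sec:oep} / \Cref{constraint:exstate} machinery behaves on this particular DAG (loading dominates computing at every leaf; $n_0$ is pruned exactly when no leaf is recomputed), pinning down the auxiliary assumption that the leaf outputs must be produced in every iteration, and verifying that neither the root's $\epsilon$-cost nor the all-or-nothing $\epsilon$ term can perturb the optimization. Everything else --- polynomiality of the reduction, and the fact that this is a restricted instance of \Cref{prob:omp} --- is immediate.
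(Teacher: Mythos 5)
Your proposal is correct and follows essentially the same route as the paper: fix the flat DAG of \Cref{fig:knapRed}, observe that each leaf is loaded ($l_i=s_i$) exactly when materialized and computed ($c_i=p_i+2s_i$) otherwise, and collapse $T_M(W_t)$ to a constant minus $\sum_{n_i\in M}p_i$ under the budget $\sum_{n_i\in M}s_i\le B$, so minimizing $T_M$ is maximizing Knapsack profit. The only difference is that you explicitly dispose of the root node's $\epsilon$ term and the possibility $n_0\in M$, which the paper's proof glosses over; this is a welcome tightening, not a change of approach.
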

\begin{proof} 
First, we observe that for each $n_i$, $T^*(W)$ will pick $\min(l_i, c_i)$ 
given the flat structure of the DAG. 
By construction, $\min(l_i, c_i) = l_i$ in our reduction.
Second, materializing $n_i$ helps in the first iteration 
only when it is loaded in the second iteration.
Thus, we can rewrite Eq (\ref{eqn:omp}) as 
\begin{equation} \label{eqn:ipMat}
\argmin\limits_{\mathbf{Y} \in \{0, 1\}^N} \sum\limits_{i = 1}^N Y_i l_i + \left( \sum\limits_{i = 1}^N Y_i l_i + (1 - Y_i)c_i \right)
\end{equation}
where $\mathbf{Y} = (Y_1, Y_2, \ldots, Y_N)$.
Substituting in our choices of $l_i$ and $c_i$ in terms of $p_i$ and $s_i$ in (\ref{eqn:ipMat}), 
we obtain $\argmin_{\mathbf{Y} \in \{0, 1\}^N} \sum_{i = 1}^N - Y_i p_i$.
Clearly, satisfying the storage constraint 
also satisfies the budget constraint in Knapsack by construction.
Thus, the optimal solution to OMP as constructed gives the optimal solution to Knapsack.
\end{proof}
}
\end{document}